\documentclass[11pt]{article}
\usepackage{amsmath,amssymb,url,color}
\newtheorem{theorem}{Theorem}
\newtheorem{definition}{Definition}
\newtheorem{corollary}{Corollary}
\newtheorem{proposition}{Proposition}
\newtheorem{lemma}{Lemma}
\newtheorem{conjecture}{Conjecture}
\newtheorem{ex}{Example}
\newtheorem{remark}{Remark}

\newcommand{\FF}{{\mathbb F}}
\newcommand{\F}{{\mathbb F}_{2^n}}
\newcommand{\FB}{{\mathbb F}_2}

\renewcommand{\S}{Section}
\newenvironment{proof}{\begin{trivlist}\item[]{\em Proof. }}%
{\samepage\hfill$\diamond$\end{trivlist}}
\usepackage{slashbox}

\title{Differential properties of functions $\boldsymbol{x\mapsto x^{2^t-1}}$
--~extended version\footnote{ of the paper which will appear in 
{\em IEEE Transactions on Information Theory}} --}
\author{C\'eline Blondeau, Anne Canteaut and Pascale Charpin
\footnote{SECRET project-team - INRIA Paris-Rocquencourt,
Domaine de Voluceau, B.P.~105,   
78153 Le Chesnay Cedex, France.\hfill
Email: {\tt 
celine.blondeau@inria.fr}, {\tt anne.canteaut@inria.fr}, {\tt pascale.charpin@inria.fr}}}
\begin{document}
\maketitle

\begin{abstract}
We provide an extensive study of the differential properties of  
the functions $x\mapsto x^{2^t-1}$ over $\F$, for $1 <t<n$. We notably show that the differential spectra of these functions are determined by the number of roots of the linear polynomials $x^{2^t}+bx^2+(b+1)x$ where $b$ varies in $\F$.We prove a strong relationship between the differential spectra of $x\mapsto x^{2^t-1}$ and $x\mapsto x^{2^{s}-1}$ for $s= n-t+1$. As a direct consequence, this result enlightens a connection between the differential properties of the cube function and of the inverse function. We also determine the complete differential spectra of $x \mapsto x^7$ by means of the value of some
 Kloosterman sums, and of $x \mapsto x^{2^t-1}$ for 
$t \in \{\lfloor n/2\rfloor, \lceil n/2\rceil+1, n-2\}$.
\end{abstract}

\bigskip
\noindent
{\bf Keywords.}
Differential cryptanalysis,  block cipher, S-box, power function, monomial, 
differential uniformity, APN function, permutation,
linear polynomial, Kloosterman sum, cyclic codes.

\section{Introduction}
Differential cryptanalysis is the first statistical attack proposed for
 breaking iterated block ciphers. Its publication~\cite{Biham_Shamir91}
 then gave rise to numerous works which investigate the security offered
 by different types of functions regarding differential attacks. 
This security is quantified by the so-called {\em differential uniformity}
 of the Substitution box used in the cipher~\cite{Nyberg_Knudsen92}.
 Most notably, finding
 appropriate S-boxes which guarantee that the cipher using them resist 
differential attacks is a major topic for the last twenty years, see e.g.~\cite{CCZ,Edel_etal06,BCP06,Bracken_etal08,Browning_Dillon_McQuistan_Wolfe10}. 

Power functions, {\em i.e.}, monomial functions, form a class of
suitable candidates since they usually have a lower implementation
cost in hardwa\-re. Also, their particular algebraic structure makes the determination of their differential properties easier. However, there are only a few power functions for which we can prove that they have a low differential uniformity. Up to equivalence, there are two large families of such functions: a subclass of the quadratic power functions (a.k.a. Gold functions) and a subclass of the so-called Kasami functions. Both of these families contain some permutations which are APN over \(\F\) for odd~\(n\) and differentially \(4\)-uniform for even~\(n\). The other known power functions with a low differential uniformity correspond to ``sporadic'' cases in the sense that the corresponding exponents vary with~\(n\)~\cite{Hernando_McGuire11} and they do not belong to a large class: they correspond to the exponents defined by Welch~\cite{Dobbertin98a,Canteaut_Charpin_Dobbertin00a}, by Niho~\cite{Dobbertin98b,Xiang_Hollmann01}, by Dobbertin~\cite{Dobbertin00}, by Bracken and Leander~\cite{BL}, and to the inverse function~\cite{Nyberg93}. It is worth noticing that some of these functions seem to have different structures because they do not share the same differential spectrum. For instance, for a quadratic power function or a Kasami function, the differential spectrum has only two values, {\em i.e.}, the number of occurrences of each differential belongs to \(\{0, \delta\}\) for some \(\delta\)~\cite{BCC}. The inverse function has a very different behavior since its differential spectrum has three values, namely \(0\), \(2\) and \(4\) and, for each input difference, there is exactly one  differential which is satisfied four times. 

However, when classifying all functions with a low differential uniformity, it can be noticed that the family of all power functions \(x \mapsto x^{2^t-1}\) over \(\F\), with \(1< t <n\), contains several functions with a low differential uniformity. Most notably, it includes the cube function and the inverse function, and also \(x \mapsto x^{2^{(n+1)/2}-1}\) for \(n\) odd, which is the inverse of a quadratic function. At a first glance, this family of exponents may be of very small relevance because the involved functions have distinct differential spectra. Then, they are expected to have distinct structures. For this reason, one of the motivations of our study was to determine whether some link could be established between the differential properties of the cube function and of the inverse function. Our work then answers positively to this question since it exhibits a general relationship between the differential spectra of \(x \mapsto x^{2^t-1}\) and \(x \mapsto x^{2^{n-t+1}-1}\) over \(\F\). We also determine the complete differential spectra of some other exponents in this family.

The rest of the paper is organized as follows. Section~\ref{sec-boo} recalls some definitions and some general properties of the differential spectrum of monomial functions. Section~\ref{sec-ds} then focuses on the differential spectra of the monomials \(x \mapsto x^{2^t-1}\). First, the differential spectrum of any such function is shown to be determined by the number of roots of a family of linear polynomials.  Then, we exhibit a symmetry property for the exponents in this family: it is proved that the differential spectra of \(x \mapsto x^{2^t-1}\) and \(x \mapsto x^{2^{n-t+1}-1}\) over \(\F\) are closely related. In Section~\ref{sec-spec},
we  determine the whole differential spectrum of \(x \mapsto x^7\) over~\(\F\).
It is expressed by means of some Kloosterman sums, and explicitly computed using the work 
of Carlitz~\cite{Carlitz69}. We then derive the differential spectra of
 \(x \mapsto x^{2^{n-2}-1}\). Further, we study the functions
 \(x \mapsto x^{2^{\lfloor n /2 \rfloor}-1}\) and 
\(x \mapsto x^{2^{\lceil n /2 \rceil +1}-1}\). 
We finally end up with some conclusions.

\section{Preliminaries}
\label{sec-boo}
\subsection{Functions over $\boldsymbol{\F}$ and their derivatives}

 Any function
$F$ from \(\F\) into \(\F\) can be expressed as a univariate
polynomial in \(\F[X]\). The {\em univariate degree} of
the polynomial $F$ is, as usual, the maximal integer value
of its exponents. The {\em algebraic degree} of $F$ is the maximal
 Hamming weight of its exponents:
\[
{\rm deg}~\left(\sum_{i=0}^{2^n-1}\lambda_iX^i\right)=
\max~\{wt(i)~|~\lambda_i\neq 0~\},
\]
where $\lambda_i\in\F$ and the {\em Hamming weight} is calculated as 
follows :
\[
i=\sum_{j=0}^{n-1}i_j 2^j~\mbox{with}~i_j\in\{0,1\},~
wt(i)=\sum_{j=0}^{n-1}i_j.
\]
In this paper, we will identify a polynomial of $\F[X]$ with the
corresponding function over $\F$. For instance, $F \in \F[X]$ is called 
a {\em permutation polynomial} of $\F$ if the function
 $ x \mapsto  F(x)$ is a permutation of $\F$.

 Boolean functions are also involved in this paper and are generally
of the form
\[
x\in\F~~\mapsto ~~Tr(P(x))\in\FB, 
\]
where $P$ is any function from \(\F\) into \(\F\) and
where $Tr$ denotes the {\em absolute trace} on $\F$, {\em i.e.,}
\[
Tr(\beta)=\beta+\beta^2+\dots+\beta^{2^{n-1}}, ~\beta\in\F.
\]

 In the whole paper, $\#E$ denotes the cardinality of any set \(E\).

The resistance of a cipher to differential attacks and to its variants
is quantified by some properties of the {\em derivatives} of its
S(ubstitution)-box, in the sense of the following definition.  It is
worth noticing that this definition is general: it deals with functions
from \(\FF_{2^n}\) into \(\FF_{2^m}\) for any \(m \geq 1\). 

\begin{definition}\label{de1}
Let \(F\) be a function from \(\F\) into \(\FF_{2^m}\). For
any \(a \in \F\), the {\em derivative of \(F\) with respect
  to \(a\)} is the function \(D_a F\) from \(\F\) into \(\FF_{2^m}\) defined by 
\[D_aF(x) = F(x+a) + F(x), \;\; \forall x \in \F.\] 
\end{definition}
The resistance to differential cryptanalysis is  related to the
following quantities, introduced by Nyberg and 
Knudsen~\cite{Nyberg_Knudsen92,Nyberg93}.
\begin{definition}\label{de2}
Let \(F\) be a function from \(\F\) into \(\F\). For
any \(a\) and \(b\) in \(\F\), we denote
\[\delta(a,b) = \# \{x \in \F, \; D_aF(x) = b\}.\]
 Then, the {\em differential uniformity} of $F$ is
\[\delta(F) = \max_{a \neq 0, \; b \in \F} \delta(a,b).\]
Those functions for which $\delta(F)=2$ are said to be {\em almost
perfect nonlinear (APN)}.
\end{definition}

\subsection{Differential spectrum of power functions}\label{sec-difpo}

In this paper, we focus on the case where  the S-box is a power function, 
{\em i.e.,} a
 monomial function on $\F$. In other words, $F(x)=x^d$ over \(\F\), which
 will be denoted by \(F_d\) when necessary.  
In the case of such a power function, the differential properties
 can be analyzed more easily since, for any nonzero \(a \in \F\), the
equation $(x+a)^d + x^d = b$ can be written
\[
a^d\left(\left(\frac{x}{a}+1\right)^d + \left(\frac{x}{a}\right)^d\right)=b,
\]
implying that 
\[
\delta(a,b)=\delta(1,b/a^d)~\mbox{for all \(a \neq 0\)}.
\]
 Then, when \(F: x \mapsto x^d\) is a monomial function, the differential characteristics of \(F\) are determined by the values 
\(\delta(1, b)\), \(b \in \F\). From now on, this quantity  $\delta(1,b)$
 is denoted by $\delta(b)$. 
Since
\[
\#\{b\in\F|\delta(a,b)=i\}=\#\{b\in\F|\delta(b)=i\}~~~
\mbox{ $\forall a\neq 0$},\]
the {\em   differential spectrum} of $F$ can be defined as follows.
\begin{definition}\label{de:specdiff}
Let $F(x)=x^d$ be a  power function on $\F$.  We denote by
$\omega_i$ the number of output differences~\(b\) that occur $i$~times:
\begin{equation}\label{deomega}
\omega_i=\#\{b\in\F|\delta(b)=i\}.
\end{equation}
The {\it differential spectrum} of~\(F_d\) is the set of $\omega_{i}$:
\begin{center}
\(\mathbb{S}=\{\omega_{0},\omega_{2},...,\omega_{\delta(F)}\}\).
\end{center}
\end{definition}
With same notation, we have the following equalities. They are 
well-known but we indicate the proof for clarity. 

\begin{lemma}\label{ident}
\[
\sum_{k=0}^{2^n} \omega_k = 2^n ~~\mbox{and}~~\sum_{k=2}^{2^n} 
(k\times  \omega_k) = 2^n,
\]
where $\omega_i=0$ for $i$ odd.
\end{lemma}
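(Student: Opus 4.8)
The plan is to read off both identities directly from the combinatorics of the derivative map $x \mapsto D_1F(x) = (x+1)^d + x^d$, viewed as a function from $\F$ to $\F$. Recall that by definition $\delta(b)$ is the number of preimages of $b$ under this map, and $\omega_i = \#\{b \in \F \mid \delta(b) = i\}$ records how many output values $b$ have a fiber of size exactly $i$. So the family $\{\omega_i\}$ is nothing but the fiber-size distribution of this map, and both equalities are instances of standard counting for a map on a set of size $2^n$.

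For the first identity, I would note that each $b \in \F$ contributes to exactly one term, namely to $\omega_{\delta(b)}$. Hence $\sum_{k=0}^{2^n} \omega_k$ merely counts all elements of $\F$ (including those $b$ with $\delta(b)=0$, which are tallied by $\omega_0$), so it equals $\#\F = 2^n$. No computation beyond this partition argument is needed. For the second identity, I would rewrite $\sum_{k} k\,\omega_k = \sum_{b \in \F} \delta(b)$, since grouping the output values $b$ according to their common fiber-size $k$ and weighting by $k$ reconstructs the sum of $\delta(b)$ over all $b$. Then $\sum_{b}\delta(b)$ counts the pairs $(x,b)$ with $D_1F(x)=b$; as each $x \in \F$ determines exactly one such $b$, this total is $\#\F = 2^n$.

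It remains to justify that $\omega_i = 0$ for odd $i$, which is what permits the second sum to begin at $k=2$. Here I would invoke the pairing of solutions: if $x_0$ satisfies $(x_0+1)^d + x_0^d = b$, then $x_0+1$ satisfies the same equation, because $D_1F(x_0+1) = F(x_0+2)+F(x_0+1) = F(x_0)+F(x_0+1) = D_1F(x_0) = b$ in characteristic $2$. Thus the translation $x \mapsto x+1$ maps the solution set of $D_1F(x)=b$ to itself, and it is an involution; since $x_0 \neq x_0+1$ (as $1 \neq 0$), it has no fixed points. Consequently the solution set splits into pairs, so $\delta(b)$ is even for every $b$, whence $\omega_i = 0$ for all odd $i$.

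The argument is entirely elementary, so there is no genuine obstacle; the one point deserving care is the fixed-point-free property of the involution, which rests precisely on the shift being by the \emph{nonzero} element $1$. Everything else follows from double counting the graph $\{(x,b) \mid D_1F(x)=b\}$ of the derivative, once along the fibers (yielding $\sum_k k\,\omega_k$) and once along the domain (yielding $\#\F$).
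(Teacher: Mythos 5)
Your proof is correct and follows essentially the same route as the paper: the first identity is the partition of $\F$ by the value of $\delta(b)$, and the second is the double count of pairs $(x,b)$ with $x^d+(x+1)^d=b$, each $x$ contributing exactly once. You additionally justify the assertion $\omega_i=0$ for odd $i$ via the fixed-point-free involution $x\mapsto x+1$, which the paper states without proof; that argument is standard and correct.
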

\begin{proof}
The first equality is obviously deduced from (\ref{deomega}). And,
for $k>0$, $k\times  \omega_k$ equals the number of $x\in\F$ such that
\[
x^d+(x+1)^d=b  ~~\mbox{and}~~\delta(b)=k
\]
for some $b$. Thus, any $x$ is counted in the second sum.
\end{proof}

\begin{remark}
The differential spectrum of the power function \(F(x)=x^d\) over \(\F\) is also related to the weight enumerator of the cyclic code of length~\((2^n-1)\) with defining set \(\{1, s\}\)~\cite{CCZ}. In particular, the number of codewords with Hamming weight~\(3\) and~\(4\) in this cyclic code can be derived from the differential spectrum of~\(F\) (see e.g. Corollary~1 in~\cite{BCC}).
\end{remark}

A power function $F$ is said to be {\em differentially $2$-valued}
 if and only if  for any $b\in\F$, we have
$\delta(b)\in\{0,\kappa\}$ (and then  only two $\omega_i$
in $\mathbb{S}$ do not vanish). It is known that $\kappa=2^r$ for some $r>1$
(see an extensive study in \cite[Section 5]{BCC}).
Note that APN functions
are differentially $2$-valued with $\kappa=2$.

There are some basic transformations which preserve $\mathbb{S}$.

\begin{lemma}
Let    $F_d(x)=x^d$   and    $F_e(x)=x^e$ over  $\F$. If there exists $k$
 such that  $ e=2^kd \bmod{2^n-1}$ or
if $ed = 1 \bmod{2^n-1}$, then $F_e$ has the same differential spectrum
as $F_d$.
\end{lemma}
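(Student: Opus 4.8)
The plan is to handle the two transformations separately. For the first, suppose $e = 2^k d \bmod (2^n-1)$. Since $x^{2^n-1}=1$ for every nonzero $x$ and $0^e=0=(0^d)^{2^k}$, we have $F_e(x)=x^{2^kd}=(x^d)^{2^k}=\phi(F_d(x))$ for all $x\in\F$, where $\phi\colon y\mapsto y^{2^k}$ is the $k$-th iterate of the Frobenius map. The key observation is that $\phi$ is an $\FB$-linear bijection of $\F$, so it commutes with derivation: $D_1F_e(x)=\phi(F_d(x+1))+\phi(F_d(x))=\phi(D_1F_d(x))$ by additivity of $\phi$. Hence $D_1F_e(x)=b$ if and only if $D_1F_d(x)=\phi^{-1}(b)$, which gives $\delta_{F_e}(b)=\delta_{F_d}(\phi^{-1}(b))$ for every $b\in\F$. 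As $\phi^{-1}$ is a bijection of $\F$, replacing $b$ by $\phi^{-1}(b)$ in~(\ref{deomega}) shows that $F_e$ and $F_d$ have the same $\omega_i$, hence the same differential spectrum.

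For the second transformation, suppose $ed=1\bmod(2^n-1)$. Then $d$ is invertible modulo $2^n-1$, so $\gcd(d,2^n-1)=1$, the function $F_d$ is a permutation of $\F$, and $F_e=F_d^{-1}$. The crucial step is the transpose relation $\delta_{F_e}(a,b)=\delta_{F_d}(b,a)$ for $b\neq 0$. To establish it I would substitute $y=F_e(x)$ (a bijection) in the equation $F_e(x+a)+F_e(x)=b$: writing $x=F_d(y)$ and setting $z=F_e(F_d(y)+a)$, the equation reads $z+y=b$ together with $F_d(z)=F_d(y)+a$, and eliminating $z=y+b$ turns this into $D_bF_d(y)=a$. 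Thus $\delta_{F_e}(a,b)=\#\{y\in\F : D_bF_d(y)=a\}=\delta_{F_d}(b,a)$ whenever $b\neq 0$.

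It then remains to pass from this identity to the spectrum. Specializing to $a=1$ gives $\delta_{F_e}(c)=\delta_{F_d}(c,1)$ for every $c\neq 0$, and the scaling property $\delta(a,b)=\delta(1,b/a^d)$ of the power function $F_d$ yields $\delta_{F_d}(c,1)=\delta_{F_d}(1/c^d)$. Therefore $\delta_{F_e}(c)=\delta_{F_d}(1/c^d)$ for all $c\neq 0$. Since $\gcd(d,2^n-1)=1$, the map $c\mapsto 1/c^d$ is a bijection of $\F^*$; and because $F_d$ is a permutation, $\delta_{F_d}(0)=\delta_{F_e}(0)=0$, so the value $0$ plays no role for $i\neq 0$. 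Counting, for each $i\neq 0$ we get $\omega_i^{(e)}=\#\{c\neq 0 : \delta_{F_e}(c)=i\}=\#\{b\neq 0 : \delta_{F_d}(b)=i\}=\omega_i^{(d)}$ after the substitution $b=1/c^d$, and the remaining $\omega_0$ agrees by Lemma~\ref{ident}. The main obstacle is precisely this inverse case: establishing the transpose identity correctly and handling the excluded values $a=0$ and $b=0$, which forces one to exploit that the derivatives of a permutation never vanish for a nonzero difference.
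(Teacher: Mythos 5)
Your proof is correct. Note that the paper states this lemma without any proof, presenting it as a basic known fact, so there is nothing to compare against; both of your arguments (Frobenius commuting with the derivative for $e=2^kd$, and the substitution $y=F_e(x)$ giving $\delta_{F_e}(a,b)=\delta_{F_d}(b,a)$ combined with the scaling identity $\delta(a,b)=\delta(1,b/a^d)$ for the inverse case) are the standard ones and are carried out soundly, including the treatment of $b=0$ via the permutation property.
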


\subsection{General properties on the differential spectrum}

In this section, $F_d(x)=x^d$ and notation is as in \S~\ref{sec-difpo}. 
Studying \(\delta(b) \) for special values of $b$ may give us as least
a lower bound on $\delta(F_d)$. So we first focus on~\(\delta(0)\). 
\begin{lemma}\label{delt0}
Let $d$ be such that $\gcd(d,2^n-1)=s$. Then  \(F_d:x \mapsto x^d\)
is such that $\delta(0)=s-1$.
In particular  $s=1$  if and only if $\delta(0)=0$.
\end{lemma}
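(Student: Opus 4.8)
The plan is to translate $\delta(0)$ into a counting problem inside the cyclic group $\F^*$. By definition, $\delta(0)=\delta(1,0)$ is the number of $x\in\F$ satisfying $D_1F_d(x)=(x+1)^d+x^d=0$, that is, $(x+1)^d=x^d$. First I would discard the two special points: setting $x=0$ gives $1=0$ and setting $x=1$ gives $0=1$, so neither is a solution. Consequently every remaining candidate has both $x\neq 0$ and $x+1\neq 0$, so we may restrict attention to $x\in\F\setminus\{0,1\}$, where the equation can be divided by $x^d$.

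Next I would rewrite the equation as $\left(\frac{x+1}{x}\right)^{d}=1$ and introduce the substitution $y=\frac{x+1}{x}=1+x^{-1}$. The key observation is that $x\mapsto y$ is a bijection from $\F\setminus\{0,1\}$ onto itself: inverting gives $x=(y+1)^{-1}$ (recall $-1=1$ in characteristic $2$), while $y=1$ would force $x^{-1}=0$, which is impossible, and $y=0$ would force $x=1$, which is excluded. Indeed, as $x$ runs through $\F\setminus\{0,1\}$, the value $x^{-1}$ runs through $\F^*\setminus\{1\}$ and hence $y=1+x^{-1}$ runs through $\F\setminus\{0,1\}$. Therefore $\delta(0)$ equals the number of $y\in\F\setminus\{0,1\}$ with $y^{d}=1$.

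Finally I would count the roots of $y^{d}=1$ in $\F^*$. Since $\F^*$ is cyclic of order $2^n-1$, this equation has exactly $\gcd(d,2^n-1)=s$ solutions. Among them, $y=0$ never occurs because $d>0$, and $y=1$ always occurs, so precisely $s-1$ of the $s$ roots lie in $\F\setminus\{0,1\}$. This gives $\delta(0)=s-1$, and in particular $\delta(0)=0$ if and only if $s=1$. The only point requiring care is the bookkeeping of the two excluded values $y=0$ and $y=1$ under the substitution; once $x\mapsto y$ is recognized as a bijection onto $\F\setminus\{0,1\}$, the count is immediate, so I do not anticipate any serious obstacle.
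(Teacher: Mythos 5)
Your proof is correct and follows essentially the same route as the paper: both rewrite the equation as $\left(\frac{x+1}{x}\right)^d=1$, use that $x\mapsto (x+1)/x$ is a bijection of $\F\setminus\{0,1\}$ onto itself, and count the $s=\gcd(d,2^n-1)$ roots of $y^d=1$ in the cyclic group $\F^*$, discarding $y=1$. Your version merely spells out the bookkeeping of the excluded points a bit more explicitly.
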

\begin{proof}
Note that  $s=1$ if and only if $F_d$ is a permutation.
Obviously, $x$ is a solution of $x^d+(x+1)^d=0$ if and only
if
\[
\left(\frac{x+1}{x}\right)^d=1~\mbox{that is $x+1=xz$ with $z^d=1$,}
\]
since $x\mapsto (x+1)/x$ is a permutation over $\F\setminus\{0,1\}$.
As there are exactly $s-1$ such nonzero $z$, the proof is completed.
\end{proof}
There is an immediate consequence of Lemma \ref{delt0} for specific
values of $d$. 
\begin{proposition}\label{delprime}
Let  $d\geq 3$ such that $d$ divides $2^n-1$. Then
$\delta(F_d)=\delta(0)=d-1$.

In particular, if $d=2^t-1$ with $\gcd(t,n)=t$ then 
$\delta(F_d)=\delta(0)=2^t-2$. 
\end{proposition}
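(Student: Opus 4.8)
The plan is to sandwich $\delta(F_d)$ between matching lower and upper bounds, the lower one being an immediate corollary of Lemma~\ref{delt0} and the upper one a simple degree count. First I would note that since $d$ divides $2^n-1$ we have $\gcd(d,2^n-1)=d$, so Lemma~\ref{delt0} applies with $s=d$ and yields $\delta(0)=d-1$. Because $\delta(0)=\delta(1,0)$, this already gives the lower bound $\delta(F_d)\geq \delta(0)=d-1$.

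For the matching upper bound I would recast $\delta(b)$ as a root count. By definition $\delta(b)$ is the number of $x\in\F$ with $(x+1)^d+x^d=b$, that is, the number of roots in $\F$ of the polynomial $P_b(X)=(X+1)^d+X^d+b$. The decisive observation is that $d$ is \emph{odd}, since it divides the odd integer $2^n-1$. Expanding $(X+1)^d$ over $\FB$ by the binomial theorem, the top term $X^d$ cancels against $X^d$, while the coefficient of $X^{d-1}$ equals $\binom{d}{d-1}=d\equiv 1 \pmod 2$. Hence $P_b$ is a \emph{nonzero} polynomial of degree exactly $d-1$, so it has at most $d-1$ roots in $\F$. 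This gives $\delta(b)\leq d-1$ for every $b$, whence $\delta(F_d)\leq d-1$. Combining the two bounds yields $\delta(F_d)=\delta(0)=d-1$.

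For the ``in particular'' statement, I would invoke the classical divisibility $2^t-1\mid 2^n-1$, which holds exactly when $t\mid n$, i.e.\ when $\gcd(t,n)=t$. The constraint $1<t<n$ forces $t\geq 2$, so $d=2^t-1\geq 3$ and the hypotheses of the first part are met; applying it gives $\delta(F_d)=\delta(0)=d-1=2^t-2$.

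The only point requiring genuine care --- and really the crux of the argument --- is verifying that $P_b$ neither vanishes identically nor drops below degree $d-1$, since otherwise the root-counting step would be vacuous or would give a weaker bound. This is settled entirely by the parity of $d$: its oddness is precisely what guarantees that the $X^{d-1}$ coefficient survives modulo $2$. Everything else reduces to Lemma~\ref{delt0} and the standard bound on the number of roots of a univariate polynomial over a field.
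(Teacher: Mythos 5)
Your proof is correct and follows essentially the same route as the paper: Lemma~\ref{delt0} for the lower bound via $\delta(0)$, and the degree of $(X+1)^d+X^d+b$ for the upper bound. The only difference is that you explicitly justify why the degree is exactly $d-1$ (the coefficient $\binom{d}{d-1}=d$ is odd since $d\mid 2^n-1$), a point the paper simply asserts.
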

\begin{proof}
Since  $\gcd(d,2^n-1)=d$,  $\delta(0)= d-1$ from Lemma \ref{delt0}. 
But the polynomial
 $x^d+(x+1)^d+b$ has degree $d-1$ for any $b$, so that $\delta(b)\leq d-1$.
We conclude that $\delta(F_d)=d-1$.

Now, let $d=2^t-1$ with $\gcd(t,n)=t$. Then $\gcd(d,2^n-1)=2^t-1$ so
that $\delta(0)=2^t-2$. As previously we conclude that $\delta(F_d)=2^t-2$.
\end{proof}
\begin{ex}\label{ex1}
If $d=3$ then $\delta(F_d)=\delta(0)=2$ for any even $n$.\\
If $d=5$ then $\delta(F_d)=\delta(0)=4$ for   $n=4k$  for all
$k>1$.\\
If $d=7$ then $\delta(F_d)=\delta(0)=6$ for   $n=3k$  for all
$k>1$.
\end{ex}
The previous remarks combined with our simulation results point out that \(\delta(0)\) and \(\delta(1)\) play a very particular role in the differential spectra of power functions. This leads us to investigate the properties of the differential spectrum restricted to the values \(\delta(b)\) with \(b \not \in \FB\).
\begin{definition}\label{loc-apn}
Let $F$ be a  power function on $\F$. We say that $F$  has the 
{\em same restricted
 differential spectrum as an APN function} when
\[
\delta(b)\leq 2 \mbox{ for all } b \in \F \setminus \FB.
\]
For the sake of simplicity, we will say that $F$ is {\em locally-APN}.
\end{definition}

This definition obviously generalizes the APN property. For instance, 
the {\em inverse} function over $\F$ is locally-APN for
any $n$, while it is APN for odd \(n\) only. Another infinite class of 
locally-APN functions is exhibited in Section~\ref{sec-n/2}. 

\section{The differential spectrum of 
$\boldsymbol{x\mapsto x^{2^t-1}}$}\label{sec-ds}
From now on, we investigate the differential spectra of the following
 specific monomial functions 
$$G_t: x \mapsto x^{2^t-1}, ~2\leq t\leq n-1,~\mbox{~over~\(\F\)}\;.$$
Note that such a function has algebraic degree $t$.

\subsection{Link with linear polynomials}
In this section, we first give some general properties.

\begin{theorem}\label{th: 2^t-1}
Let $G_t(x)=x^{2^t-1}$ over $\F$ with $2\leq t\leq n-1$. Then, 
\begin{equation}\label{eq:u1}
 G_t(x+1) + G_t(x) + 1 = \frac{(x^{2^{t-1}}+x)^2}{x^2+x}.
\end{equation}
Consequently, for any \(b \in \F \setminus \{1\}\), \(\delta(b)\) is the
 number of roots in \(\F \setminus \FB\) of the linear polynomial
\[P_b (x) = x^{2^t} + b x^2 + (b+1) x\;.\]
And  we have
\begin{eqnarray*}
\delta(0) & = & 2^{\gcd(t,n)}-2 \\
\delta(1) & = & 2^{\gcd(t-1,n)} \\
\mbox{for any}~~ b \in \F \setminus \FB, \; \; \delta(b) & = & 2^r-2
\end{eqnarray*}
for some $r$ with $1\leq r \leq \min(t,n-t+1)$.
\end{theorem}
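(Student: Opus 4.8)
The plan is to prove the four assertions in the order stated, letting the identity~(\ref{eq:u1}) carry most of the weight.

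First I would establish~(\ref{eq:u1}) by a direct computation. For $x \in \F \setminus \FB$ write $x^{2^t-1} = x^{2^t}/x$, so that $G_t(x+1)+G_t(x) = \frac{x^{2^t}+1}{x+1} + \frac{x^{2^t}}{x}$; putting this over the common denominator $x^2+x$ and simplifying in characteristic~$2$ collapses the numerator to $x^{2^t}+x$, and adding $1$ turns it into $x^{2^t}+x^2 = (x^{2^{t-1}}+x)^2$. Since $\delta(b) = \#\{x : D_1G_t(x)=b\}$, and since the two points of $\FB$ both satisfy $D_1G_t(x)=1$, every solution for $b\neq 1$ lies outside $\FB$. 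For such $x$ I would set the right-hand side of~(\ref{eq:u1}) equal to $b+1$ and clear the denominator $x^2+x$, obtaining $(x^{2^{t-1}}+x)^2=(b+1)(x^2+x)$, which rearranges to $P_b(x)=0$. Hence, for $b\neq 1$, $\delta(b)$ is exactly the number of roots of $P_b$ in $\F\setminus\FB$.

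The value $\delta(0)$ is then immediate from Lemma~\ref{delt0} together with $\gcd(2^t-1,2^n-1)=2^{\gcd(t,n)}-1$. For $\delta(1)$ the root count of $P_b$ does not apply (it assumed $b\neq 1$), so I would compute it by hand from~(\ref{eq:u1}): for $x\notin\FB$ one has $D_1G_t(x)=1$ iff $(x^{2^{t-1}}+x)^2=0$, i.e. $x^{2^{t-1}}=x$, i.e. $x\in\mathbb{F}_{2^{\gcd(t-1,n)}}$, giving $2^{\gcd(t-1,n)}-2$ such $x$ outside $\FB$; adding the two points of $\FB$, which also satisfy $D_1G_t=1$, yields $\delta(1)=2^{\gcd(t-1,n)}$.

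The decisive structural observation for the last line is that $P_b$ is an $\FB$-linearized polynomial, so its root set $V_b=\ker P_b$ is an $\FB$-subspace of $\F$; as $P_b(0)=0$ and $P_b(1)=1+b+(b+1)=0$, we have $\FB\subseteq V_b$. Thus $\#V_b=2^r$ with $r\geq 1$ and $\delta(b)=\#(V_b\setminus\FB)=2^r-2$. The bound $r\leq t$ is read off from the leading term $x^{2^t}$ of $P_b$, which has $\FB$-degree $t<n$, so $\dim_{\FB}\ker P_b\leq t$. The main obstacle is the companion bound $r\leq n-t+1$, which is invisible from $P_b$ itself. I would handle it by passing to the adjoint linearized polynomial $\overline{P_b}$ with respect to the nondegenerate form $(x,y)\mapsto Tr(xy)$; adjunction preserves rank, so $\dim\ker\overline{P_b}=\dim\ker P_b=r$. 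Using that the adjoint of $x\mapsto a\,x^{2^i}$ is $y\mapsto a^{2^{n-i}}y^{2^{n-i}}$, I would obtain $\overline{P_b}(y)=y^{2^{n-t}}+b^{2^{n-1}}y^{2^{n-1}}+(b+1)y$, and then square to clear the awkward exponents: $\overline{P_b}(y)^2 = y^{2^{n-t+1}}+(b+1)^2y^2+by$, a genuine linearized polynomial of $\FB$-degree $n-t+1$ (note $2\leq n-t+1\leq n-1$). Its kernel coincides with $\ker\overline{P_b}$, whence $r\leq n-t+1$. Combining the two upper bounds gives $r\leq\min(t,n-t+1)$ and completes the proof; it is worth noting that this adjoint identity is precisely the mechanism underlying the $t\leftrightarrow n-t+1$ symmetry announced in the introduction.
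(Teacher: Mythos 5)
Your proposal is correct, and most of it runs parallel to the paper's own argument: the identity~(\ref{eq:u1}) is verified by the same computation (the paper multiplies through by $x^2+x$ rather than putting over a common denominator), the reduction of $\delta(b)$ for $b\neq 1$ to the roots of $P_b$ outside $\FB$ is identical, and the values of $\delta(0)$ and $\delta(1)$ are obtained the same way (the paper reads $\delta(0)$ off $P_0(x)=x^{2^t}+x$ where you invoke Lemma~\ref{delt0} and $\gcd(2^t-1,2^n-1)=2^{\gcd(t,n)}-1$; these are equivalent). The one step where you genuinely diverge is the bound $r\leq n-t+1$. The paper gets it by the more elementary device of raising $P_b$ to the $2^{n-t}$-th power, which turns $x^{2^n}$ into $x$ and yields a linearized polynomial $b'x^{2^{n-t+1}}+(b'+1)x^{2^{n-t}}+x$ of $\FB$-degree $n-t+1$ with the same kernel. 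You instead pass to the adjoint with respect to $Tr(xy)$ and use that adjunction preserves rank; the squared adjoint you compute, $y^{2^{n-t+1}}+(b+1)^2y^2+by$, is exactly the polynomial $P^*_{t,b}$ that the paper introduces later in Lemma~\ref{le:sym} and Theorem~\ref{thadj} as the engine of the $t\leftrightarrow n-t+1$ symmetry. So your route is valid and, as you note, anticipates the symmetry result, but it imports the rank-preservation fact (which the paper proves separately) where a one-line Frobenius trick suffices for the bound itself.
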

\begin{proof}
To prove (\ref{eq:u1}) we simply compute
\[
(x+x^2)(1+x^{2^t-1}+(1+x)^{2^t-1})=x+x^2+x^{2^t}+x^{2^t+1}+x(1+x)^{2^t}
= x^2+x^{2^t}.
\]

Thus, \(\delta(1)\) is directly deduced and it corresponds to the number
 of roots of \(P_1(x) = (x^{2^{t-1}} + x)^2\).
Let \(b \in \F \setminus \{1\}\). Then \(x \in \F \setminus \FB\) is a
solution of 
\[
(x+1)^d + x^d = b,~d=2^{t}-1,
\]
if and only if it is a solution of 
\[(x^{2^{t-1}} + x)^2 = (b+1) x (x+1),\]
or equivalently if it is a root of the linear polynomial
\[P_b(x) = x^{2^t} + bx^2 + (b+1)x.\]
The values \(x=0\) and \(x=1\) are counted in $\delta(1)$ (as
solutions of~\((x+1)^d+x^d=1\)), while $P_b(0) =P_b(1) =0$ for any $b$.
So, we get that, if \(b \neq 1\), the number of roots of~\(P_b\) in~\(\F\) is equal to~\((\delta(b)+2)\).
Because the set of all roots of a linear polynomial is a
linear space, we deduce that
\[
\forall b \in \F \setminus \{1\}, \; \; \delta(b) =
2^r-2 \mbox{ with } r \leq t.
\]
Moreover, by raising \(P_b\) to the \(2^{n-t}\)-th power, we get that any
root of~\(P_b\) is also a root of 
\[b^\prime x^{2^{n-t+1}} + (b^\prime + 1) x^{2^{n-t}} + x\]
with \(b^\prime = b^{2^{n-t}}\). This then implies that \(\delta(b) =
2^r -2 \) with \(r \leq n-t+1\).
Finally, for \(b=0\), \(P_0(x)=x^{2^t}+x\), implying that \(\delta(0)= 2^{\gcd(t,n)}-2\), which naturally corresponds to Lemma~\ref{delt0}.
\end{proof}
\begin{remark}\label{reminv}
As a first easy corollary, we recover the following well-known form of the 
differential spectrum of the inverse function, 
\(G_{n-1}: x \mapsto x^{2^{n-1}-1}\) over \(\F\). Actually, the previous 
theorem applied to \(t=n-1\) leads to \(\delta(0)=0\) and \(\delta(1)=2\)
 when \(n\) is odd and \(\delta(1)=4\) when \(n\) is even. For all 
\(b \not \in \FB\), \(\delta(b) \in \{0, 2\}\). Therefore, we have
\begin{itemize}
\item if \(n\) is odd, \(\delta(G_{n-1})=2\) and \(\omega_0=2^{n-1}\), \(\omega_2=2^{n-1}\);
\item if \(n\) is even, \(\delta(G_{n-1})=4\) and \(\omega_0=2^{n-1}+1\), 
\(\omega_2=2^{n-1}-2, ~\omega_4=1\).
\end{itemize}
Clearly $G_{n-1}$ is locally-APN for any $n$, as we previously noticed
(see Definition~\ref{loc-apn}).
\end{remark}

The following corollary is a direct consequence of Theorem~\ref{th: 2^t-1}.
\begin{corollary}\label{coro:delta}
Let $G_t(x)=x^{2^t-1}$ over $\F$ with $2\leq t\leq n-1$. Then, 
its differential uniformity is of the form either \(2^r-2\) or \(2^r\) for some \(2 \leq r \leq n\). Moreover, if \(\delta(G_t)=2^r\) for some \(r >1\), then this value appears only once in the differential spectrum, {\em i.e.}, \(\omega_{2^r}=1\), and it corresponds to the value of \(\delta(1)\), implying \(\delta(G_t)=2^{\gcd(t-1,n)}\).
\end{corollary}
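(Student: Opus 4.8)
The plan is to read the corollary directly off the three value formulas of Theorem~\ref{th: 2^t-1}, combined with the reduction of the differential uniformity of a power function to the quantities $\delta(b)$. First I would recall, as explained in \S~\ref{sec-difpo}, that for a monomial one has $\delta(G_t)=\max_{b\in\F}\delta(b)$, so it suffices to control the maximum of the values listed in Theorem~\ref{th: 2^t-1}. That theorem says $\delta(b)$ takes one of three shapes: $\delta(0)=2^{\gcd(t,n)}-2$, $\delta(1)=2^{\gcd(t-1,n)}$, and $\delta(b)=2^r-2$ with $1\leq r\leq\min(t,n-t+1)$ for every $b\notin\FB$. The first observation is thus a clean dichotomy: the value at $b=1$ is a pure power of two, whereas every other value has the form $2^r-2$. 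Since the maximum is one of these, $\delta(G_t)$ is necessarily of the form $2^r-2$ or $2^r$. The bound $2\leq r\leq n$ follows because every exponent that occurs ($\gcd(t,n)$, $\gcd(t-1,n)$, and the $r\leq\min(t,n-t+1)$) lies in $[1,n]$, while $\delta(G_t)\geq 2$ (all $\delta(b)$ are even by Lemma~\ref{ident}) forces the relevant $r$ to be at least $2$, using when $\delta(G_t)=2$ the representation $2=2^2-2$.

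For the ``moreover'' part the key step is an elementary arithmetic separation of the two families. I would note that $2^r-2=2(2^{r-1}-1)$ carries an odd factor $2^{r-1}-1$ which exceeds $1$ as soon as $r\geq 3$; hence a genuine power of two $2^r$ with $r>1$, that is, a value at least $4$, can never be written as $2^{r'}-2$. Consequently, if $\delta(G_t)=2^r$ with $r>1$, this maximum cannot be realised by $\delta(0)$ nor by any $\delta(b)$ with $b\notin\FB$, since all of those have the form $2^{r'}-2$; the only remaining candidate is $\delta(1)=2^{\gcd(t-1,n)}$. This simultaneously identifies $\delta(G_t)$ with $\delta(1)$, yields the explicit value $2^{\gcd(t-1,n)}$, and shows that $b=1$ is the unique output difference attaining it, i.e. $\omega_{2^r}=1$.

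The only genuinely delicate point is the non-uniqueness of the representation at the small value $\delta=2$, where $2=2^1$ and $2=2^2-2$ coincide; this is exactly why the uniqueness conclusion ($\omega_{2^r}=1$ together with the matching with $\delta(1)$) is asserted only under $r>1$. I would phrase the dichotomy so that powers of two that are at least $4$ are unambiguously attributed to $b=1$, while the borderline value $2$ is simply absorbed into the $2^r-2$ form. Everything else is a direct substitution into Theorem~\ref{th: 2^t-1}, so I expect no serious computational obstacle; the main care is bookkeeping the ranges of $r$ and the single overlap at the value $2$.
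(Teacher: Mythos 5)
Your proposal is correct and is essentially the paper's own argument: the paper states Corollary~\ref{coro:delta} as a direct consequence of Theorem~\ref{th: 2^t-1} without further proof, and your write-up simply makes explicit the intended reading, namely that $\delta(1)$ is the only value that is a pure power of two while $\delta(0)$ and all $\delta(b)$ with $b\notin\FB$ have the form $2^{r}-2$, together with the parity observation that no power of two that is at least $4$ can equal $2^{r'}-2$. Your care about the overlap at the value $2$ (where $2=2^{1}=2^{2}-2$) is exactly the reason the uniqueness claim is restricted to $r>1$, so nothing is missing.
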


\subsection{Equivalent formulations}
In Theorem \ref{th: 2^t-1}, we exhibited some tools for the computation
of the diffe\-rential spectra of functions $x\mapsto x^{2^t-1}$.
The problem boils down {\em to the determination of the roots of a linear
polynomial} whose coefficients depend on $b\in\F$.
There are equivalent formulations that we are going to develop
now. The first one is obtained by introducing another class of linear polynomials
over $\F$. For any subspace $E$ of $\F$ (where $\F$ is identified with $\FB^n$), we define its {\em dual}
as follows:
\[
E^\perp=\{~x~|~Tr(xy)=0,~\forall~y\in E~\}.
\]
Also, we denote by \(\mathcal{I}m(F)\) the image set of any function $F$.
\begin{lemma}\label{le:sym}
Let $t,s \geq 2$ and $s=n-t+1$. 
Let us consider the linear applications
\[
P_{t,b}(x)=x^{2^t}+bx^2+(b+1)x,~~b\in\F.
\]
Then the dual of  $\mathcal{I}m(P_{t,b})$ is the set of all \(\alpha\)
satisfying $P^*_{t,b}(\alpha)=0$ where 
\[
P^*_{t,b}(x)=x^{2^s} +(b+1)^2x^2+bx.
\]
Note that $P^*_{t,b}$ is called the  {\em adjoint} 
application of $P_{t,b}$.
\end{lemma}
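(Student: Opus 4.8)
The plan is to use the standard duality between the image of a linearized map and the kernel of its adjoint with respect to the trace form \(\langle x,y\rangle = Tr(xy)\), which is nondegenerate on \(\F\). First I would unwind the definition of the dual: by construction, \(\alpha\) lies in \((\mathcal{I}m(P_{t,b}))^\perp\) if and only if \(Tr(\alpha\, P_{t,b}(x))=0\) for every \(x\in\F\). The whole point is therefore to move the linear map off of \(x\) and onto \(\alpha\) inside the trace, which is exactly what computing the adjoint accomplishes.

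The key step is the computation itself. Writing \(P_{t,b}(x)=x^{2^t}+bx^2+(b+1)x\) and using the invariance of the trace under the Frobenius map, \(Tr(z)=Tr(z^{2^k})\), I would raise each monomial to a suitable power so that \(x\) reappears with exponent \(2^n\), hence as \(x\) itself. Concretely \(Tr(\alpha x^{2^t})=Tr(\alpha^{2^{n-t}}x)\) and \(Tr(\alpha b x^2)=Tr((\alpha b)^{2^{n-1}}x)\), while the linear term is left unchanged. Summing the three contributions gives
\[
Tr(\alpha\,P_{t,b}(x)) = Tr\!\left(x\left[\alpha^{2^{n-t}}+(\alpha b)^{2^{n-1}}+(b+1)\alpha\right]\right).
\]
Since the trace form is nondegenerate, this vanishes for all \(x\) precisely when the bracketed expression, which I will call \(M(\alpha)\), is zero. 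Thus \((\mathcal{I}m(P_{t,b}))^\perp=\{\alpha\in\F : M(\alpha)=0\}\).

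It then remains to reconcile \(M\) with the stated adjoint \(P^*_{t,b}\), and this is where I expect the only real subtlety. The polynomial \(M(\alpha)=\alpha^{2^{n-t}}+b^{2^{n-1}}\alpha^{2^{n-1}}+(b+1)\alpha\) is not literally equal to \(P^*_{t,b}\), but the two share the same kernel. Squaring \(M\) is a bijection of \(\F\), so it does not change the root set; applying it and using \(z^{2^n}=z\) on \(\F\) yields
\[
M(\alpha)^2 = \alpha^{2^{n-t+1}}+b\,\alpha+(b+1)^2\alpha^2 = \alpha^{2^s}+(b+1)^2\alpha^2+b\alpha = P^*_{t,b}(\alpha),
\]
with \(s=n-t+1\). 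Hence \(M(\alpha)=0\) if and only if \(P^*_{t,b}(\alpha)=0\), and the dual of \(\mathcal{I}m(P_{t,b})\) is exactly the set of roots of \(P^*_{t,b}\), as claimed.

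The main thing to watch throughout is the careful bookkeeping of the Frobenius exponents and the reductions coming from \(z^{2^n}=z\) on \(\F\); once those are handled, the adjoint identity drops out and simultaneously confirms that \(P^*_{t,b}\) genuinely deserves to be called the adjoint of \(P_{t,b}\). No deeper difficulty is expected, since the nondegeneracy of the trace form and the invariance \(Tr(z)=Tr(z^{2^k})\) are the only structural inputs, and both are standard.
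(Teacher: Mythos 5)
Your proof is correct and follows essentially the same approach as the paper: both transpose $P_{t,b}$ across the nondegenerate trace form using Frobenius invariance. The only (cosmetic) difference is that the paper normalizes every term to a multiple of $x^{2}$ inside the trace, which makes $P^*_{t,b}(\alpha)$ appear directly as the coefficient, whereas you normalize to $x$ and then square $M(\alpha)$ at the end --- both computations are valid.
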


\begin{proof}
By definition, \(\mathcal{I}m(P_{t,b})^\perp\) consists of all $\alpha$ such that  $Tr(\alpha P_{t,b}(x))=0$ for all $x \in \F$. We have
\begin{eqnarray*}
Tr(\alpha P_{t,b}(x)) &=& 
Tr(\alpha x^{2^t})+Tr(b\alpha x^2)+Tr(\alpha(b+1)x)\\ &=&
Tr(\alpha^{2^{n-t+1}} x^2)+Tr(b\alpha x^2)+Tr(\alpha^2(b+1)^2x^2)\\ &=&
Tr(x^2(\alpha^{2^s} +\alpha^2(b+1)^2+\alpha b)).
\end{eqnarray*}

Hence \(\alpha\) belongs to the dual of the image of $P_{t,b}$ if and
only if $\alpha^{2^s} +\alpha^2(b+1)^2+\alpha b=0$, {\em i.e.},
$\alpha$ is a root of $P^*_{t,b}$, completing the proof.
\end{proof}
The following theorem gives an equivalent formulation of the quantity $r$
which is presented in Theorem \ref{th: 2^t-1}.
\begin{theorem}\label{thadj}
Notation is as in Lemma \ref{le:sym}. Then
\[
\dim Ker(P_{t,b})=\dim Ker(P^*_{t,b}).
\]
Consequently, this dimension can be determined by solving $P_{t,b}(x)=0$
or equivalently by solving
\[x^{2^s} +(b+1)^2x^2+bx=0, ~\mbox{ where } s=n-t+1.\]
\end{theorem}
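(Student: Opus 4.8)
The plan is to derive the equality of kernel dimensions directly from Lemma~\ref{le:sym}, using only the rank--nullity theorem together with the standard duality of subspaces with respect to the trace form. The central point is that Lemma~\ref{le:sym} has already identified the kernel of the adjoint with the dual of the image of $P_{t,b}$, so almost all the genuine work is behind us; what remains is a clean dimension count.

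First I would recall two standard facts about $\FB$-linear maps and subspaces of $\F$, regarded as the $n$-dimensional $\FB$-vector space $\FB^n$. On one hand, the rank--nullity theorem applied to $P_{t,b}$ gives
\[
\dim Ker(P_{t,b}) + \dim \mathcal{I}m(P_{t,b}) = n.
\]
On the other hand, since the bilinear form $(x,y)\mapsto Tr(xy)$ is non-degenerate on $\F$, every subspace $E \subseteq \F$ satisfies $\dim E + \dim E^\perp = n$; in particular, taking $E = \mathcal{I}m(P_{t,b})$ we obtain
\[
\dim \mathcal{I}m(P_{t,b})^\perp = n - \dim \mathcal{I}m(P_{t,b}).
\]

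Next I would invoke Lemma~\ref{le:sym}, which states that $\mathcal{I}m(P_{t,b})^\perp$ is exactly the set of roots of $P^*_{t,b}$, that is, $Ker(P^*_{t,b}) = \mathcal{I}m(P_{t,b})^\perp$. Combining the three displayed relations then yields
\[
\dim Ker(P^*_{t,b}) = \dim \mathcal{I}m(P_{t,b})^\perp = n - \dim \mathcal{I}m(P_{t,b}) = \dim Ker(P_{t,b}),
\]
which is the claimed equality. The ``consequently'' part follows at once, since $P^*_{t,b}(x) = x^{2^s} + (b+1)^2 x^2 + bx$ with $s = n-t+1$ by Lemma~\ref{le:sym}: the sought dimension can be read off either from $P_{t,b}(x)=0$ or from this adjoint equation.

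I do not expect any real obstacle at this stage, because the only non-trivial ingredient, namely the explicit description of the adjoint's kernel as $\mathcal{I}m(P_{t,b})^\perp$, is exactly what Lemma~\ref{le:sym} supplies. The single point I would state carefully is the non-degeneracy of the trace form, since it is precisely what guarantees $\dim E + \dim E^\perp = n$ and hence makes the two dimension counts coincide.
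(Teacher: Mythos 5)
Your proof is correct and follows essentially the same route as the paper: rank--nullity for $P_{t,b}$, the identification $Ker(P^*_{t,b})=\mathcal{I}m(P_{t,b})^\perp$ from Lemma~\ref{le:sym}, and the dimension formula $\dim E^\perp = n-\dim E$. Your version is slightly more explicit than the paper's in singling out the non-degeneracy of the trace form as the ingredient behind that last formula, which the paper leaves implicit.
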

\begin{proof}
Let $\kappa$ be the dimension of the image set of $P_{t,b}$.
It is well-known that $n=\kappa+\dim Ker(P_{t,b})$.
On the other hand, Lemma~\ref{le:sym} shows that $\alpha$ is in the dual of the image  of $P_{t,b}$ if and only if 
$P^*_{t,b}(\alpha)=0$. We deduce that 
\[
n-\kappa = \dim Ker(P^*_{t,b})=\dim Ker(P_{t,b})\;,
\]
completing the proof.
\end{proof}
Now, we discuss a different point of view, using an equivalent linear system.
\begin{theorem}\label{thsys}
For any $2\leq t<n$, we define the following equations:
\[
E_b~:~x^{2^t} + b x^2 + (b+1) x=0, ~b\in \F.
\]
Let $N_b$ be the number  of solutions of $E_b$ in $\F\setminus\FB$.
Let $M_b$ be the  number  of solutions in $\F^*$ of the system
$$
\left.
\begin{array}{rcc}
y^{2^{t-1}}+\dots +y^2+y(b+1) &=& 0\\
Tr(y) &=& 0
\end{array}
\right\}
$$
Then $N_b=2\times M_b$.
\end{theorem}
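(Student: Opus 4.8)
The plan is to reduce $E_b$ to the system by means of the substitution $z=x^2+x$, exploiting the fact that $P_b(x)=x^{2^t}+bx^2+(b+1)x$ is an $\FB$-linear (additive) polynomial. First I would record two elementary facts: $P_b(0)=0$ and $P_b(1)=1+b+(b+1)=0$, so both elements of $\FB$ are roots; and since $P_b$ is additive with $P_b(1)=0$, we have $P_b(x+1)=P_b(x)$, so the roots of $P_b$ in $\F$ occur in pairs $\{x,x+1\}$.

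The key algebraic step is to rewrite $E_b$ through $z=x^2+x$. Grouping the terms gives $P_b(x)=(x^{2^t}+x)+b(x^2+x)$. Since $(x^2+x)^{2^i}=x^{2^{i+1}}+x^{2^i}$, the sum $\sum_{i=0}^{t-1}(x^2+x)^{2^i}$ telescopes to $x^{2^t}+x$, whence
\[
P_b(x)=\sum_{i=0}^{t-1}z^{2^i}+bz=\sum_{i=1}^{t-1}z^{2^i}+(b+1)z =: Q_b(z),
\]
where $Q_b$ is exactly the left-hand side of the first equation of the system. Moreover $z=x^2+x$ always satisfies $Tr(z)=Tr(x^2)+Tr(x)=0$, because $Tr(x^2)=Tr(x)$. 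Thus $x$ solves $E_b$ if and only if $z=x^2+x$ satisfies both $Q_b(z)=0$ and $Tr(z)=0$, that is, solves the system.

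The final step is a counting argument for the map $\phi:x\mapsto x^2+x$. This map sends $\F$ onto the hyperplane $\{z\in\F \mid Tr(z)=0\}$, and each value in its image has exactly two preimages $\{x,x+1\}$; in particular $\phi^{-1}(0)=\{0,1\}$. The second equation $Tr(y)=0$ therefore singles out precisely $\mathcal{I}m(\phi)$, so a nonzero $z$ solves the system if and only if $z\in\mathcal{I}m(\phi)\setminus\{0\}$ and $Q_b(z)=0$; in that case its two preimages both lie in $\F\setminus\FB$ (as $z\neq0$) and both solve $E_b$ by the identity above. Conversely, every solution of $E_b$ in $\F\setminus\FB$ maps under $\phi$ to a nonzero solution of the system. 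Since this correspondence is exactly two-to-one, $N_b=2M_b$.

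I expect no serious obstacle here: the argument is essentially forced once the substitution $z=x^2+x$ is introduced. The only point requiring care is the bookkeeping of the excluded sets, namely checking that the pair $\{0,1\}$ removed on the $E_b$ side corresponds exactly to the single value $z=0$ removed on the system side, so that the restriction to $\F\setminus\FB$ matches the restriction to $\F^*$ and the factor $2$ comes out cleanly.
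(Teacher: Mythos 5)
Your proof is correct and follows essentially the same route as the paper: the substitution $y=x^2+x$, the telescoping identity $\sum_{i=0}^{t-1}(x^2+x)^{2^i}=x^{2^t}+x$, and the two-to-one correspondence between roots $x\notin\FB$ and nonzero trace-zero roots $y$ of the transformed equation. Your treatment of the image of $x\mapsto x^2+x$ and the excluded sets is slightly more explicit than the paper's, but the argument is the same.
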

\begin{proof}
We simply write
\[x^{2^t} + b x^2 + (b+1) x = x^{2^t}+x+b(x^2+x)\]
which is equal to
\begin{eqnarray*}
&=& (x^2+x)^{2^{t-1}}+(x^2+x)^{2^{t-2}}+\cdots + (x^2+x)+b(x^2+x)\\
&=& y^{2^{t-1}}+y^{2^{t-2}}+\cdots y^2+y(b+1),~\mbox{with $y=x^2+x$}.
\end{eqnarray*}
We are looking at the number of solutions of $E_b$ which are not
in $\FB$. So, it is equivalent to compute the number of nonzero solutions
$y$ of 
\[y^{2^{t-1}}+y^{2^{t-2}}+\cdots y^2+y(b+1)=0\] 
such that the equation $x^2+x+y=0$ has
solutions. This last condition holds if and only if $Tr(y)=0$, providing 
two distinct solutions $x_1,x_2=x_1+1$ such that $x_i^2+x_i=y$, completing 
the proof.
\end{proof}
\begin{remark}\label{val-b}
In Theorem~\ref{thsys}, $b$ takes any value while $P_b$ is defined
for $b\neq 1$ in  Theorem~\ref{th: 2^t-1}. For all $b\neq 1$, we have 
clearly $N_b=\delta(b)$. If \(b=1\), \(P_1(x)=x^{2^t} + x^2\) and the number
 of roots of~\(P_1\) in \(\F\) is equal to 
\[
N_1+2=2^{\gcd(t-1,n)}=\delta(1).
\]
Therefore, we have \(M_1 = \delta(1)/2 - 1\).
\end{remark}
\section{A property of symmetry}\label{sec-sym}
Recall that  $G_t(x)=x^{2^t-1}$.
Now, we are going to examine some symmetries between the differential 
spectra of $G_t$ and $G_s$ where $t,s \geq 2$ and $s=n-t+1$. 
In the list of properties below, notation is conserved 
as soon it is defined. 
Recall that 
\[
P^*_{t,b}(x)=x^{2^s} +x^2(b+1)^2+xb
\]
is the adjoint polynomial of $P_{t,b}(x)=x^{2^t}+bx^2+(b+1)x$. 
Thus, both polynomials have a kernel with the same dimension
(see Lemma \ref{le:sym} and Theorem~\ref{thadj}). 
It is worth noticing that  this dimension is at least $1$
since $P_{t,b}(0)=P_{t,b}(1)=0$.
In this section we want to prove the following theorem.

\begin{theorem}\label{thmain}
For any $\nu$ with $2\leq\nu\leq n-1$, we define
\[
S^i_\nu=\{~b~|~\dim Ker(P_{\nu,b})=i~\} \mbox{ with }  1\leq i\leq \nu\;.
\]
Then, for any $s,t\geq 2$ with $t=n-s+1$ and for any $i$,
we have $\# S^i_s=\# S^i_t$.  
\end{theorem}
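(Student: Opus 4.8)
The plan is to prove the stronger statement that the two multisets $\{\dim Ker(P_{t,b}) : b\in\F\}$ and $\{\dim Ker(P_{s,c}) : c\in\F\}$ coincide. Since $S^i_\nu$ merely counts the parameter values producing a kernel of dimension $i$, and every parameter gives a dimension in the common range $1\le i\le\min(t,s)$ (by Theorem~\ref{th: 2^t-1} applied to both $\nu=t$ and $\nu=s$), this multiset equality is exactly the assertion $\#S^i_s=\#S^i_t$ for all $i$.

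First I would move everything onto the trace-zero hyperplane $H=\{u\in\F : Tr(u)=0\}$. Writing $T_t(u)=\sum_{j=0}^{t-1}u^{2^j}$, the identity $x^{2^t}+x=T_t(x^2+x)$, which is precisely the computation already carried out in the proof of Theorem~\ref{thsys}, together with the fact that $x\mapsto x^2+x$ is a two-to-one map of $\F$ onto $H$, shows that $\dim Ker(P_{t,b})=1+\dim V_{t,b}$ with $V_{t,b}=\{u\in H : T_t(u)=bu\}$. It therefore suffices to compare the dimension distributions of the families $V_{t,b}$ and $V_{s,c}$.

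The bridge between $t$ and $s=n-t+1$ comes from the relation $T_t(u)=\sum_{k=t}^{n-1}u^{2^k}$, valid for $u\in H$ because $Tr(u)=0$. Substituting $w=u^{2^t}$, which is a bijection of $H$, transforms the defining equation $T_t(u)=bu$ into $T_s(w)=(b+1)\,w^{2^{s-1}}$, a $G_s$-type equation twisted by the Frobenius power $w\mapsto w^{2^{s-1}}$. The remaining and hardest step is to show that this twist leaves the dimension distribution unchanged, namely that $\{w\in H : T_s(w)=\beta w^{2^{s-1}}\}$ and $\{w\in H : T_s(w)=cw\}$ realise the same dimensions with the same multiplicities as $\beta$ and $c$ run over $\F$. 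I would attack this through the trace-form duality already established in Lemma~\ref{le:sym} and Theorem~\ref{thadj}: passing to the adjoint reverses the Frobenius powers, and this symmetry of the operator $1+F+\cdots+F^{s-1}$ on $H$ is what should convert the twisted family into the untwisted one after a suitable bijective relabelling of the parameter.

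I expect this last step to be the real obstacle, precisely because the naive fix fails: matching the two trinomials directly by a dilation $x\mapsto\lambda x$ forces $\lambda^{2^s-1}$ to play the role of the parameter, and $\lambda\mapsto\lambda^{2^s-1}$ is not a bijection of $\F^*$ in general, so no single rescaling yields a clean bijection $b\mapsto c$; the argument must be made at the level of the whole family rather than polynomial by polynomial. A useful guide, and a consistency check, is the behaviour of the two exceptional parameters. By Theorem~\ref{th: 2^t-1}, $G_t$ has special dimensions $\gcd(t,n)$ at $b=0$ and $\gcd(t-1,n)$ at $b=1$, whereas $G_s$ has $\gcd(s,n)=\gcd(t-1,n)$ at $c=0$ and $\gcd(s-1,n)=\gcd(t,n)$ at $c=1$; the unordered pairs agree, with the roles of $0$ and $1$ interchanged. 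This tells me the sought correspondence must swap the two special values, and the final bookkeeping has to absorb these two anomalous points into the global totals rather than match $b=1$ to $c=1$ directly.
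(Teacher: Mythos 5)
Your reduction to the hyperplane $H=\{u:Tr(u)=0\}$ is sound as far as it goes: the identity $P_{t,b}(x)=T_t(y)+by$ with $y=x^2+x$ does give $\dim Ker(P_{t,b})=1+\dim V_{t,b}$, and the substitution $w=u^{2^t}$ together with $Tr(u)=0$ does turn $T_t(u)=bu$ into $T_s(w)=(b+1)w^{2^{s-1}}$. But you have only restated the problem: the entire content of the theorem now sits in the claim that the twisted family $\{w\in H: T_s(w)=\beta w^{2^{s-1}}\}$ and the untwisted family $\{w\in H: T_s(w)=cw\}$ have the same dimension distribution, and you explicitly leave that step as a hope ("this symmetry \ldots is what should convert the twisted family into the untwisted one"). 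You correctly diagnose why no parameter-by-parameter bijection $b\mapsto c$ can exist, but you do not supply the mechanism that replaces it. That is a genuine gap, not a routine verification.

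The paper's resolution is a weighted double count that you are missing. Fix $i$ and count the incidence set $\mathcal{E}^i_\nu=\{(a,b)\in\F^*\times\F: P^*_{\nu,b}(a)=0 \text{ and } \dim Ker(P_{\nu,b})=i\}$, where $P^*_{\nu,b}$ is the adjoint. Since $\dim Ker(P^*_{\nu,b})=\dim Ker(P_{\nu,b})$ (Theorem~\ref{thadj}), each $b\in S^i_\nu$ contributes exactly $2^i-1$ pairs, so $\#\mathcal{E}^i_\nu=(2^i-1)\#S^i_\nu$ with a multiplicity depending only on $i$. The explicit map $\pi(a,b)=\bigl(a^{2^s}, ab\,a^{-2^s}+1\bigr)$ is a permutation of $\F^*\times\F$ satisfying $P^*_{s,\beta}(\alpha)=P^*_{t,b}(a)$ for $(\alpha,\beta)=\pi(a,b)$, and one checks (the substitution $x=ay$ in $P^*_{t,b}(x)=0$, valid precisely when $P^*_{t,b}(a)=0$) that it preserves the kernel dimension; hence $\#\mathcal{E}^i_t=\#\mathcal{E}^i_s$ and the common factor $2^i-1$ cancels. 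Note that for a fixed $b$ the $2^i-1$ admissible roots $a$ are sent to pairs $(\alpha,\beta)$ with \emph{different} values of $\beta$ — this is exactly how the argument circumvents the non-existence of a clean relabelling of parameters that blocked your approach, and it also automatically absorbs the swap of the special values $b\in\{0,1\}$ that you observed. Without this (or an equivalent) global counting device, your outline does not constitute a proof.
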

We begin by some lemmas. The next one will not be used for
the proof of Theorem \ref{thmain} but clarifies some arguments.
\begin{lemma}\label{le:rac}
Let $a\in\F^*$ and $2\leq t\leq n-1$. Then there are exactly
 two elements,
$b_1$ and $b_2$ with $b_2=b_1+a^{-1}$, such that $P^*_{t,b_i}(a)=0$ for $i=1,2$.
In particular, $P^*_{t,b}(1)=0$ for $b\in\{0,1\}$.
\end{lemma}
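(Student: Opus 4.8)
The plan is to fix $a \in \F^*$ and to read the defining relation $P^*_{t,b}(a) = 0$ not as an equation in the variable $x$ evaluated at $x = a$, but as an equation in the parameter $b$. Substituting $x = a$ into $P^*_{t,b}(x) = x^{2^s} + x^2(b+1)^2 + xb$ and using $(b+1)^2 = b^2+1$ in characteristic $2$, I would obtain
\[
a^{2^s} + a^2 b^2 + a^2 + ab = 0,
\]
which, after dividing by $a^2 \neq 0$, becomes the quadratic
\[
b^2 + a^{-1} b + (a^{2^s-2} + 1) = 0.
\]
This is the crucial observation: the condition $P^*_{t,b}(a)=0$ is a genuine quadratic in $b$ with nonzero linear coefficient $a^{-1}$.

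Next I would invoke the standard criterion for quadratics over a field of characteristic $2$: an equation $b^2 + \beta b + \gamma = 0$ with $\beta \neq 0$ has either no root or exactly two roots in $\F$, the latter occurring if and only if $Tr(\gamma/\beta^2) = 0$; and when two roots $b_1, b_2$ exist they satisfy $b_1 + b_2 = \beta$. Here $\beta = a^{-1}$ and $\gamma = a^{2^s-2}+1$, so $\gamma/\beta^2 = a^{2^s} + a^2$. The trace of this quantity vanishes because $Tr(y^{2^k}) = Tr(y)$ for every $y \in \F$ and every $k$, whence $Tr(a^{2^s} + a^2) = Tr(a) + Tr(a) = 0$. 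Thus two distinct solutions $b_1, b_2$ always exist, and the sum-of-roots relation gives $b_2 = b_1 + a^{-1}$, as claimed.

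Finally, for the particular case I would set $a = 1$: the quadratic collapses to $b^2 + b = 0$, that is $b(b+1) = 0$, so $b \in \{0,1\}$, which is consistent with $b_2 = b_1 + 1$. I do not expect a genuine obstacle here; the only point requiring care is the passage from \emph{roots of a linear polynomial in $x$} to \emph{roots of a quadratic in $b$}, together with the verification that the relevant trace vanishes identically rather than merely for special $a$. Once the identity $Tr(y^{2^k}) = Tr(y)$ is used, solvability is automatic, so exactly two values of $b$ occur for every nonzero $a$.
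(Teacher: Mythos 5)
Your proof is correct and follows essentially the same route as the paper: both view $P^*_{t,b}(a)=0$ as a quadratic in $b$ with linear coefficient $a^{-1}$, apply the characteristic-$2$ solvability criterion to the quantity $a^{2^s}+a^2$ (whose trace vanishes since $Tr(y^{2^k})=Tr(y)$), and read off the sum of roots. The case $a=1$ is handled identically via $b^2+b=0$.
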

\begin{proof}
Let $a$ be fixed and let us consider the equation $P^*_{t,b}(a)=0$ for some $b$:
\[
b^2a^2+ba+a^{2^s}+a^2=a^2\left(b^2+\frac{b}{a}+\frac{a^{2^s}+a^2}
{a^2}\right)=0.
\]
There is $b$ such that this equation is satisfied if and only if 
\[
Tr\left(\frac{a^{2^s}+a^2}{a^2}\times a^2\right)=Tr(a^{2^s}+a^2)=0,
\]
which holds for any $a$. Thus, for any nonzero $a$ there are exactly 
two solutions, say $b_1$ and $b_2$ whose sum equals $a^{-1}$.
To complete the proof, we observe that $P^*_{t,b}(1)=b^2+b$.
\end{proof}
\begin{lemma}\label{le:eqv}
Let $s,t\geq 2$ with $t=n-s+1$. Let \(\pi\) be the permutation of \(\F^* \times \F\) defined by
\[\pi(a,b) = \left(a^{2^s}, \frac{ab}{a^{2^s}}+1\right)\;.\]
Then, for any \((a,b)\) in \(\F^* \times \F\), \((\alpha, \beta)=\pi(a,b)\) satisfies
\[P_{s,\beta}^*(\alpha) = P_{t,b}^*(a)\;.\]
\end{lemma}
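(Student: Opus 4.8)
The plan is to verify the identity by direct substitution, since both sides are explicit polynomial expressions in $a$ and $b$ once the definitions are unfolded. First I would record that, by definition, the adjoint polynomial indexed by $s$ reads
\[
P^*_{s,\beta}(x) = x^{2^{n-s+1}} + x^2(\beta+1)^2 + x\beta = x^{2^t} + x^2(\beta+1)^2 + x\beta,
\]
because $n-s+1 = t$. I would then set $\alpha = a^{2^s}$ and $\beta = ab/a^{2^s}+1$ and substitute these into $P^*_{s,\beta}(\alpha)$, treating the three terms separately.

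The one genuinely non-routine step is the leading term: $\alpha^{2^t} = a^{2^{s+t}} = a^{2^{n+1}}$, and since $s+t = n+1$ and every element of $\F$ satisfies $a^{2^n} = a$, this collapses to $a^{2^{n+1}} = (a^{2^n})^2 = a^2$. This is the only place where the hypothesis $t = n-s+1$ enters, and it is where care with the exponents is required; everything afterwards is characteristic-$2$ algebra. I expect this exponent reduction to be the only point needing real attention.

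For the remaining two terms I would exploit that, in characteristic $2$, $\beta+1 = ab/a^{2^s} = ab/\alpha$, so that $\alpha^2(\beta+1)^2 = \alpha^2\cdot a^2b^2/\alpha^2 = a^2b^2$, while $\alpha\beta = \alpha(ab/\alpha+1) = ab + a^{2^s}$. Adding the three contributions gives $P^*_{s,\beta}(\alpha) = a^2 + a^2b^2 + ab + a^{2^s}$. On the other side, expanding $P^*_{t,b}(a) = a^{2^s} + a^2(b+1)^2 + ab$ with $(b+1)^2 = b^2+1$ yields $a^{2^s} + a^2b^2 + a^2 + ab$, and the two expressions coincide term by term, which establishes the claimed equality.

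Finally, I would note in passing that $\pi$ is well defined and bijective, as the statement presupposes: the map $a \mapsto a^{2^s}$ is a bijection of $\F^*$ with inverse $x \mapsto x^{2^{n-s}}$ (since $(a^{2^s})^{2^{n-s}} = a^{2^n} = a$), and for each fixed $a$ the second coordinate is an affine, hence bijective, function of $b$; composing these describes $\pi^{-1}$ explicitly. Since no step presents a structural difficulty, the whole argument is a controlled computation whose sole delicate ingredient is the identity $a^{2^{s+t}} = a^2$.
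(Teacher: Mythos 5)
Your computation is correct and follows essentially the same route as the paper's own proof: direct substitution of $\alpha=a^{2^s}$ and $\beta=ab/a^{2^s}+1$ into $P^*_{s,\beta}$, using $(\beta+1)^2=a^2b^2/a^{2^{s+1}}$ and the exponent collapse $a^{2^{s+t}}=a^{2^{n+1}}=a^2$ to arrive at $a^2+a^2b^2+ab+a^{2^s}=P^*_{t,b}(a)$. The paper likewise verifies that $\pi$ is a bijection by exhibiting the inverse $\pi^{-1}(\alpha,\beta)=\bigl(\alpha^{2^{n-s}},\,\alpha(\beta+1)/\alpha^{2^{n-s}}\bigr)$, matching your closing remark.
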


\begin{proof}
First, we clearly have that \(\pi\) is a permutation of \(\F^* \times \F\).
Indeed,  \(\pi\left(\F^* \times \F\right) \subset \F^* \times \F\) and one
can define the inverse of $\pi$ as follows:
\[\pi^{-1}(\alpha, \beta) = \left(\alpha^{2^{n-s}}, \frac{\alpha(\beta+1))}{
\alpha^{2^{n-s}}}\right)\;.
\]
Actually, $(\alpha^{2^{n-s}})^{2^s}=\alpha$ and it can be checked that
\[
\pi(\pi^{-1}(\alpha, \beta)) = \left(\alpha,\frac{\alpha^{2^{n-s}}\alpha(\beta+1)}{\alpha\alpha^{2^{n-s}}}+1\right)=(\alpha,\beta).
\]
Then, by using that
\((\beta+1)^2=\frac{a^2b^2}{a^{2^{s+1}}}\) and $s+t=n+1$,
we deduce that
\begin{eqnarray*}
P_{s,\beta}^*(\alpha) & = & (a^{2^s})^{2^t}+(a^{2^s})^{2}(\beta+1)^2+
(a^{2^s})\beta \\
& = & a^2 + a^2b^2 + ab + a^{2^s}\\
& = & P_{t,b}^*(a)\;.
\end{eqnarray*}
\end{proof}
\begin{lemma}\label{le:eq1}
Let $s,t\geq 2$ with $t=n-s+1$. Let $b\in\F$
and let $a\in\F^*$ such that 
$P^*_{t,b}(a)=0$.
 Then $\dim Ker(P^*_{t,b})=\dim Ker(P^*_{s,\beta})$,
where $\beta=1+ab/a^{2^s}$.
\end{lemma}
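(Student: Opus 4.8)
The plan is to establish the claimed equality of kernel dimensions by exhibiting an explicit bijection between the two kernels, building directly on the machinery already set up in Lemmas~\ref{le:sym}, \ref{le:rac} and especially \ref{le:eqv}. The key observation is that Lemma~\ref{le:eqv} gives a pointwise identity $P_{s,\beta}^*(\alpha)=P_{t,b}^*(a)$ for the specific permutation $\pi(a,b)=(a^{2^s}, ab/a^{2^s}+1)$, where $\beta=1+ab/a^{2^s}$ is precisely the second coordinate of $\pi(a,b)$. This identity is the crucial engine: it relates the \emph{value} of one adjoint polynomial at one point to the \emph{value} of the other adjoint polynomial at the image point, but for a \emph{fixed} parameter pairing $(b,\beta)$. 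Since Theorem~\ref{thadj} already tells us that $\dim Ker(P_{t,b})=\dim Ker(P^*_{t,b})$ (and similarly for $s,\beta$), it suffices to compare the kernel dimensions of the two \emph{adjoint} polynomials.

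First I would fix $b\in\F$ and a nonzero root $a$ of $P^*_{t,b}$, so that $\beta=1+ab/a^{2^s}$ is determined. The goal is to show that the map sending a root $a'$ of $P^*_{t,b}$ to the first coordinate of $\pi(a',b)$, namely $a'^{2^s}$, carries $Ker(P^*_{t,b})$ into $Ker(P^*_{s,\beta})$. The subtlety here is that Lemma~\ref{le:eqv} as stated varies \emph{both} coordinates under $\pi$: the second coordinate $\beta$ depends on $a$, so naively applying $\pi$ to a different root $a'$ would change the target parameter $\beta$. The resolution is to note that once $\beta$ is pinned down by the particular chosen root $a$, I need a map on roots that keeps $\beta$ fixed. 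The right framework is to consider the $\FB$-linear map $\phi:x\mapsto x^{2^s}$, which is a bijection of $\F$, and to verify that $\phi$ sends $Ker(P^*_{t,b})$ onto $Ker(P^*_{s,\beta})$ for this fixed $\beta$. Concretely, for any $x$ in the kernel of $P^*_{t,b}$, I would compute $P^*_{s,\beta}(x^{2^s})$ directly from the definition $P^*_{s,\beta}(z)=z^{2^t}+z^2(\beta+1)^2+z\beta$, substituting $\beta=1+ab/a^{2^s}$, and check via the same algebraic manipulation used in Lemma~\ref{le:eqv} (using $s+t=n+1$ and the expression for $(\beta+1)^2$) that the result vanishes whenever $P^*_{t,b}(x)=0$.

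The hard part will be handling the dependence of $\beta$ on the \emph{single} chosen root $a$ rather than on the variable $x$ ranging over the kernel. In Lemma~\ref{le:eqv} the correspondence $\pi$ couples $a$ and $\beta$ together, but for a dimension comparison I need $\beta$ to be a constant while $x$ sweeps the whole kernel. I expect the clean way around this is to observe that $P^*_{t,b}$ is $\FB$-linear, so its kernel is a linear space, and that the relation $P^*_{s,\beta}(x^{2^s})=P^*_{t,b}(x)$ can be shown to hold as an \emph{identity in $x$} (not just at the single point $a$) once $\beta$ is the constant $1+ab/a^{2^s}$ determined by the fixed root $a$ with $P^*_{t,b}(a)=0$. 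Verifying this identity is a direct computation paralleling Lemma~\ref{le:eqv}, and I would carry it out by expanding $P^*_{s,\beta}(x^{2^s})$ and reducing using $(\beta+1)^2 a^{2^{s+1}}=a^2b^2$ together with $P^*_{t,b}(a)=0$ to eliminate the constant term $a^{2^s}$.

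Once this identity is established, the conclusion follows immediately: since $x\mapsto x^{2^s}$ is a bijection of $\F$ carrying $Ker(P^*_{t,b})$ into $Ker(P^*_{s,\beta})$, and its inverse $x\mapsto x^{2^{n-s}}$ carries $Ker(P^*_{s,\beta})$ back into $Ker(P^*_{t,b})$ by the symmetric version of the same identity (swapping the roles of $s$ and $t$, which is legitimate because $t=n-s+1$ is an involution on this pair), the two kernels are in $\FB$-linear bijection and hence have equal dimension. Combining with $\dim Ker(P^*_{t,b})=\dim Ker(P_{t,b})$ and $\dim Ker(P^*_{s,\beta})=\dim Ker(P_{s,\beta})$ from Theorem~\ref{thadj} completes the argument.
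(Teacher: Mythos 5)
Your argument hinges on the claim that, once $\beta=1+ab/a^{2^s}$ is frozen at the value determined by the single chosen root $a$, the relation $P^*_{s,\beta}(x^{2^s})=P^*_{t,b}(x)$ holds as an identity in $x$, so that the Frobenius power $x\mapsto x^{2^s}$ carries $Ker(P^*_{t,b})$ onto $Ker(P^*_{s,\beta})$. That claim is false. Using $s+t=n+1$ one gets
\[
P^*_{s,\beta}(x^{2^s})=x^{2^{s+t}}+x^{2^{s+1}}(\beta+1)^2+x^{2^s}\beta
= x^2+x^{2^s}+u^2+u \quad\mbox{with } u=\frac{ab}{a^{2^s}}\,x^{2^s},
\]
while $P^*_{t,b}(x)=x^{2^s}+x^2+v^2+v$ with $v=bx$. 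The difference is $(u+v)(u+v+1)$, which vanishes exactly when $u\in\{v,v+1\}$; this holds at $x=a$ (that single evaluation is Lemma~\ref{le:eqv}) and at $x=0$, but not for a generic element of the kernel -- indeed the two polynomials do not even involve the same monomials ($x^{2^{s+1}}$ on one side, $x$ on the other) unless $t=2$. The obstruction is precisely the one you flag yourself: the parameter that Lemma~\ref{le:eqv} attaches to a point $x$ is $1+xb/x^{2^s}$, which varies with $x$, and pinning it to the value coming from $a$ destroys the relation everywhere else. So neither the identity nor even the weaker statement that $P^*_{s,\beta}(x^{2^s})$ vanishes on $Ker(P^*_{t,b})$ is available, and the proposed bijection between the two kernels does not exist.

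The paper gets around this with a different change of variable: it scales by the root rather than applying Frobenius. Writing $x=ay$ and factoring gives
\[
P^*_{t,b}(ay)=a^{2^s}\Bigl(y^{2^s}+\frac{a^2(b+1)^2}{a^{2^s}}\,y^2+\frac{ab}{a^{2^s}}\,y\Bigr),
\]
and the hypothesis $P^*_{t,b}(a)=0$ is exactly the statement that the two displayed coefficients sum to $1$, so the bracket equals $P_{s,\beta}(y)$ -- the ordinary polynomial, not its adjoint -- with $\beta=1+ab/a^{2^s}$. Hence $y\mapsto ay$ is an $\FB$-linear bijection from $Ker(P_{s,\beta})$ onto $Ker(P^*_{t,b})$, and Theorem~\ref{thadj} then converts $\dim Ker(P_{s,\beta})$ into $\dim Ker(P^*_{s,\beta})$. (The paper also treats the degenerate case $a=1$, i.e.\ $b\in\FB$, separately.) If you replace your map $x\mapsto x^{2^s}$ by this scaling $y\mapsto ay$ and route the comparison through $P_{s,\beta}$ instead of $P^*_{s,\beta}$, the rest of your outline goes through.
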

\begin{proof}
Recall that $P^*_{t,b}(x)=x^{2^s} +x^2(b+1)^2+xb$. We know that for
any $b\not\in\FB$ there is $a\in\F\setminus\{0,1\}$ such that
$P^*_{t,b}(a)=0$. This is because $\dim Ker(P_{t,b})=\dim
Ker(P^*_{t,b})$ (see Theorem~\ref{thadj}) and $\{0,1\}$ is included in
the kernel of $P_{t,b}$. Moreover, $P^*_{t,b}(1) = b^2+b= 0$ if and only if
 $b \in \FB$.

We treat the case $a=1$ separately, a case where $P_{t,b}(a)=0$ for
$b\in\FB$ only. In this case, Lemma \ref{le:eqv} leads to $P^*_{s,\beta}(1)=0$ 
too where $\beta=b+1$, since $\pi(1,b)=(1,b+1)$. And we have for $b=0$
\[
P^*_{t,0}(x)=x^{2^s}+x^2=P_{s,1}(x)
\]
and for $b=1$
\[
P^*_{t,1}(x)=x^{2^s}+x=P_{s,0}(x).
\]
Thus, we conclude: for $a=1$, if \(b\) is such that  $P^*_{t,b}(1)=0$ then
$\beta=b+1$ and 
$$\dim Ker(P^*_{t,b})=\dim Ker(P_{s,\beta})=\dim Ker(P^*_{s,\beta})$$ 
where the last equality comes from Theorem \ref{thadj}.

Now, we suppose that $a\not\in\FB$. With $x=ay$, the equation 
$P^*_{t,b}(x)=0$  is equivalent to
\[
a^{2^s}y^{2^s}+a^2y^2(b+1)^2+ayb=0
\]
which is
\[
a^{2^s}\left(y^{2^s}+\frac{a^2(b+1)^2}{a^{2^s}}y^2+
y\frac{ab}{a^{2^s}}\right)=0.
\]
We can set 
\[
\beta=\frac{a^2(b+1)^2}{a^{2^s}}~~\mbox{and}~~
\beta+1=\frac{ab}{a^{2^s}},
\]
 since
\[
\frac{a^2(b+1)^2}{a^{2^s}}+1=\frac{ab}{a^{2^s}}
\]
is equivalent to
\[
a^{2^s}+a^2(b+1)^2+ab=0,~ {\it i.e.,}~P^*_{t,b}(a)=0.
\]
We have proved that $P^*_{t,b}(x)=0$  is equivalent to 
\[
P_{s,\beta}(y)=y^{2^s}+\beta y^2+(\beta+1)y=0.
\]
Then, $\dim Ker(P_{s,\beta})=\dim Ker(P^*_{t,b})$.
But $\dim Ker(P_{s,\beta})=\dim Ker(P^*_{s,\beta})$ by Theorem \ref{thadj},
completing the proof.
\end{proof}

\bigskip
\noindent
{\em Proof of Theorem  \ref{thmain}}. 
Recall that 
\[S^i_\nu=\{~b\in\F~|~\dim Ker(P_{\nu,b})=i~\}\;.\]
Then, we want to show that, for any \(i\), \(\#S^i_t = \#S^i_s\).
For any $2\leq\nu\leq n-1$ and for any $1\leq i\leq \nu$, we define
\[
\mathcal{E}^i_\nu=\{(a,b) \in\F^*  \times \F~|~P^*_{\nu,b}(a)=0 \mbox{ and }
\dim Ker(P_{\nu,b})=i~\}\;.
\]
From Theorem~\ref{thadj}, we know that \(\dim Ker(P_{\nu,b})=\dim Ker(P^*_{\nu,b})\). Then,
\[\mathcal{E}^i_\nu=\{(a,b) \in\F^* \times \F~|~P^*_{\nu,b}(a)=0 \mbox{ and }\dim Ker(P^*_{\nu,b})=i~\}.
\]
For any $b\in S^i_\nu$  there are $2^i-1$ nonzero $a$ in  $Ker(P^*_{\nu,b})$
and then $2^i-1$ pairs $(a,b)$, for a fixed $b$, in $\mathcal{E}^i_\nu$ so that
\begin{equation}\label{eq0}
\# \mathcal{E}^i_\nu = (2^i-1) \#S^i_\nu\;.
\end{equation}
We use Lemma \ref{le:eqv}. Recall that \(\pi\) is the permutation of 
\(\F^*  \times \F\) defined by
\[\pi(a,b) = \left(a^{2^s}, \frac{ab}{a^{2^s}}+1\right).
\]
Then, we have
\begin{eqnarray*}
\mathcal{E}^i_t & = &  \{(a,b)  \in\F^* \times \F~|~P^*_{t,b}(a)=0 
\mbox{ and }\dim Ker(P^*_{t,b})=i~\},\\
\mathcal{E}^i_s & = & \{(\alpha,\beta) \in\F^* \times \F~|~
P^*_{s,\beta}(\alpha)=0 \mbox{ and }\dim Ker(P^*_{s,\beta})=i~\} \\
& = & \{(\alpha,\beta) = \pi(a,b), (a,b) \in \mathcal{E}^i_t\}\;.\\
\end{eqnarray*}

Indeed, any $(\alpha,\beta)$ is as follows specified from $(a,b)$.
We have $P^*_{s,\beta}(\alpha)=P^*_{t,b}(a)$ from Lemma \ref{le:eqv}.
Moreover, according to Lemma \ref{le:eq1}, 
$\dim Ker(P^*_{t,b})=\dim Ker(P^*_{s,\beta})$, where $\beta$
is calculated from $a$ and $b$, for any $a$ such that 
$P^*_{t,b}(a)=0$. 

In other terms, to any pair $(a,b)\in\mathcal{E}^i_t$ corresponds
a unique pair $(\alpha,\beta)\in\mathcal{E}^i_s$.
We finally get that $\# \mathcal{E}^i_s =\# \mathcal{E}^i_t$
and it directly follows from~(\ref{eq0}) that
\(\#S^i_s = \#S^i_t\), completing the proof.
\hfill$\diamond$\\

Now we are going to explain Theorem \ref{thmain}, 
in terms of the differential 
spectra of $G_t$ and $G_s$, $s,t\geq 2$ with $t=n-s+1$. 
Actually, we can deduce from the previous theorem that both functions \(G_t\)
 and \(G_s\) have the same {\em restricted differential spectrum}, 
{\em i.e.} the multisets \(\{\delta(b), b \in \F \setminus \FB\}\) are
 the same for both functions.
\begin{corollary}\label{cormain}
We denote by $\delta_\nu(b)$, $b\in\F$, 
the quantities $\delta(b)$ corresponding to $G_\nu: x \mapsto x^{2^\nu-1}$.
Then, for any $s,t\geq 2$ with $t=n-s+1$, we have
\[\begin{array}{lclcl}
\delta_s(0) & = & \delta_t(1)-2 & = & 2^{\gcd(t-1,n)}-2 \\
\delta_s(1) & = & \delta_t(0)+2 & = & 2^{\gcd(t,n)}
\end{array}\]
and we have equality between both following multisets:
\begin{equation}\label{msets}
\{\delta_s(b), b \in \F \setminus \FB\} = \{\delta_t(b), b \in \F 
\setminus \FB\}.
\end{equation}
Moreover, \(G_t\) and \(G_s\) have the same differential spectrum
 if and only if 
\[
\gcd(s,n)=\gcd(t,n)=1,
\]
which can hold for odd $n$  only.
In any case, $G_t$ is locally-APN if and only if $G_s$ is  locally-APN.
\end{corollary}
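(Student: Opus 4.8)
The plan is to derive everything from Theorem~\ref{th: 2^t-1} and the counting identity $\#S^i_s = \#S^i_t$ of Theorem~\ref{thmain}, the only genuinely new ingredient being a careful treatment of the two exceptional differences $b \in \FB$. I would open with the two elementary gcd identities that follow from $s = n-t+1$: since $s = n-(t-1)$ and $s-1 = n-t$, we have $\gcd(s,n) = \gcd(t-1,n)$ and $\gcd(s-1,n) = \gcd(t,n)$. Feeding these into the formulas $\delta_\nu(0) = 2^{\gcd(\nu,n)}-2$ and $\delta_\nu(1) = 2^{\gcd(\nu-1,n)}$ of Theorem~\ref{th: 2^t-1} immediately yields $\delta_s(0) = 2^{\gcd(t-1,n)}-2 = \delta_t(1)-2$ and $\delta_s(1) = 2^{\gcd(t,n)} = \delta_t(0)+2$, settling the first two displayed equalities.

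For the multiset identity~(\ref{msets}) I would use that for every $b \neq 1$ the two points $0$ and $1$ always lie in $Ker(P_{\nu,b})$, so that $\delta_\nu(b) = 2^{\dim Ker(P_{\nu,b})}-2$; hence the multiset $\{\delta_\nu(b) : b \in \F \setminus \FB\}$ contains the value $2^i-2$ with multiplicity exactly $\#(S^i_\nu \setminus \FB)$. By Theorem~\ref{thmain} we already know $\#S^i_s = \#S^i_t$ over all of $\F$, so it suffices to show that deleting $b \in \FB$ removes the same number of elements from $S^i_s$ and from $S^i_t$ for each $i$. Here I would locate the two exceptional values: $P_{\nu,0}(x)=x^{2^\nu}+x$ gives $0 \in S^{\gcd(\nu,n)}_\nu$, while $P_{\nu,1}(x)=(x^{2^{\nu-1}}+x)^2$ gives $1 \in S^{\gcd(\nu-1,n)}_\nu$. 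Using the gcd identities above, for $G_t$ the element $0$ sits in $S^{\gcd(t,n)}$ and $1$ in $S^{\gcd(t-1,n)}$, whereas for $G_s$ the roles are exchanged, $1$ sitting in $S^{\gcd(t,n)}$ and $0$ in $S^{\gcd(t-1,n)}$. Thus the same index levels $\gcd(t,n)$ and $\gcd(t-1,n)$ lose exactly one element apiece (or one level loses two, in the degenerate case where these gcd's coincide), so $\#(S^i_s \setminus \FB)=\#(S^i_t \setminus \FB)$ for every $i$, which gives~(\ref{msets}).

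The equivalence on full spectra then reduces to comparing the two remaining multisets $\{\delta_\nu(0),\delta_\nu(1)\}$. Writing $p=\gcd(t,n)$ and $q=\gcd(t-1,n)$, I have $\{\delta_t(0),\delta_t(1)\}=\{2^p-2,\,2^q\}$ and $\{\delta_s(0),\delta_s(1)\}=\{2^q-2,\,2^p\}$. Since $2^q-2 \neq 2^q$, these multisets can agree only when $2^q-2=2^p-2$ and $2^p=2^q$, i.e. when $p=q$. The key arithmetic observation is that $\gcd(t,n)$ and $\gcd(t-1,n)$ are themselves coprime (any common divisor divides both $t$ and $t-1$, hence divides $1$), so $p=q$ forces $p=q=1$, which is exactly $\gcd(s,n)=\gcd(t,n)=1$. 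Moreover one of the consecutive integers $t,\,t-1$ is even, so if $n$ were even one of $p,q$ would exceed $1$; hence this condition requires $n$ odd. Finally the locally-APN claim is immediate from~(\ref{msets}): since the two restricted multisets coincide they share the same maximum, so $\max_{b \notin \FB}\delta_s(b) \leq 2$ if and only if $\max_{b \notin \FB}\delta_t(b)\leq 2$.

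The main obstacle is the bookkeeping in the second paragraph: Theorem~\ref{thmain} counts over all $b\in\F$, but the restricted spectrum excludes $b\in\FB$, so I must check that the exceptional points are removed symmetrically. The gcd identities make this work precisely because passing from $t$ to $s$ swaps the index levels of $b=0$ and $b=1$ without changing which levels are affected.
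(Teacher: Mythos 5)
Your proof is correct and follows essentially the same route as the paper: the gcd identities $\gcd(s,n)=\gcd(t-1,n)$, $\gcd(s-1,n)=\gcd(t,n)$ combined with Theorem~\ref{th: 2^t-1} for the values at $b\in\FB$, Theorem~\ref{thmain} plus a check that the two exceptional differences are removed from matching kernel-dimension levels for the restricted multisets, and the coprimality of $\gcd(t,n)$ and $\gcd(t-1,n)$ for the equivalence. The only cosmetic difference is that you identify the levels of $b=0,1$ by computing $\dim Ker(P_{\nu,0})=\gcd(\nu,n)$ and $\dim Ker(P_{\nu,1})=\gcd(\nu-1,n)$ directly, whereas the paper exhibits the identities $(P_{t,1})^{2^{s-1}}=P_{s,0}$ and $(P_{t,0})^{2^s}=P_{s,1}$; these amount to the same observation.
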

\begin{proof}
Since $s=n-t+1$, we clearly have
\[
\gcd(s,n)=\gcd(t-1,n)~~\mbox{and}~~\gcd(s-1,n)=\gcd(t,n).
\]
Thus, applying Theorem \ref{th: 2^t-1}, we get
\[
\delta_s(0)=2^{\gcd(s,n)}-2=2^{\gcd(t-1,n)}-2=\delta_t(1)-2
\]
and
\[
\delta_s(1)=2^{\gcd(s-1,n)}=2^{\gcd(t,n)}=\delta_t(0)+2.
\]
Moreover, we have
\begin{eqnarray*}
\left(P_{t,1}(x)\right)^{2^{s-1}} & = & \left(x^{2^t} + x^2\right)^{2^{s-1}} = x + x^{2^s} = 
P_{s,0}(x)\\
\left(P_{t,0}(x)\right)^{2^s} & = & \left(x^{2^t} + x \right)^{2^s} = x^{2^{s}}  + x^2 =
 P_{s,1}(x)\;,
\end{eqnarray*}
implying that 
\[
\{\dim Ker P_{t,0} , \dim Ker P_{t,1}\} = \{\dim Ker P_{s,0} , \dim Ker P_{s,1}\}\;.\]
We deduce from Theorem~\ref{thmain} that
\[\# \{~b\in\F \setminus \FB ~|~\dim Ker(P_{t,b})=i~\} = \# \{~b\in\F \setminus \FB ~|~\dim Ker(P_{s,b})=i~\}.\]
Equality (\ref{msets}) is then a direct consequence of Theorem~\ref{th: 2^t-1}, since
\[\{\delta_{\nu}(b), b \in \F \setminus \FB\} = \{2^{\kappa(b)} -2, \; \kappa(b) = \dim Ker(P_{\nu,b})\}\;.\]

Now,  we note that 
$\delta_s(0)=\delta_t(0)$ if and only if $\delta_s(1)=\delta_t(1)$.
Thus, \(G_t\) and \(G_s\) have the same differential spectrum if and only if $\delta_s(0)=\delta_t(0)$.
Since
\[
\delta_s(0)=2^{\gcd(s,n)}-2~~\mbox{and}~~\delta_t(0)=2^{\gcd(t,n)}-2,
\]
this holds if and only if  $\gcd(t,n)=\gcd(s,n)=1$. It cannot hold
when $n$ is even, because in this case either $s$ or $t$ is even
too.

Using Definition \ref{loc-apn},  the last statement is obviously derived.
\end{proof}

The previous result implies that, if \(G_t\) is APN over \(\F\), then \(G_s\)
 is locally-APN. Moreover, the differential spectrum of~\(G_s\) can be completely determined as shown by the following corollary.
\begin{corollary}\label{cr:apn}
Let \(n\) and \(t <n\) be two integers such that \(G_t: x \mapsto x^{2^t-1}\) is APN over \(\F\). Let \(s = n-t+1\). Then,
\begin{itemize}
\item if \(n\) is odd, both \(G_t\) and \(G_s\) are APN permutations;
\item if \(n\) is even, \(G_t\) is not a permutation and \(G_s\) is a
 differentially \(4\)-uniform permutation (locally-APN) with the following
 differential spectrum:
\(\omega_4=1\), 
\(\omega_2=2^{n-1}-2\) and \(\omega_0=2^{n-1}+1\).
\end{itemize}
\end{corollary}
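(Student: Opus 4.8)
The plan is to convert the hypothesis that $G_t$ is APN into two arithmetic constraints on $\gcd(t,n)$ and $\gcd(t-1,n)$ using Theorem~\ref{th: 2^t-1}, and then to read off the entire behaviour of $G_s$ from Corollary~\ref{cormain}. Since $G_t$ is APN, every value $\delta_t(b)$ lies in $\{0,2\}$. Applying this at $b=0$ gives $2^{\gcd(t,n)}-2\leq 2$, hence $\gcd(t,n)\in\{1,2\}$; applying it at $b=1$ gives $2^{\gcd(t-1,n)}\leq 2$, and since $t\geq 2$ forces $\gcd(t-1,n)\geq 1$, this yields $\gcd(t-1,n)=1$. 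These two facts drive the whole argument.

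Next I would split according to the parity of $n$. First note that $\gcd(s,n)=\gcd(n-t+1,n)=\gcd(t-1,n)=1$, so by Theorem~\ref{th: 2^t-1} (equivalently Lemma~\ref{delt0}) $G_s$ is always a permutation with $\delta_s(0)=0$. For $n$ odd, $\gcd(t,n)$ divides the odd number $n$ and so is odd; combined with $\gcd(t,n)\leq 2$ this gives $\gcd(t,n)=1$, whence $G_t$ is a permutation and, by Corollary~\ref{cormain}, $\delta_s(1)=2^{\gcd(t,n)}=2$. For $n$ even, the relation $\gcd(t-1,n)=1$ forces $t-1$ to be odd, so $t$ is even and $\gcd(t,n)$ is even; together with $\gcd(t,n)\leq 2$ this pins $\gcd(t,n)=2$, so $G_t$ is \emph{not} a permutation and $\delta_s(1)=2^{\gcd(t,n)}=4$.

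The last structural ingredient is the behaviour of $G_s$ on $\F\setminus\FB$, which comes essentially for free from the multiset equality~(\ref{msets}) in Corollary~\ref{cormain}: since $\{\delta_s(b):b\in\F\setminus\FB\}=\{\delta_t(b):b\in\F\setminus\FB\}\subseteq\{0,2\}$ by the APN hypothesis, $G_s$ is locally-APN. Hence for $n$ odd we obtain $\delta(G_s)=\max\{0,2,2\}=2$, so both $G_t$ and $G_s$ are APN permutations; for $n$ even we obtain $\delta(G_s)=\max\{0,4,2\}=4$ with $G_s$ locally-APN.

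Finally, in the even case, I would invoke Corollary~\ref{coro:delta}: the value $\delta_s(1)=4=2^2$ occurs exactly once, so $\omega_4=1$. The remainder of the spectrum then follows from Lemma~\ref{ident}: the identity $2\omega_2+4\omega_4=2^n$ gives $\omega_2=2^{n-1}-2$, and $\omega_0+\omega_2+\omega_4=2^n$ gives $\omega_0=2^{n-1}+1$. The main obstacle here is not computational but the careful parity bookkeeping in the even case — namely ensuring that the APN constraint at $b=1$ forces $t$ even and thereby fixes $\gcd(t,n)$ to be exactly $2$; once this is established, every spectral value of $G_s$ drops out mechanically from the earlier results.
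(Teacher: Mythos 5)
Your proposal is correct and follows essentially the same route as the paper: both extract $\gcd(t-1,n)=1$ and $\gcd(t,n)\in\{1,2\}$ from Theorem~\ref{th: 2^t-1}, pin down $\gcd(t,n)$ by the parity of $n$, and transfer everything to $G_s$ via Corollary~\ref{cormain}. The only cosmetic difference is that you finish the even case by computing the spectrum explicitly from Corollary~\ref{coro:delta} and Lemma~\ref{ident}, where the paper simply cites Corollary~\ref{cormain}; the underlying argument is identical.
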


\begin{proof}
From Theorem~\ref{th: 2^t-1}, we deduce that, if \(F\) is APN, then
 \(\delta_t(b) \in \{0,2\}\) for all \(b \in \F \setminus \FB\); moreover, 
\(\gcd(n,t-1)=1\) and \(\gcd(n,t) \in \{1,2\}\) since \(\delta_t(1)=2\)
 and \(\delta_t(0) \in \{0,2\}\).

If \(n\) is odd, \(\gcd(n,t)=1\) is then the only possible value, implying that \(\delta_t(0)=0\). It follows that \(\delta_s(0)=0\), \(\delta_s(1)=2\) and \(\delta_s(b) \in \{0,2\}\) for all \(b \in \F \setminus \FB\). 
In other words, both \(G_t\) and \(G_s\) are APN permutations.

If \(n\) is even, it is well-known that \(G_t\) is not a permutation (see e.g. \cite{BCCL}). More precisely, we have here \(\gcd(n,t)=2\) since \(t\) and \(t-1\) cannot be both coprime with~\(n\). Then, we deduce that \(\delta_s(0)=0\) and \(\delta_s(1)=4\). The differential spectrum of \(G_s\) directly follows from Corollary~\ref{cormain}.
\end{proof}
\begin{ex}
Notation is as in Corollary \ref{cr:apn}.
For $t=2$, we have $G_t(x)=x^3$. It is well-known that $G_2$ is an APN
function over $\F$ for any $n$. Since $s=n-1$, $G_s(x)$ is equivalent to the inverse function and it is 
also well-known that the inverse function is APN for odd $n$.
For even $n$, $\delta(G_{n-1})=4$ and the differential spectrum 
is computed in Remark \ref{reminv}.
\end{ex}
\begin{corollary}\label{coro3}
Let \(n\) and \(t <n\) be two integers such that \(G_t: x \mapsto x^{2^t-1}\) is differentially \(4\)-uniform. Then, \(n\) is even and \(G_t\) is a permutation with the following differential spectrum:
\(\omega_4=1\), 
\(\omega_2=2^{n-1}-2\) and \(\omega_0=2^{n-1}+1\). 
Moreover, for \(s = n-t+1\), \(G_s\) is APN.
\end{corollary}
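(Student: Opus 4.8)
The plan is to read off the whole structure directly from the general results already established, and then transfer it to $G_s$ through the symmetry corollary. First I would apply Corollary~\ref{coro:delta}: since $4 = 2^2$ cannot be written in the form $2^r-2$ (that would force $2^r=6$), the only admissible shape for the differential uniformity is $\delta(G_t)=2^r$ with $r=2$. The corollary then tells us that this maximal value is attained exactly once, so $\omega_4=1$, and that it coincides with $\delta(1)$, i.e. $2^{\gcd(t-1,n)}=4$. Hence $\gcd(t-1,n)=2$, which in particular forces $2\mid n$, so $n$ is even.

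Next I would prove that $G_t$ is a permutation, equivalently that $\delta(0)=0$. By Theorem~\ref{th: 2^t-1} we have $\delta(0)=2^{\gcd(t,n)}-2$, and since $\delta(0)\le\delta(G_t)=4$ this already gives $\gcd(t,n)\le 2$. The key number-theoretic observation is that $\gcd(t,n)$ divides $t$ whereas $\gcd(t-1,n)=2$ divides $t-1$; as $t$ and $t-1$ are coprime, $\gcd(t,n)$ must be coprime to $2$, hence odd. An odd integer that is at most $2$ equals $1$, so $\gcd(t,n)=1$ and therefore $\delta(0)=0$; by Lemma~\ref{delt0} this means $\gcd(2^t-1,2^n-1)=1$, i.e. $G_t$ is a permutation.

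To pin down the spectrum I would note that, again by Theorem~\ref{th: 2^t-1}, every $b\in\F\setminus\FB$ satisfies $\delta(b)=2^r-2\le 4$, forcing $r\le 2$ and hence $\delta(b)\in\{0,2\}$ (the value $4$ is automatically excluded, being not of the form $2^r-2$). Combined with $\delta(0)=0$ and $\delta(1)=4$, the only nonzero entries of the spectrum are $\omega_0,\omega_2,\omega_4$ with $\omega_4=1$. Substituting these into the two identities of Lemma~\ref{ident}, namely $\sum_k\omega_k=2^n$ and $\sum_k k\,\omega_k=2^n$, yields $2\omega_2+4=2^n$ and then $\omega_0=2^n-1-\omega_2$, giving $\omega_2=2^{n-1}-2$ and $\omega_0=2^{n-1}+1$.

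Finally, for $G_s$ with $s=n-t+1$ I would invoke Corollary~\ref{cormain}. It gives $\delta_s(0)=\delta_t(1)-2=2$ and $\delta_s(1)=\delta_t(0)+2=2$, while the multiset equality~(\ref{msets}) transfers the containment $\{\delta_t(b):b\in\F\setminus\FB\}\subseteq\{0,2\}$ to $\{\delta_s(b):b\in\F\setminus\FB\}\subseteq\{0,2\}$. Hence $\delta_s(b)\le 2$ for all $b$, so $\delta(G_s)=2$ and $G_s$ is APN. The only genuinely delicate step is the permutation argument: one must exploit the interaction between $\gcd(t,n)$ and $\gcd(t-1,n)=2$ rather than the differential bound alone, since the bound by itself only delivers $\gcd(t,n)\le 2$.
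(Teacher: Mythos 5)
Your proof is correct and follows essentially the same route as the paper's: Corollary~\ref{coro:delta} to obtain $\omega_4=1$ and $\gcd(n,t-1)=2$, the incompatibility of $\gcd(n,t)$ and $\gcd(n,t-1)$ both being even to conclude that $G_t$ is a permutation, Lemma~\ref{ident} to pin down the spectrum, and Corollary~\ref{cormain} to transfer everything to $G_s$. You merely spell out details the paper leaves implicit, notably that the bound $\gcd(t,n)\le 2$ comes from $\delta(0)\le\delta(G_t)=4$ before the parity argument forces $\gcd(t,n)=1$.
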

\begin{proof}
From Corollary~\ref{coro:delta}, we deduce that \(\delta(G_t)=4\) implies \(\gcd(n,t-1)=2\) and \(\omega_4=1\). In particular, \(n\) is even. Since \( \gcd(n,t-1)\) and \(\gcd(n,t)\) cannot be both equal to~\(2\), we also deduce that that \(G_t\) is a permutation. Its differential spectrum is then derived from Lemma~\ref{ident}.

Moreover, we have \(\delta_s(0)= 2\) and \(\delta_s(1)=2\), implying that \(G_s\) is APN.
\end{proof}
\section{Specific classes}\label{sec-spec}

In this section, we apply the results of Section \ref{sec-ds}
 to the study of the differential spectrum of
 \(G_t: x \mapsto x^{2^t-1}\), for special  values of $t$.

\subsection{The function $\boldsymbol{x\mapsto x^7}$}
We first focus on \(G_3: x \mapsto x^7\) over \(\F\), {\em i.e.}, \(t=3\).
In this case, we  determine the complete differential spectrum of the function.
Moreover, thanks to the work of Carlitz \cite{Carlitz69}, we emphasize that
this spectrum is related to some Kloosterman sums defined as follows.

\begin{proposition}\label{prop: kloos}\cite[Formula~(6.8)]{Carlitz69}
Let \(K(1)\) be the Kloosterman sum
\[K(1) = \sum_{x \in \F}(-1)^{Tr(x^{-1}+x)}\]
extended to \(0\) assuming that \((-1)^{Tr(x^{-1})}=1\) for \(x=0\).
Then, 
\[K(1) =  1+ \frac{(-1)^{n-1}}{2^{n-1}} \sum_{i=0}^{\lfloor
  \frac{n}{2}\rfloor} (-1)^{i}\binom{n}{2i} 7^i.\]
\end{proposition}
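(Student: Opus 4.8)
The plan is to evaluate the reduced sum $\kappa_n := K(1) - 1 = \sum_{x \in \F^*} (-1)^{Tr(x + x^{-1})}$ (the term $x=0$ contributes exactly $1$ by the stated convention) and to identify it as a power sum of two Frobenius eigenvalues. The starting point is the classical identity $\#\{y \in \F : y^2 + y = c\} = 1 + (-1)^{Tr(c)}$. Summing this over $x \in \F^*$ with $c = x + x^{-1}$ shows that the number $N$ of affine points $(x,y)$ with $x \neq 0$ on the Artin--Schreier curve $\mathcal{C}: y^2 + y = x + x^{-1}$ satisfies $N = (2^n - 1) + \kappa_n$, so that $\kappa_n = N - (2^n-1)$.

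Next I would bring in Weil's theory for the smooth projective model of $\mathcal{C}$ over $\FB$. The function $x + x^{-1}$ has simple poles only at $x = 0$ and $x = \infty$, so the Artin--Schreier genus formula for $p=2$ gives genus $1$; consequently the cover is ramified over each of these two points with a single point above, whence $\#\mathcal{C}(\F) = N + 2$. By the Weil conjectures for a genus-$1$ curve over $\FB$, there are exactly two Frobenius eigenvalues $\alpha, \beta$ with $\alpha\beta = 2$ and $|\alpha| = |\beta| = \sqrt{2}$, and $\#\mathcal{C}(\F) = 2^n + 1 - (\alpha^n + \beta^n)$. Combining this with the previous paragraph yields the clean formula $\kappa_n = -(\alpha^n + \beta^n)$.

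It then remains to pin down $\alpha, \beta$ and to unwind the closed form. The product $\alpha\beta = 2$ is already known, and evaluating $\kappa_n$ over $\FB$ itself ($n=1$) by hand gives $\kappa_1 = 1$, hence $\alpha + \beta = -1$. Thus $\alpha, \beta$ are the roots of $T^2 + T + 2$, i.e. $\alpha, \beta = \tfrac{-1 \pm \sqrt{-7}}{2}$. Using $(-1\pm\sqrt{-7})^n = (-1)^n(1\mp\sqrt{-7})^n$ and the binomial theorem (only the even-index terms survive, with $(\sqrt{-7})^{2i}=(-7)^i$), a direct computation gives
\[
\alpha^n + \beta^n = \frac{(-1+\sqrt{-7})^n + (-1-\sqrt{-7})^n}{2^n} = \frac{(-1)^n}{2^{n-1}} \sum_{i=0}^{\lfloor n/2 \rfloor} (-1)^i \binom{n}{2i} 7^i .
\]
Substituting into $K(1) = 1 + \kappa_n = 1 - (\alpha^n + \beta^n)$ produces exactly the claimed expression.

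The main obstacle is the geometric input of the second paragraph: justifying that $\mathcal{C}$ has genus $1$ (so that precisely two eigenvalues occur), that the product of eigenvalues equals the base-field size $2$, and carrying out the bookkeeping of the points over $x=0$ and $x=\infty$. A more elementary route, closer to Carlitz's original argument, avoids algebraic geometry entirely by proving directly that the sequence $\kappa_n$ satisfies the three-term linear recurrence $\kappa_n = -\kappa_{n-1} - 2\kappa_{n-2}$ by manipulation of the defining character sums; solving it with the initial values $\kappa_1 = 1$ and $\kappa_2 = 3$ (both checked by direct enumeration over $\FB$ and $\mathbb{F}_4$) gives the same closed form. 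The delicate point in that approach is establishing the recurrence itself, since the fields $\mathbb{F}_{2^{n-1}}$ are not subfields of $\F$ and the reduction must be effected through generating functions rather than field nesting.
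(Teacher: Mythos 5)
Your argument is correct, but it is worth noting that the paper itself gives no proof of this proposition at all: it is quoted verbatim from Carlitz (Formula~(6.8) of the cited 1969 paper), so any comparison is really with Carlitz's argument. Your main route is genuinely different from his. You count affine points on the Artin--Schreier curve $y^2+y=x+x^{-1}$ via $\#\{y: y^2+y=c\}=1+(-1)^{Tr(c)}$, getting $\kappa_n:=K(1)-1=N-(2^n-1)$; the two simple poles of $x+x^{-1}$ give genus $1$ and total ramification over $x=0,\infty$, so Hasse--Weil yields $\kappa_n=-(\alpha^n+\beta^n)$ with $\alpha\beta=2$, and the single evaluation $\kappa_1=1$ pins down $\alpha+\beta=-1$, i.e.\ $T^2+T+2$; the binomial expansion of $\bigl((-1\pm\sqrt{-7})/2\bigr)^n$ then gives exactly the stated closed form (I checked $n=2,3$ independently: $\kappa_2=3$, $\kappa_3=-5$). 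This buys a short, conceptual proof at the price of importing the genus formula for Artin--Schreier covers and Hasse's theorem. Carlitz's own derivation is the elementary one you sketch in your last paragraph: he works with generating functions for the lifted Kloosterman sums and obtains the linear recurrence $\kappa_n=-\kappa_{n-1}-2\kappa_{n-2}$ (equivalently, that $\kappa_n$ is a power sum for $T^2+T+2$) without any algebraic geometry; as you note, the subtlety there is that the recurrence relates sums over non-nested fields, which is exactly what the generating-function manipulation handles. Either version is a complete and acceptable proof of the statement; only the two small verifications $\kappa_1=1$ and (for the recurrence route) $\kappa_2=3$ need to be recorded explicitly.
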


\begin{theorem}\label{th:x^7}
Let \(G_3: x \mapsto x^7\) over \(\F\) with \(n \geq 4\). Then, its differential spectrum is given by:
\begin{itemize}
\item if \(n\) is odd,
\begin{eqnarray*}
\omega_6 & = & \frac{2^{n-2}+1}{6} - \frac{K(1)}{8} \\
\omega_4 & = & 0 \\
\omega_2 & = & 2^{n-1}-3 \omega_6 \\
\omega_0 & = & 2^{n-1} + 2 \omega_6;
\end{eqnarray*}

\item if \(n\) is even,
\begin{eqnarray*}
\omega_6 & = & \frac{2^{n-2}-4}{6} + \frac{K(1)}{8} \\
\omega_4 & = & 1 \\
\omega_2 & = & 2^{n-1}-3 \omega_6 -2 \\
\omega_0 & = & 2^{n-1} + 2 \omega_6 +1.
\end{eqnarray*}
\end{itemize}
where \(K(1)\) is the Kloosterman sum defined as in Proposition~\ref{prop: kloos}. In particular, \(G_3\) is 
differentially \(6\)-uniform for all \(n \geq 6\).
\end{theorem}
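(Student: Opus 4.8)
The plan is to reduce everything to counting the roots of a cubic subject to a trace condition, to extract the single nontrivial spectrum value $\omega_6$ through a character sum that turns out to be a Kloosterman sum, and then to recover $\omega_0,\omega_2$ from the elementary identities of Lemma~\ref{ident}.

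First I would specialize the general results to $t=3$. Theorem~\ref{th: 2^t-1} gives $\delta(0)=2^{\gcd(3,n)}-2$ and $\delta(1)=2^{\gcd(2,n)}$, the latter being $2$ for odd $n$ and $4$ for even $n$, and it guarantees $\delta(b)\in\{0,2,6\}$ for every $b\in\F\setminus\FB$. Since the value $4$ can only arise from $\delta(1)$, this already yields $\omega_4=0$ for odd $n$ and $\omega_4=1$ for even $n$. Next I would invoke Theorem~\ref{thsys} with $t=3$: for $b\neq 1$ one has $\delta(b)=2M_b$, where after factoring out $y$ the number $M_b$ counts the $y\in\F^*$ with $y^3+y=b+1$ and $Tr(y)=0$. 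Writing $c=b+1$, the cubic $y^3+y+c$ has derivative $(y+1)^2$, hence is separable for $c\neq 0$ and has $0$, $1$ or $3$ roots in $\F$; since its three roots sum to $0$, an even number of them have trace $1$, so $M_b\in\{1,3\}$ when the cubic splits and $M_b\in\{0,1\}$ otherwise. This reproves $\delta(b)\in\{0,2,6\}$ and shows that $\delta(b)=6$ holds exactly when $y^3+y+c$ splits over $\F$ with all three roots of trace~$0$.

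The heart of the argument, and the step I expect to be the main obstacle, is to count these ``fully split, all-trace-zero'' values. The key observation is that a root $y\notin\FB$ lies in a fiber of size $3$ if and only if the complementary quadratic factor $Y^2+yY+(y^2+1)$ splits, i.e. if and only if $Tr(1/y)=Tr(1)$. Hence, with $\epsilon=(-1)^{Tr(1)}=(-1)^n$, the sum
\[
\Phi=\sum_{\substack{y\in\F\setminus\FB\\ Tr(1/y)=Tr(1)}}(-1)^{Tr(y)}
=\frac12\sum_{y\in\F\setminus\FB}(-1)^{Tr(y)}\bigl(1+\epsilon(-1)^{Tr(1/y)}\bigr)
\]
splits into an elementary term and the term $\tfrac{\epsilon}{2}\sum_{y\neq 0,1}(-1)^{Tr(y+1/y)}$, which by the definition in Proposition~\ref{prop: kloos} equals $\tfrac{\epsilon}{2}(K(1)-2)$. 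On the other hand $\Phi=4\omega_6-R_3$, where $R_3$ is the number of size-$3$ fibers: each split cubic contributes $\sum_i(-1)^{Tr(y_i)}=3$ when all its roots have trace $0$ and $-1$ otherwise, and a short check shows that the contributions of $b\in\{0,1\}$ exactly account for the boundary fibers $c\in\{0,1\}$, so that the good fibers number precisely $\omega_6$. Since the size-$3$ fibers are the set $\{y\notin\FB: Tr(1/y)=Tr(1)\}$ partitioned into triples, $R_3$ is an elementary count obtained from $\#\{u\neq 0,1: Tr(u)=Tr(1)\}$ via $u=1/y$. Solving $\omega_6=(\Phi+R_3)/4$ then produces the stated closed form for $\omega_6$ in each parity.

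Finally I would close the spectrum. Lemma~\ref{ident} supplies $\omega_0+\omega_2+\omega_4+\omega_6=2^n$ and $2\omega_2+4\omega_4+6\omega_6=2^n$, and substituting the known $\omega_4$ and $\omega_6$ yields $\omega_2=2^{n-1}-2\omega_4-3\omega_6$ and $\omega_0=2^{n-1}+\omega_4+2\omega_6$, matching both cases of the statement. For the uniformity assertion it remains to show $\omega_6\geq 1$ for $n\geq 6$; I would feed the Weil bound $|K(1)|\leq 2^{1+n/2}$ into the two formulas for $\omega_6$, which settles all $n$ except the borderline even case $n=6$, where one invokes the fact that the extremal Kloosterman value is not attained (or simply evaluates $K(1)$ through Proposition~\ref{prop: kloos}).
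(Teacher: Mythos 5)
Your argument is correct, and it reaches the stated formulas by a genuinely different decomposition than the paper's. Both proofs share the same skeleton: reduce to the cubic $y^3+y+c$ via Theorem~\ref{thsys}, observe that $\delta(b)\in\{0,2,6\}$ for $b\neq 1$ with the value $4$ reserved for $\delta(1)$, extract the Kloosterman sum from $\sum_{y\neq 0,1}(-1)^{Tr(y+1/y)}=K(1)-2$, and close the spectrum with Lemma~\ref{ident}. The difference is which entry you compute directly. The paper computes $\omega_0=\nu_0$ (Theorem~\ref{th:cubeq2}): it splits $\nu_0$ into the number of $\beta$ for which $Q_\beta$ has no root --- imported wholesale from Proposition~\ref{cubeq} --- plus the number $B$ of $\beta$ whose unique root has trace $1$, and it is $B$ that becomes a Kloosterman count after the identity $Tr(1/(y^3+y))=Tr(1/y)$. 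You instead compute $\omega_6$ directly, characterizing the size-$3$ fibers by the splitting of the complementary quadratic ($Tr(1/y)=Tr(1)$, which is the same identity in disguise) and weighting each fiber by $\sum_i(-1)^{Tr(y_i)}\in\{3,-1\}$; your $R_3$ is then an elementary count that rederives the $M_3$ entry of Proposition~\ref{cubeq} rather than citing it. I verified your character-sum evaluation in both parities: it reproduces $\omega_6=\frac{2^{n-2}+1}{6}-\frac{K(1)}{8}$ for odd $n$ and $\frac{2^{n-2}-4}{6}+\frac{K(1)}{8}$ for even $n$, and the linear system then gives the stated $\omega_0,\omega_2$. What your route buys is self-containedness (no external root-counting proposition) and a cleaner conceptual target (the quantity of interest, $\omega_6$, is computed in one stroke); what the paper's route buys is that the intermediate quantity $\nu_0$ is exactly the $\omega_0$ needed later and reuses a standard cited count. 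Two small polish points: your identification of the good fibers with $\omega_6$ should state explicitly that $c=b+1$ ranges over $\F^*$ as $b$ ranges over $\F\setminus\{1\}$ and that $c=0$ yields no fiber outside $\FB$, which is cleaner than the ``boundary fibers $c\in\{0,1\}$'' phrasing; and for $n=6$ the sharper bound $K(1)\geq -2^{n/2+1}+1$ of Lachaud--Wolfmann (cited in the paper) settles positivity of $\omega_6$ without any extremality discussion.
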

To prove this theorem,  we need some preliminary results.
We first recall some basic results on cubic equations.
\begin{lemma}\label{cubic} {\rm \cite{BerRumSol67}}
The cubic equation $x^3+ax+b=0$, where $a \in \F$  and $b \in \F^*$
has a unique solution in $\F$ if and only if $Tr(a^3/b^2) \neq Tr(1)$.
In particular, if it has three distinct roots in $\F$, then
$Tr(a^3/b^2)=Tr(1)$.
\end{lemma}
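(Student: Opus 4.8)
The plan is to read off the number of roots of $f(x)=x^3+ax+b$ in $\F$ from the factorization type of $f$ over $\F$, which over a finite field is controlled by a single resolvent quantity, and then to convert that into an Artin-Schreier trace condition. First I would record that $b\neq 0$ forces $f$ to be separable: its formal derivative is $f'(x)=x^2+a$, whose only root $\sqrt a$ satisfies $f(\sqrt a)=b\neq 0$, so $\gcd(f,f')=1$. Hence $f$ has no repeated root and, since the Galois group of the splitting field over $\F$ is cyclic, exactly one of three possibilities occurs: $f$ splits into three distinct linear factors (three roots in $\F$), $f$ is a linear factor times an irreducible quadratic (a unique root in $\F$), or $f$ is irreducible (no root in $\F$). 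Thus ``unique solution'' is precisely the middle case, and the goal becomes a criterion that isolates it.

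Next I would introduce the Berlekamp-Rumsey-Solomon resolvent. Writing $r_1,r_2,r_3$ for the roots of $f$ in the algebraic closure, set
\[u=r_1^2r_2+r_2^2r_3+r_3^2r_1,\qquad v=r_1^2r_3+r_2^2r_1+r_3^2r_2.\]
The key observation is that the cyclic permutations of the $r_i$ fix both $u$ and $v$, while every transposition interchanges them. Over $\FF_{2^n}$ the Galois action is generated by the identity (split case), by the $3$-cycle (irreducible case), or by the transposition of the two conjugate roots (linear-times-quadratic case); combined with the trichotomy above this yields the clean dichotomy $u\in\F$ if and only if $f$ has zero or three roots in $\F$, whereas $u\notin\F$ if and only if $f$ has a unique root in $\F$.

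It then remains to place $u$ and $v$ as the roots of an explicit quadratic over $\F$. Using the elementary symmetric functions $e_1=0$, $e_2=a$, $e_3=b$ together with Newton's identities in characteristic $2$, I would compute $u+v=b$ and $uv=a^3+b^2$, so that $u,v$ are the roots of $T^2+bT+(a^3+b^2)$. Since $u+v=b\neq 0$ we have $u\neq v$, so this quadratic is separable, and $u\notin\F$ exactly when it is irreducible over $\F$. The usual Artin-Schreier test (substituting $T=bS$) shows that $T^2+bT+(a^3+b^2)$ is irreducible if and only if $Tr\!\left(\frac{a^3+b^2}{b^2}\right)=1$, i.e. $Tr(a^3/b^2)+Tr(1)=1$, which is exactly $Tr(a^3/b^2)\neq Tr(1)$. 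This proves that $f$ has a unique root in $\F$ iff $Tr(a^3/b^2)\neq Tr(1)$, and the complementary case ($u\in\F$, quadratic reducible) gives the ``three distinct roots $\Rightarrow Tr(a^3/b^2)=Tr(1)$'' statement as a by-product.

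The step I expect to be the main obstacle is the symmetric-function computation of $uv$: one must expand a product of nine monomials and reduce it correctly in characteristic~$2$, where the $3e_3$, $3e_3^2$ and sign contributions collapse. One must also be careful to argue that the resolvent genuinely separates the $C_3$ (zero roots) case from the $C_2$ (one root) case, rather than merely detecting the discriminant, which is always a square here and therefore useless. Verifying the identities $u+v=b$, $uv=a^3+b^2$ and the whole criterion on a small field such as $\FF_4$ would be a safe guard against bookkeeping errors.
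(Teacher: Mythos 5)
The paper does not prove this lemma at all: it is quoted verbatim from Berlekamp--Rumsey--Solomon \cite{BerRumSol67}, so there is no in-paper argument to compare against. Your proof is correct and complete, and it is essentially the classical resolvent argument of that reference. The points you flagged as potential obstacles do check out. Separability: $f'(x)=x^2+a$ has the single root $a^{1/2}$ and $f(a^{1/2})=a^{3/2}+a^{3/2}+b=b\neq 0$, so $\gcd(f,f')=1$; combined with cyclicity of the Galois group this gives exactly the three factorization types you list. The symmetric-function computation: with $e_1=0$, $e_2=a$, $e_3=b$ one has $u+v=e_1e_2-3e_3=3b=b$ and $uv=e_2^3+9e_3^2+e_1^3e_3-6e_1e_2e_3=a^3+9b^2=a^3+b^2$ in characteristic $2$, as you claimed; and $u+v=b\neq 0$ is precisely what guarantees that the resolvent separates the transposition case (where $u\mapsto v\neq u$, so $u\notin\F$) from the $3$-cycle and split cases (where $u$ is Galois-fixed, so $u\in\F$) --- your remark that the discriminant itself would be useless here is exactly right, since $(u+v)^2=b^2$ plays the role of the discriminant in characteristic $2$ and is always a square. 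The final Artin--Schreier step, $T^2+bT+(a^3+b^2)$ irreducible over $\F$ if and only if $Tr\bigl((a^3+b^2)/b^2\bigr)=1$, i.e. $Tr(a^3/b^2)\neq Tr(1)$, is standard, and the ``three distinct roots'' clause follows as the contrapositive. No gaps.
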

\begin{proposition}\label{cubeq}{\rm \cite[Appendix]{KuHeCaHa96}}
Let $f_a(x)=x^3+x+a$ and 
\[
M_i=\#\{~a\in\F^*~|~\mbox{$f_a(x)=0$ has precisely $i$ solutions in $\F$}~\}.
\]
Then, we have for odd $n$
\[
M_0=\frac{2^n+1}{3},~M_1=2^{n-1}-1,~M_3=\frac{2^{n-1}-1}{3}
\]
and for even $n$
\[
M_0=\frac{2^n-1}{3},~M_1=2^{n-1},~M_3=\frac{2^{n-1}-2}{3}.
\]
\end{proposition}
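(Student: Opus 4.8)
The plan is to reduce the whole count to a single application of Lemma~\ref{cubic} together with two elementary counting identities, bypassing the cited Kloosterman-sum machinery entirely. First I would note that, in characteristic~$2$, $f_a'(x) = 3x^2 + 1 = (x+1)^2$, so the only candidate for a multiple root of $f_a$ is $x=1$, and $f_a(1) = a$. Hence for every $a \in \F^*$ the cubic $f_a$ is separable. A separable cubic over $\F$ factors as an irreducible cubic, a linear times an irreducible quadratic, or three \emph{distinct} linear factors, so its number of roots in $\F$ lies in $\{0,1,3\}$. In particular no value of $a$ contributes a term ``$M_2$'', and counting $a$ over $\F^*$ gives
\[
M_0 + M_1 + M_3 = \#\F^* = 2^n - 1 .
\]

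Next I would set up a pair-counting identity. Writing $f_a(x)=0$ as $x^3 + x = a$ (legitimate since $-a=a$), the number of roots of $f_a$ equals the size of the fibre over $a$ of the map $\psi(x) = x^3 + x = x(x+1)^2$. Summing fibre sizes over $a \in \F^*$ counts the $x \in \F$ with $\psi(x)\neq 0$, and $\psi(x)=0$ holds exactly for $x \in \{0,1\}$; the left-hand side being $0\cdot M_0 + 1\cdot M_1 + 3\cdot M_3$, this yields the second relation
\[
M_1 + 3M_3 = 2^n - 2 .
\]

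The crucial third input is a direct evaluation of $M_1$. I would apply Lemma~\ref{cubic} to $f_a(x) = x^3 + 1\cdot x + a$, whose linear coefficient is $1$ and whose constant term is $a \in \F^*$, so the lemma's quantity $a^3/b^2$ specializes to $1/a^2$. Since for $a \neq 0$ ``unique solution'' means exactly one root, the lemma gives $M_1 = \#\{a \in \F^* : Tr(1/a^2) \neq Tr(1)\}$. As $a \mapsto 1/a^2$ is a bijection of $\F^*$ (the Frobenius square followed by inversion), I may replace $1/a^2$ by a free variable $c \in \F^*$ and reduce to $\#\{c \in \F^* : Tr(c) \neq Tr(1)\}$. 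Using $Tr(1) = n \bmod 2$ and that each trace class in $\F$ has $2^{n-1}$ elements, a short split on the parity of $n$ gives $M_1 = 2^{n-1}-1$ for odd $n$ (where $Tr(1)=1$, and $c=0$ is removed from the class $Tr(c)=0$) and $M_1 = 2^{n-1}$ for even $n$ (where $Tr(1)=0$ and the class $Tr(c)=1$ avoids $0$).

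Finally I would solve the resulting linear system: substituting $M_1$ into the second relation yields $M_3$, and the first relation then yields $M_0$, recovering $M_3 = (2^{n-1}-1)/3,\ M_0 = (2^n+1)/3$ for odd $n$ and $M_3 = (2^{n-1}-2)/3,\ M_0 = (2^n-1)/3$ for even $n$. This proposition carries no deep obstacle; the only point demanding care is the very first step, namely securing separability so that the value $2$ never occurs and the trichotomy $\{0,1,3\}$ is exhaustive. Once $M_2=0$ is established, everything else is bookkeeping, and it is precisely the shape of the trace condition $Tr(1/a^2)$ that lets Lemma~\ref{cubic} close the argument without any Kloosterman sum.
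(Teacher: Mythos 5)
Your proof is correct. Note, however, that the paper contains no proof of this proposition to compare against: it imports the result by citation from the appendix of~\cite{KuHeCaHa96}. What your argument supplies is therefore a self-contained replacement, built entirely from material the paper already has on hand. All three ingredients are sound: (i) separability --- $f_a'(x)=(x+1)^2$ in characteristic~$2$ and $f_a(1)=a\neq 0$ --- legitimately rules out a repeated root, so the root count lies in $\{0,1,3\}$ and $M_0+M_1+M_3=2^n-1$; (ii) the fibre count of $\psi(x)=x^3+x=x(x+1)^2$ over $\F^*$ gives $M_1+3M_3=2^n-2$, since $\psi(x)=0$ exactly for $x\in\FB$; (iii) Lemma~\ref{cubic}, applied with linear coefficient $1$ and constant term $a$, gives $M_1=\#\{a\in\F^*\,:\,Tr(a^{-2})\neq Tr(1)\}$, and because $a\mapsto a^{-2}$ permutes $\F^*$ while each trace class of $\F$ has $2^{n-1}$ elements, this equals $2^{n-1}-1$ for odd $n$ and $2^{n-1}$ for even $n$. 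Solving the resulting linear system does return the stated values of $M_3$ and $M_0$ in both parities, and the needed divisibility by~$3$ holds since $2^{n-1}\equiv 1 \pmod 3$ for odd $n$ and $2^{n-1}\equiv 2 \pmod 3$ for even $n$. Since Lemma~\ref{cubic} is already quoted in the paper (it is used in the proof of Theorem~\ref{th:cubeq2}), your derivation would allow the proposition to stand without the external dependency, at the cost of one short counting argument.
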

Now we are going  to solve the equations $P_b(x)=0$
(see Theorem \ref{th: 2^t-1})
by solving a system of equations, including a cubic
equation, thanks to the equivalence presented in Theorem~\ref{thsys}.

\begin{theorem}\label{th:cubeq2}
Let 
\[
P_b(x)= x^8 + b x^2 + (b+1) x, ~b\in \F\setminus\{1\}
\]
The number $\nu_0$ of $b \in \F\setminus\{1\}$ such that
$P_b$ has no  roots in $\F\setminus\{0,1\}$ is given by
\[\nu_0 = \frac{2^n + (-1)^{n+1}}{3} + 2^{n-2} + (-1)^n
\frac{K(1)}{4}\]
where \(K(1)\) is the Kloosterman sum defined as in 
Proposition~\ref{prop: kloos}.
\end{theorem}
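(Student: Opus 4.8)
The plan is to reduce the count $\nu_0$ to a question about the cubic $f_a(y)=y^3+y+a$, and then to evaluate it by combining the root-count statistics of Proposition~\ref{cubeq} with a trace condition that I would express through the Kloosterman sum of Proposition~\ref{prop: kloos}.

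First I would apply Theorem~\ref{thsys} with $t=3$. For $b\neq 1$, writing $a=b+1\in\F^*$, the number of roots of $P_b$ in $\F\setminus\{0,1\}$ equals $2M_b$, where $M_b$ counts the nonzero $y$ with $y^4+y^2+ay=0$ and $Tr(y)=0$; dividing by $y\neq 0$ this is the cubic $f_a(y)=y^3+y+a=0$. Since $a\neq 0$ forces every root of $f_a$ to lie in $\F^*$, the polynomial $P_b$ has no root outside $\{0,1\}$ exactly when $f_a$ has no root of trace $0$. As $b$ runs over $\F\setminus\{1\}$, $a$ runs over $\F^*$, so $\nu_0=\#\{a\in\F^*:\text{every root of }f_a\text{ has }Tr=1\}$. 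Next I would split $\F^*$ according to the number of $\F$-roots of $f_a$, which is $0$, $1$ or $3$, with populations $M_0,M_1,M_3$ given by Proposition~\ref{cubeq}. The $0$-root values contribute vacuously, giving $M_0$. The crucial observation is that the $3$-root case never contributes: the vanishing of the $y^2$-coefficient of $f_a$ gives $y_1+y_2+y_3=0$, hence $Tr(y_1)+Tr(y_2)+Tr(y_3)=0$, so the number of roots of trace $1$ is even and at least one root has trace $0$. Therefore $\nu_0=M_0+M_1^{(1)}$, where $M_1^{(1)}$ is the number of $a$ for which $f_a$ has a unique root and that root has trace $1$.

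To compute $M_1^{(1)}$ I would parametrize by the root itself, since $y\mapsto a=y^3+y$ is a bijection from the unique-root set onto the corresponding $a$'s. Writing $\phi(y)=y^3+y$ and factoring $\phi(y')+\phi(y)=(y'+y)\big(y'^2+y'y+y^2+1\big)$, the additional roots are exactly the solutions of $y'^2+yy'+(y^2+1)=0$; for $y\notin\{0,1\}$ this quadratic has two distinct $\F$-roots iff $Tr\big((y^2+1)/y^2\big)=Tr(1/y)+Tr(1)=0$. Hence $f_{\phi(y)}$ has a unique root precisely when $Tr(1/y)\neq Tr(1)$, and
\[
M_1^{(1)}=\#\{y\in\F\setminus\{0,1\}\;:\;Tr(1/y)\neq Tr(1),\ Tr(y)=1\}.
\]

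Finally I would turn this into a character sum by inserting the indicators $\tfrac12\big(1-(-1)^{Tr(y)}\big)$ and $\tfrac12\big(1-(-1)^{Tr(1/y)+Tr(1)}\big)$ and expanding over $y\in\F\setminus\{0,1\}$. Three of the four resulting sums are elementary, reducing to $\sum_{\F}(-1)^{Tr(\cdot)}=0$ together with the boundary terms at $y\in\{0,1\}$, while the cross term equals $(-1)^{Tr(1)}\sum_{y\neq 0,1}(-1)^{Tr(1/y+y)}=(-1)^n\big(K(1)-2\big)$ once the $y=1$ term and the extension-to-$0$ convention of Proposition~\ref{prop: kloos} are accounted for. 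Collecting everything should give $M_1^{(1)}=2^{n-2}+(-1)^n K(1)/4$; adding $M_0=(2^n+(-1)^{n+1})/3$ from Proposition~\ref{cubeq} then yields the claimed value of $\nu_0$. The main obstacle is precisely this last step: one must track the parity-dependent signs and the boundary corrections at $y\in\{0,1\}$ carefully, so that exactly the normalized Kloosterman sum $K(1)$ of Proposition~\ref{prop: kloos} (with its value extended to $0$) emerges rather than an unnormalized variant.
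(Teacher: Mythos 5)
Your proposal is correct and follows essentially the same route as the paper: reduction via Theorem~\ref{thsys} to the cubic $y^3+y+a$, the decomposition $\nu_0=M_0+B$ using Proposition~\ref{cubeq} and the observation that the three-root case always yields a trace-zero root, and the evaluation of $B=\#\{y\notin\FB:\;Tr(1/y)\neq Tr(1),\ Tr(y)=1\}$ through the Kloosterman sum $K(1)$. The only (harmless) variations are that you derive the unique-root criterion by factoring $\phi(y')+\phi(y)$ rather than invoking Lemma~\ref{cubic} together with the identity $Tr\bigl(1/(y^3+y)\bigr)=Tr(1/y)$, and that you finish with a single character-sum computation instead of the paper's separate odd/even case analysis.
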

\begin{proof}
Let \(b \in \F \setminus \{1\}\).
According to Theorem \ref{thsys}  we know that the number (denoted by $N_b$)
of  roots in  \(\F\setminus \FB\) of 
$P_b$  is twice the number of roots in \(\F^*\) of the following system
where $\beta=b+1$:
\begin{eqnarray}\label{eqsys7}
\left\{
\begin{array}{rcl}
Q_\beta(y)&=& y^3+y+\beta = 0\\
Tr(y) &=& 0.
\end{array}
\right.
\end{eqnarray}
Since $\beta\neq 0$,  $Q_\beta(y)\neq 0$ for $y\in\FB$.
Then, for any \(\beta \neq 0\), the following situations may occur:
\begin{itemize}
\item \(Q_{\beta}\) has no root in \(\F\). In this case,
  \(N_b =0\).
\item \(Q_{\beta}\) has a unique root \(y \in \F\). From
  Lemma~\ref{cubic}, this
  occurs if and only if \(Tr(\beta^{-1}) \neq Tr(1)\). In this case,
  \(N_b=0\) if \(Tr(y)=1\) and \(N_b=2\) if \(Tr(y)=0\).
\item \(Q_\beta\) has three roots \(y_1, y_2, y_3 \in \F\).  
Since these roots are roots of a linear polynomial of 
degree $4$ then
$y_3=y_1+y_2$, implying $Tr(y_3)=Tr(y_1)+Tr(y_2)$. Then, at least one
  \(y_i\) is such that \(Tr(y_i)=0\). It follows that, in this case,
  \(N_b\) is either~\(6\) or~\(2\).
\end{itemize}
Let us now define 
\[B = \#\{\beta \in \F^*, \;\; Q_\beta \mbox{ has a unique root } y \in \F \mbox{ and } Tr(y) = 1\}.\]
From the previous discussion, we have
\begin{eqnarray*}
\nu_0 & = & \#\{\beta \in \F^*, \;\; Q_\beta \mbox{ has
  no root in } \F\} + B\\
& = & \frac{2^n + (-1)^{n+1}}{3} + B
\end{eqnarray*}
where the last equality comes from Proposition~\ref{cubeq}.
Let us now compute the value of \(B\). 
\begin{eqnarray*}
B & = & \#\{\beta \in \F^*, \;\; Q_\beta \mbox{ has a unique root } y \in \F \mbox{ and } Tr(y) = 1\} \\
& = & \#\{(y^3+y) \in \F^*, \;\;  Tr\left(\frac{1}{y^3+y}\right) \neq Tr(1) \mbox{ and } Tr(y) = 1\}\;,
\end{eqnarray*}
by using that \(\beta = y^3+y\).
But, we have
\[
\frac{1}{y^3+y}=\frac{1+y^2}{y^3+y}+\frac{y^2+y}{y^3+y}
+\frac{y}{y^3+y}=\frac{1}{y}+\frac{1}{y+1}+\frac{1}{y^2+1},
\]
implying that 
\[Tr\left(\frac{1}{y^3+y}\right) = Tr\left(\frac{1}{y}\right).\]
Therefore, 
\[B = \#\{(y^3+y) \in \F^*, \;\; Tr\left(\frac{1}{y}\right)
\neq Tr(1) \mbox{ and } Tr(y) = 1\}.\]
Now, we clearly have that \((y^3+y) =0\) if and only if \(y \in \FB\).
Moreover, two distinct elements \(y_1\) and \(y_2\) in \(\F \setminus \FB\) with \(Tr(y_1^{-1}) \neq Tr(1)\) and \(Tr(y_2^{-1}) \neq Tr(1)\) satisfy \((y_1^3+y_1) \neq (y_2^3+y_2)\) (otherwise, \(Q_\beta\) with \(\beta = y_1^3+y_1\) has at least \(2\)~roots in \(\F\)). 
Therefore, we deduce that
\[B = \#\{y \in \F \setminus \FB, \;\; Tr\left(\frac{1}{y}\right)
\neq Tr(1) \mbox{ and } Tr(y) = 1\}.\]

If \(n\) is odd, we deduce that
\[B = \#\{y \in \F \setminus \FB, \;\; Tr\left(\frac{1}{y}\right)
=0 \mbox{ and } Tr(y) = 1\}.\]
If \(n\) is even, we deduce that
\begin{eqnarray*}
B & = & \#\{y \in \F \setminus \FB, \;\; Tr\left(\frac{1}{y}\right)
=1 \mbox{ and } Tr(y) = 1\}\\
& = & \#\{y \in \F \setminus \FB,  \;\;Tr(y) = 1\} \\
& & \quad  - \#\{y \in
\F \setminus \FB, \;\; Tr\left(\frac{1}{y}\right) 
=0 \mbox{ and } Tr(y) = 1\} \\
& = & 2^{n-1} - \#\{y \in\F \setminus \FB, \;\; Tr\left(\frac{1}{y}\right) 
=0 \mbox{ and } Tr(y) = 1\}\;.
\end{eqnarray*}
On the other hand, by definition of the Kloosterman sum~\(K(1)\), we
have
\begin{eqnarray*}
K(1) -2 & = & \sum_{x \in \F \setminus \FB} (-1)^{Tr(x^{-1}+x)} \\
& = & -2 \#\{x \in \F \setminus \FB, Tr(x^{-1}+x)=1\} + 2^{n}-2\\
& = & -4 \#\{x \in \F \setminus \FB , Tr(x^{-1})=0 \mbox{ and }
Tr(x)=1\} + 2^{n}-2.
\end{eqnarray*}
Thus,
\[\#\{x \in \F \setminus \FB , Tr(x^{-1})=0 \mbox{ and }
Tr(x)=1\} =  2^{n-2} - \frac{K(1)}{4}.\]
We then deduce that, for any \(n\),
\[B = 2^{n-2} +(-1)^n \frac{K(1)}{4}.\]
It follows that
\[\nu_0 = \frac{2^n + (-1)^{n+1}}{3} +2^{n-2} +(-1)^n
\frac{K(1)}{4}.\]
\end{proof}
\begin{proof} (Proof of Theorem \ref{th:x^7})
 In accordance with Lemma \ref{ident}, we  obtain the differential 
spectrum of $G_3$ as soon as we are able to solve the 
following system:
\begin{equation}\label{syseq7}
\begin{array}{c}
\omega_0+\omega_2+\omega_4+\omega_6=2^n\\
2\omega_2+4\omega_4+6\omega_6=2^n
\end{array}
\end{equation}
Now,  we apply Theorem \ref{th: 2^t-1} and we recall first that
$\delta(b)\in\{0,2,6\}$ for any $b\in\F\setminus\{1\}$.
Moreover, we know that $\omega_0=\nu_0$  as defined in Theorem \ref{th:cubeq2}.

Since $t=3$,  $\gcd(t-1,n)$ equals $1$ for odd $n$ and $2$ otherwise.
Then, if $n$ is even then $\delta(1)=4$ else $\delta(1)=2$.
Thus, $\omega_4=1$ for even $n$ and $\omega_4=0$ otherwise.
From the second equation of  (\ref{syseq7}), we get
\[
\omega_2= 2^{n-1}-3\omega_6-2 \omega_4\]
and using the first equation of  (\ref{syseq7})
\[
\omega_6= 2^n - \omega_0 - \omega_2 - \omega_4 = 2^{n-1} - \omega_0  + \omega_4 + 3\omega_6\;,
\]
leading to
\[\omega_6=-2^{n-2}+\frac{\omega_0-  \omega_4}{2}\;.\]
Finally, we deduce from Theorem~\ref{th:cubeq2} that, for odd $n$,
\begin{eqnarray*}
\omega_6 &=& -2^{n-2}+\frac{\omega_0}{2}=-2^{n-3}+\frac{2^n + 1}
{6}- \frac{K(1)}{8}\\
&=& \frac{2^{n-2} + 1}{6}-\frac{K(1)}{8}
\end{eqnarray*}
and for even $n$
\begin{eqnarray*}
\omega_6 &=& -2^{n-2}+\frac{\omega_0-1}{2}=-2^{n-3}+\frac{2^n - 1}
{6}+ \frac{K(1)}{8}-\frac{1}{2}\\
&=& \frac{2^{n-2}-4}{6}+\frac{K(1)}{8}.
\end{eqnarray*}
Finally, it can be proved that \(\omega_6 \geq 1\) for any \(n \geq 6\), implying that \(G_3\) is differentially \(6\)-uniform. Actually, it has been proved in~\cite[Th.~3.4]{Lachaud_Wolfmann90} that 
\[-2^{\frac{n}{2}+1}+1 \leq K(1) \leq 2^{\frac{n}{2}+1}+1\;\]
implying that \(\omega_6 >0\) when \(n > 5\).
It is worth noticing that \(G_3\) is APN when \(n=5\) since its inverse is 
the quadratic APN permutation \(x \mapsto x^9\). When \(n=4\), \(G_3\) is
locally-APN, and not APN,  since it corresponds to the inverse function over 
\({\mathbb F}_{2^4}\). 
\end{proof}

By combining the previous theorem and Corollary~\ref{cormain}, we deduce the differential spectrum of  \(G_{n-2}: x \mapsto x^{2^{n-2}-1}\) over \(\F\). 
\begin{corollary}\label{coro5}
Let \(G_{n-2}: x \mapsto x^{2^{n-2}-1}\) over \(\F\) with \(n \geq 6\). 
Then, we have:
\begin{itemize}
\item if \(\gcd(n,3)=1\), \(G_{n-2}\) is differentially \(6\)-uniform and for any \(b \in \F\), \(\delta(b) \in \{0, 2, 6\}\). Moreover, its differential spectrum is given by:
\begin{eqnarray*}
\omega_6 & = & \left\{\begin{array}{ll}\frac{2^{n-2}+1}{6} - \frac{K(1)}{8} & \mbox{ for odd \(n\)}  \\
\frac{2^{n-2}-4}{6} + \frac{K(1)}{8} & \mbox{ for even \(n\)}  \end{array}\right. \\
\omega_2 & = & 2^{n-1}-3 \omega_6 \\
\omega_0 & = & 2^{n-1} + 2 \omega_6\;;
\end{eqnarray*}

\item if \(3\) divides \(n\), \(G_{n-2}\) is differentially \(8\)-uniform and for any \(b \in \F\), \(\delta(b) \in \{0, 2, 6,8\}\). Moreover, its differential spectrum is given by:
\begin{eqnarray*}
\omega_8 & = & 1 \\
\omega_6 & = & \left\{\begin{array}{ll}\frac{2^{n-2}-5}{6} - \frac{K(1)}{8}  & \mbox{ for odd \(n\)}  \\
\frac{2^{n-2}-10}{6} + \frac{K(1)}{8} & \mbox{ for even \(n\)}  \end{array}\right. \\
\omega_2 & = & 2^{n-1}-3 \omega_6 -4\\
\omega_0 & = & 2^{n-1} + 2 \omega_6+3\;;
\end{eqnarray*}
\end{itemize}
\end{corollary}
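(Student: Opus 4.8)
The plan is to apply the symmetry property of Corollary~\ref{cormain} to the pair $t = n-2$ and $s = n-t+1 = 3$. Since $G_s = G_3 : x \mapsto x^7$ is precisely the function whose complete differential spectrum is given by Theorem~\ref{th:x^7}, and since Corollary~\ref{cormain} asserts the equality of the restricted multisets
\[
\{\delta_{n-2}(b),\; b \in \F \setminus \FB\} = \{\delta_3(b),\; b \in \F \setminus \FB\}\;,
\]
the two full spectra can differ only through the two special values $\delta(0)$ and $\delta(1)$. These are determined by $\gcd(n,2)$ and $\gcd(n,3)$ via Theorem~\ref{th: 2^t-1}, so the whole computation reduces to a careful bookkeeping of the $b \in \FB$ terms.

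First I would record the four relevant special values. By Theorem~\ref{th: 2^t-1}, $\delta_3(0) = 2^{\gcd(3,n)}-2$ (that is, $6$ if $3 \mid n$ and $0$ otherwise) and $\delta_3(1) = 2^{\gcd(2,n)}$ (that is, $4$ for even $n$ and $2$ for odd $n$); similarly $\delta_{n-2}(0) = 2^{\gcd(2,n)}-2$ (that is, $2$ for even $n$ and $0$ for odd $n$) and $\delta_{n-2}(1) = 2^{\gcd(3,n)}$ (that is, $8$ if $3 \mid n$ and $2$ otherwise). These agree with the boundary identities $\delta_s(0) = \delta_t(1)-2$ and $\delta_s(1) = \delta_t(0)+2$ of Corollary~\ref{cormain}, which serves as a useful consistency check.

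Next, from the restricted-multiset equality above, for every index $i$ the counts are linked by
\[
\omega_i(G_{n-2}) - \omega_i(G_3) = \#\{c \in \FB : \delta_{n-2}(c) = i\} - \#\{c \in \FB : \delta_3(c) = i\}\;,
\]
since removing the $b \in \FB$ contributions from either spectrum leaves the same restricted counts. It then suffices to split into the cases $\gcd(n,3)=1$ and $3 \mid n$ (and, inside each, the parity of $n$), substitute the explicit $\delta(0),\delta(1)$ values found above together with the closed forms for $\omega_i(G_3)$ from Theorem~\ref{th:x^7}, and simplify. For $\gcd(n,3)=1$ the corrections affect only the indices $\{0,2,4\}$ (and vanish entirely when $n$ is odd), leaving $G_{n-2}$ differentially $6$-uniform with $\delta(b)\in\{0,2,6\}$ and $\omega_4=0$; for $3 \mid n$ the value $\delta_{n-2}(1)=8=2^{\gcd(n-3,n)}$ appears, contributing $\omega_8 = 1$ exactly as predicted by Corollary~\ref{coro:delta}, and $G_{n-2}$ becomes differentially $8$-uniform with $\delta(b)\in\{0,2,6,8\}$. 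In both cases one recovers the stated formulas for $\omega_6,\omega_2,\omega_0$ (and $\omega_8$) by re-expressing $\omega_6(G_3)$ in terms of $\omega_6(G_{n-2})$.

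The hard part will be nothing more than this bookkeeping; the only point requiring genuine care is the clean separation of the two boundary terms $b \in \FB$ from the restricted spectrum, since these are the only places where $G_{n-2}$ and $G_3$ truly differ, and they shift between distinct indices depending on the residues of $n$ modulo $2$ and $3$. The hypothesis $n \geq 6$ ensures both that Theorem~\ref{th:x^7} applies to $G_3$ and that the resulting counts, in particular $\omega_6(G_{n-2})$, are nonnegative.
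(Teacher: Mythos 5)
Your proposal is correct and follows essentially the same route as the paper: apply Corollary~\ref{cormain} with $s=3$, $t=n-2$, use the equality of the restricted multisets over $b\in\F\setminus\FB$, and then adjust the counts $\omega_i$ of $G_3$ from Theorem~\ref{th:x^7} by the contributions of $b\in\FB$, split into the four cases determined by $\gcd(n,3)$ and the parity of $n$. The paper's proof is exactly this bookkeeping, tabulating $(\delta_3(0),\delta_3(1))$ and $(\delta_{n-2}(0),\delta_{n-2}(1))$ in each case and transferring the corrections index by index, just as you describe.
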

\begin{proof}
Let \((\omega_0', \omega_2', \omega_4', \omega_6')\) denote the differential spectrum of \(G_3\) over \(\F\). We apply Corollary \ref{cormain} (with $s=3$).
Then, if \(\gcd(3,n)=1\), \(\delta_3(0)=0\) and \(\delta_{n-2}(1)=2\). Otherwise, \(\delta_3(0)=6\) and \(\delta_{n-2}(1)=8\). Moreover, in both cases, \(\delta_3(1)=4\) for \(n\) even and \(\delta_3(1)=2\) for \(n\) odd. 
It follows that, 
\begin{itemize}
\item for  \(\gcd(3,n)=1\), \(n\) odd, we have \((\delta_3(0), \delta_3(1)) = (0,2)\) and \\\((\delta_{n-2}(0), \delta_{n-2}(1)) = (0,2)\). Then, \(\omega_i = \omega'_i\) for all \(i\);
\item for  \(\gcd(3,n)=1\), \(n\) even, we have \((\delta_3(0), \delta_3(1)) = (0,4)\) and \\\((\delta_{n-2}(0), \delta_{n-2}(1)) = (2,2)\). Then, \(\omega_0 = \omega_0'-1\), \(\omega_4 = \omega_4'-1\) and \(\omega_2 = \omega_2'+2\).
\item for  \(\gcd(3,n)=3\), \(n\) odd, we have \((\delta_3(0), \delta_3(1)) = (6,2)\) and \\\((\delta_{n-2}(0), \delta_{n-2}(1)) = (0,8)\).  Then, \(\omega_8=1\), \(\omega_6 = \omega_6'-1\), \(\omega_2 = \omega_2'-1\) and \(\omega_0 = \omega_0'+ 1\).
\item for  \(\gcd(3,n)=3\), \(n\) even, we have \((\delta_3(0), \delta_3(1)) = (6,4)\) and \\\((\delta_{n-2}(0), \delta_{n-2}(1)) = (2,8)\).  Then,  \(\omega_8=1\), \(\omega_6 = \omega_6'-1\), \(\omega_4 = \omega_4'-1\), \(\omega_2 = \omega_2'+1\) and \(\omega_0 = \omega_0'\).
\end{itemize}
The result finally follows from Theorem~\ref{th:x^7}.
\end{proof}
 
The minimum distance of the cyclic code of length $2^n-1$ with defining set $\{1,7\}$
has been studied by van Lint and Wilson in~\cite{vlw2}. More precisely, they have proved that this code has minimum distance at most~$4$ for \(n \geq 6\). The previous corollary recovers this result and also provides the exact number of codewords of weight~\(3\) and~\(4\) in this code.
\begin{corollary}
Let \(B_3\) (resp. \(B_4\)) denote the number of codewords of Hamming weight~\(3\) (resp. of Hamming weight~\(4\)) in the binary cyclic code of length $2^n-1$ with defining set $\{1,7\}$. Then, we have
\begin{itemize}
\item if \(n\) is odd
\begin{eqnarray*}
B_3 & = & 0 \\
B_4 & = & (2^n-1) \left(\frac{2^{n-2}+1}{6} - \frac{K(1)}{8}\right);
\end{eqnarray*}
\item if \(n\) is even
\begin{eqnarray*}
B_3 & = & \frac{(2^n-1)}{3} \\
B_4 & =& (2^n-1) \left(\frac{2^{n-2}-4}{6} + \frac{K(1)}{8}\right).
\end{eqnarray*}
\end{itemize}
\end{corollary}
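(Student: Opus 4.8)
The plan is to combine the well-known correspondence between the low-weight codewords of the cyclic code with defining set $\{1,7\}$ and the differential structure of $G_3 : x \mapsto x^7$ (see the Remark following Lemma~\ref{ident} and Corollary~1 in~\cite{BCC}) with the explicit spectrum of $G_3$ obtained in Theorem~\ref{th:x^7}. Recall that a codeword of Hamming weight $w$ in this code corresponds to an unordered $w$-subset $\{x_1,\dots,x_w\}$ of $\F^*$ satisfying simultaneously $\sum_i x_i = 0$ and $\sum_i x_i^7 = 0$. Hence both $B_3$ and $B_4$ can be counted directly, and the outcome reduces to the quantities $\delta(1)$ and $\omega_6$ already determined earlier.

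For $B_3$, first I would normalise: writing a weight-$3$ support as $\{a, au, a(1+u)\}$ with $a \in \F^*$ and $u \in \F \setminus \FB$, the condition $\sum_i x_i^7 = 0$ becomes $1 + u^7 + (1+u)^7 = 0$, i.e. $u$ is a root of $(x+1)^7 + x^7 = 1$ lying outside $\FB$. By Theorem~\ref{th: 2^t-1} the number of such roots is $\delta(1) - 2$. Since the map $(a,u) \mapsto \{a, au, a(1+u)\}$, with $a$ free in $\F^*$ and $u$ among the $\delta(1)-2$ admissible values, is exactly $6$-to-$1$ onto the weight-$3$ supports (three choices for the scaling element, two for its partner), I obtain $B_3 = (2^n-1)(\delta(1)-2)/6$. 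Substituting $\delta(1) = 2$ for odd $n$ and $\delta(1)=4$ for even $n$ yields $B_3 = 0$ and $B_3 = (2^n-1)/3$ respectively.

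For $B_4$ the strategy is to count through the three pairings of a $4$-set. For a weight-$4$ support, any splitting into two pairs $\{x_1,x_2\},\{x_3,x_4\}$ forces the common sum $a = x_1+x_2 = x_3+x_4 \neq 0$, and the two pairs automatically have the same $7$th-power sum $b = x_1^7 + x_2^7 = x_3^7 + x_4^7$ because the total $7$th-power sum vanishes. Conversely, any two \emph{distinct} pairs with equal sum $a$ and equal value $b$ are disjoint and produce a valid $4$-set. Since the number of pairs $\{y,z\}$ with $y+z=a$ and $y^7+z^7=b$ equals $\delta(a,b)/2 = \delta(b/a^7)/2$, summing $\binom{\delta(b/a^7)/2}{2}$ over $a \in \F^*$ and $b \in \F$ counts each weight-$4$ support exactly three times. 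The only correction concerns the pair $\{0,a\}$, which occurs precisely for the normalised value $c = b/a^7 = 1$ and must be discarded to keep all coordinates nonzero; this replaces the $c=1$ term $\binom{\delta(1)/2}{2}$ by $\binom{\delta(1)/2-1}{2}$. Using that $c = b/a^7$ ranges over $\F$ independently of $a$, I get
\[
3 B_4 = (2^n-1)\left[\sum_{c \neq 1} \binom{\delta(c)/2}{2} + \binom{\delta(1)/2-1}{2}\right].
\]

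Finally I would evaluate the bracket from the spectrum of $G_3$. Since $\delta(c)\in\{0,2,4,6\}$, only the value $\delta(c)=6$ contributes, with weight $\binom{3}{2}=3$; the value $\delta(c)=4$ occurs only at $c=1$ (it is the value of $\delta(1)$ for even $n$) and is therefore excluded from the sum over $c \neq 1$, while $\binom{\delta(1)/2-1}{2}=0$ since $\delta(1)/2-1 \in \{0,1\}$ in both parities. Hence the bracket equals $3\omega_6$ and $B_4 = (2^n-1)\,\omega_6$, so substituting $\omega_6$ from Theorem~\ref{th:x^7} gives the announced expressions in terms of $K(1)$. The main obstacle is not any single computation but the bookkeeping in the weight-$4$ count: correctly tracking the factor $3$ from the three pairings together with the boundary contribution of the pair containing $0$ (equivalently, the special role of $\delta(1)$), which is exactly what makes the $c=1$ term vanish and leaves the clean identity $B_4=(2^n-1)\omega_6$.
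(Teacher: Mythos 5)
Your proof is correct, but it takes a genuinely different route from the paper. The paper imports two identities from Proposition~2 and Lemma~2 of~\cite{BCC}, namely \(B_3 = \frac{(2^n-1)}{6}(\delta(1)-2)\) and \(B_3+B_4 = \frac{(2^n-1)}{24}\bigl[\sum_i i^2\omega_i - 2^{n+1}\bigr]\), and then evaluates \(\sum_i i^2\omega_i\) from the spectrum of Theorem~\ref{th:x^7} (the detour through Proposition~2 and Lemma~2 is needed precisely because Corollary~1 of~\cite{BCC} assumes \(\gcd(d,2^n-1)=1\), which fails when \(3\mid n\)). You instead re-derive both counts from scratch: the \(6\)-to-\(1\) normalisation \(\{a,au,a(1+u)\}\) for weight \(3\) recovers the first identity, and your pairing argument for weight \(4\) --- summing \(\binom{\delta(b/a^7)/2}{2}\) over \((a,b)\), with the factor \(3\) from the three pairings and the exclusion of the pair \(\{0,a\}\) at \(c=1\) --- is a correct and self-contained replacement for the second. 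Your bookkeeping is right: distinct pairs with equal sum are automatically disjoint, every pairing of a \(4\)-set containing \(0\) uses the pair \(\{0,a\}\) and is therefore removed by replacing \(\binom{\delta(1)/2}{2}\) with \(\binom{\delta(1)/2-1}{2}\), and since \(\delta(c)\in\{0,2,6\}\) for \(c\neq 1\) while \(\delta(1)/2-1\in\{0,1\}\), the bracket collapses to \(3\omega_6\). What your approach buys is independence from the external reference and a transparent explanation of why \(B_4=(2^n-1)\omega_6\) exactly (only the \(\delta=6\) classes contribute \(\binom{3}{2}=3\) choices of two pairs); what the paper's approach buys is brevity and a formula, \(\sum_i i^2\omega_i\), that generalises immediately to other exponents.
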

\begin{proof}
Let \(F(x)=x^d\) over \(\F\) and let \(\delta(b)\), \(b \in \F\), denote the number of solutions~\(x\) of 
\[D_1 F(x) = F(x+1) + F(x) = b\;.\]
It is known from Proposition~2 and Lemma~2 in~\cite{BCC} that the number of codewords of weight \(3\) and~\(4\) in the cyclic code of length~\((2^n-1)\) with defining set \(\{1,d\}\) is given by
\begin{eqnarray*}
B_3 & = & \frac{(2^n-1)}{6} (\delta(1)-2)\\
B_3 + B_4 & = & \frac{(2^n-1)}{24} \left[\#\{(x,y) \in \F \!\times \!\F: D_1 F(x) \!=\! D_1 F(y)\} - 2^{n+1}\right].
\end{eqnarray*}
Therefore, we have
\begin{eqnarray*}
B_3 + B_4 & = & \frac{(2^n-1)}{24}  \left(\sum_{b \in \F} \delta(b)^2 - 2^{n+1}\right)
\\
& = & \frac{(2^n-1)}{24}  \left[\sum_{i=0}^{2^n} i^2 \omega_i - 2^{n+1}\right]\;.
\end{eqnarray*}
This formula was proved in Corollary~1 of~\cite{BCC}, but only in the particular case where \(\gcd(d,2^n-1)=1\).
For \(d=7\), Theorem~\ref{th:x^7} implies that
\[B_3+B_4 = (2^n-1) \omega_6.\]
Then, the values of~\(B_3\) and~\(B_4\) are deduced from the expression of \(\omega_6\) given in Theorem~\ref{th:x^7}.
\end{proof}

\subsection{Exponents $\boldsymbol{2^{\lfloor n /2 \rfloor}-1}$}\label{sec-n/2}

We are going to determine the differential uniformity of \(G_t\) 
for \(t = \lfloor n /2 \rfloor\). We first consider the case where 
\(n\) is even. Note that in this case, \(G_t\) is not  a permutation
since $2^n-1=(2^t-1)(2^t+1)$.
\begin{theorem}\label{th:n/2}
Let  \(n\) be an even integer,  \(n >  4\) and  \(G_{t}(x)=x^{2^{t}-1}\)
 for \(t = \frac{n}{2}\). Then  \(G_{t}\) is locally-APN. More precisely 
\[
\delta(G_{t}) = 2^{t}-2 ~~\mbox{and}~~\delta(b)\leq 2,~\forall~
b\in\F\setminus\FB. 
\]
Moreover, the differential spectrum of \(G_{t}\) is:

\medskip
\noindent
$\bullet$ if \(n \equiv 0 \bmod{4}\) then
\begin{eqnarray*}
\omega_{2^{t}-2} & = & 1\\
\omega_{i}& = & 0 ,~\forall~i,~ 2 <  i <2^{t}-2\\
\omega_2 & = & 2^{n-1}-2^{t-1} + 1\\
\omega_0 & = & 2^{n-1}+2^{t-1} -2;
\end{eqnarray*}
$\bullet$ if \(n \equiv 2 \bmod{4}\),
\begin{eqnarray*}
\omega_{2^{t}-2} & = & 1\\
\omega_{i}& = & 0 ,~\forall~i,~ 4 <  i <2^{t}-2\\
\omega_4 & = & 1\\
\omega_2 & = & 2^{n-1}-2^{t-1} - 1\\
\omega_0 & = & 2^{n-1}+2^{t-1} -1.
\end{eqnarray*}
\end{theorem}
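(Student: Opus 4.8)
The plan is to reduce the entire statement to one key inequality --- that $\delta(b)\le 2$ for every $b\in\F\setminus\FB$ --- after which the spectrum drops out of the counting identities of Lemma~\ref{ident}. First I would isolate the two exceptional input differences. Since $t=n/2$ divides $n$, Proposition~\ref{delprime} gives $\delta(0)=2^t-2$, the value that will turn out to be the differential uniformity. For $b=1$, Theorem~\ref{th: 2^t-1} gives $\delta(1)=2^{\gcd(t-1,n)}$; setting $d=\gcd(t-1,n)$ one has $d\mid n$ and $d\mid (t-1)$, hence $d\mid\bigl(n-2(t-1)\bigr)=2$. Thus $d=1$ when $n\equiv 0\bmod 4$ (then $t-1$ is odd), giving $\delta(1)=2$, and $d=2$ when $n\equiv 2\bmod 4$ (then $t-1$ is even), giving $\delta(1)=4$.

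The heart of the argument, and the step I expect to be the main obstacle, is the bound $\dim Ker(P_b)\le 2$ for $b\neq 0$, where $P_b(x)=x^{2^t}+bx^2+(b+1)x$. The idea is to exploit the special feature $n=2t$, which makes the Frobenius map $x\mapsto x^{2^t}$ an involution. If $x$ is a root of $P_b$ then $x^{2^t}=bx^2+(b+1)x$; raising this to the power $2^t$ and using $x^{2^{2t}}=x$ gives
\[
x=b^{2^t}x^{2^{t+1}}+(b+1)^{2^t}x^{2^t}.
\]
Since $x^{2^{t+1}}=(x^{2^t})^2=(bx^2+(b+1)x)^2=b^2x^4+(b+1)^2x^2$, substituting this together with $x^{2^t}=bx^2+(b+1)x$ shows that every root of $P_b$ also satisfies $Q(x)=0$, where
\[
Q(x)=b^{2^t+2}x^4+\bigl[b^{2^t}(b+1)^2+(b^{2^t}+1)b\bigr]x^2+\bigl[(b^{2^t}+1)(b+1)+1\bigr]x .
\]
The crucial observation is that $Q$ is a linearized polynomial with leading coefficient $b^{2^t+2}\neq 0$ for $b\neq 0$, so its roots in $\F$ form an $\FB$-subspace of dimension at most $2$. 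As $Ker(P_b)\subseteq Ker(Q)$, this forces $\dim Ker(P_b)\le 2$, whence by Theorem~\ref{th: 2^t-1} we get $\delta(b)=2^r-2$ with $r\le 2$, that is $\delta(b)\in\{0,2\}$ for all $b\notin\FB$. Converting the rigidity coming from $n=2t$ into this single degree-$4$ linearized constraint is the only non-routine part of the proof.

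The conclusion is then immediate. For $n>4$ we have $2^t-2\ge 6$, so $\delta(0)=2^t-2$ strictly dominates $\delta(1)\le 4$ and all the values $\delta(b)\le 2$ with $b\notin\FB$; hence $\delta(G_t)=2^t-2$ and, by Definition~\ref{loc-apn}, $G_t$ is locally-APN. To read off the spectrum I would note that the only value exceeding $2$ is $2^t-2$, attained solely at $b=0$, so $\omega_{2^t-2}=1$ and $\omega_i=0$ for all $i$ with $\delta(1)<i<2^t-2$. When $n\equiv 2\bmod 4$ the difference $b=1$ contributes the unique $\omega_4=1$, while for $n\equiv 0\bmod 4$ it is absorbed into $\omega_2$. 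The last two unknowns $\omega_0,\omega_2$ are fixed by Lemma~\ref{ident}, namely $2\omega_2+4\omega_4+(2^t-2)=2^n$ and $\omega_0+\omega_2+\omega_4+1=2^n$, which yield exactly the stated values in each residue class of $n$ modulo $4$.
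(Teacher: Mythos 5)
Your proposal is correct and follows essentially the same route as the paper: the key step in both is to raise $P_b$ to the $2^t$-th power, use $x^{2^{2t}}=x$ and the relation $x^{2^t}=bx^2+(b+1)x$ to trap $Ker(P_b)$ inside the kernel of a linearized polynomial of degree $4$ with nonzero leading coefficient $b^{2^t+2}$, giving $\delta(b)\le 2$ for $b\notin\FB$. The determination of $\delta(0)$, $\delta(1)$ and the final counting via Lemma~\ref{ident} also match the paper's argument.
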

\begin{proof}
From Theorem~\ref{th: 2^t-1}, we obtain directly \(\delta(0) = 2^{t}-2\).
Also, \(\delta(1)=2\) if \(t\) is even and \(\delta(1)=4\) otherwise.

Now, for all \(b \not \in \FB\), we have to determine the number of roots in 
\(\F\) of \(P_b(x)= x^{2^{t}} + bx^2+(b+1)x\) or, equivalently, the
number of  roots of 
\begin{equation}\label{eqth7}
\left(P_b(x)\right)^{2^{t}}= x + b^{2^{t}}x^{2^{t+1}}+
(b+1)^{2^{t}}x^{2^{t}}.
\end{equation}
If $x$ is a root of $P_b$ then $x^{2^{t}}= bx^2+(b+1)x$. So, $P_b(x)=0$ implies 
\begin{eqnarray*}
\left(P_b(x)\right)^{2^{t}} & = & x +  b^{2^{t}}(x^{2^{t}})^2+
(b^{2^t}+1)x^{2^{t}}\\
& = & x +  b^{2^{t}}(bx^2+(b+1)x)^2+(b^{2^t}+1)(bx^2+(b+1)x)\nonumber\\
& = & b^{2^t+2} x^4 + (b^{2^{t}+2}+b^{2^{t}+1} + b^{2^{t}}+b)
 x^2 + (b^{2^{t}+1}+b^{2^t}+b) x\nonumber\\
& =& b^{2^t+2}(x^2+x)^2+ (b^{2^{t}+1} + b^{2^{t}}+b)(x^2+x).\nonumber
\end{eqnarray*}
Thus, we get a linear polynomial of degree~\(4\) which has at
least the roots $0$ and $1$. Hence, this polynomial has $\tau$ roots
where $\tau$ is either \(4\) or \(2\), including \(x=0\) and
\(x=1\). Therefore, for any $b\not\in\FB$, $\delta(b)\leq 2$ since
$\delta(b) \leq\tau-2$.  We deduce that \(G_{t}\) is localy-APN.

We also proved that  $\omega_i=0$ unless $i\in\{0,2, 2^{t}-2\}$
when $t$ is even and $i\in\{0,2,4, 2^{t}-2\}$ otherwise.
Moreover $\omega_{2^{t}-2}=\omega_4=1$.
According to Lemma \ref{ident}, we have for $t$ even :
\[
2^n=\omega_0+\omega_2+\omega_{2^{t}-2}=\omega_0+\omega_2+1
\]
and 
\[
2^n=2\omega_2+(2^{t}-2)\omega_{2^{t}-2}=2\omega_2+(2^{t}-2).
\]
So, we get $\omega_2=2^{n-1}-2^{t-1}+1$ and conclude with 
$\omega_0=2^n-\omega_2-1$. We proceed similarly for odd $t$,
with the following equalities derived from Lemma~\ref{ident}:
\[
2^n=\omega_0+\omega_2+2~~\mbox{and}~~2^n=2\omega_2+2^{t}+2.
\]
\end{proof}

And we directly deduce a property on the corresponding class of 
linear polynomials. 
\begin{corollary}
Let $n=2t$ and let $Tr_t$ denote the absolute trace on $\FF_{2^t}$.
Consider the polynomials over $\F$:
\[
x^{2^t}+bx^2+(b+1)x~~\mbox{and}~~x^{2^{t+1}}+bx^2+(b+1)x.
\]
Then, for any $b\in\F\setminus \FB$, these polynomials have
either $2$ or $4$ roots in~\(\F\).
The first one has $4$ roots if and only if $Tr_t(b^{-(2^t+1)})=1$ with 
$(1+b)\not\in {\cal G}$, where $\cal G$ is the cyclic subgroup of $\F$ of order $2^t+1$. 
\end{corollary}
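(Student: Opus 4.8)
The plan is to handle the two polynomials separately, putting essentially all the effort into the first one. The second polynomial $x^{2^{t+1}}+bx^2+(b+1)x$ is precisely $P_{s,b}$ for $s=t+1=n-t+1$, the polynomial attached to $G_s$ in Theorem~\ref{th: 2^t-1}. Since $G_t$ is locally-APN by Theorem~\ref{th:n/2}, Corollary~\ref{cormain} gives that $G_s$ is locally-APN too, so $\delta_s(b)\le 2$ for every $b\in\F\setminus\FB$; as the number of roots of $P_{s,b}$ in $\F$ equals $\delta_s(b)+2$, it lies in $\{2,4\}$. For the first polynomial $P_b(x)=x^{2^t}+bx^2+(b+1)x$ the same dichotomy is already in Theorem~\ref{th:n/2}: there $\delta(b)\le 2$ is proved for $b\notin\FB$, and since the roots of $P_b$ form an $\FB$-subspace containing $\{0,1\}$, their number is $\delta(b)+2\in\{2,4\}$.

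It remains to decide exactly when $\delta(b)=2$, i.e. when $P_b$ has a root outside $\FB$. First I would reuse the identity from the proof of Theorem~\ref{th:n/2}: every root $x$ of $P_b$ satisfies $b^{2^t+2}(x^2+x)^2+(b^{2^t+1}+b^{2^t}+b)(x^2+x)=0$, so that $x^2+x\in\{0,y_0\}$ with
\[ y_0=b^{-1}+b^{-2}+b^{-(2^t+1)}. \]
Because $Tr(b^{-1})=Tr(b^{-2})$ and $b^{-(2^t+1)}\in\FF_{2^t}$ has zero absolute trace over $\F$, one gets $Tr(y_0)=0$ for every $b$; hence $x^2+x=y_0$ is solvable, with two distinct solutions exactly when $y_0\neq 0$. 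Multiplying $y_0=0$ by $b^{2^t+1}$ yields $b^{2^t}+b^{2^t-1}+1=0$, which is equivalent to $(1+b)^{2^t+1}=1$, i.e. to $(1+b)\in\mathcal{G}$. Thus candidate roots outside $\FB$ exist if and only if $(1+b)\notin\mathcal{G}$.

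The crux is to check whether such a candidate $\alpha$, with $\alpha^2+\alpha=y_0$, is a genuine root of $P_b$. Substituting $\alpha^2=\alpha+y_0$ into $P_b$ gives $P_b(\alpha)=(\alpha^{2^t}+\alpha)+by_0$. The next step is the partial-trace computation $\alpha^{2^t}+\alpha=\sum_{i=0}^{t-1}(\alpha^2+\alpha)^{2^i}=\sum_{i=0}^{t-1}y_0^{2^i}$: the $b^{-2^i}$ contributions telescope to $b^{-1}+b^{-2^t}$, while $\sum_{i=0}^{t-1}(b^{-(2^t+1)})^{2^i}=Tr_t(b^{-(2^t+1)})$ since $b^{-(2^t+1)}\in\FF_{2^t}$. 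Combined with $by_0=1+b^{-1}+b^{-2^t}$ this collapses to the clean evaluation $P_b(\alpha)=1+Tr_t(b^{-(2^t+1)})$, so $\alpha$ is a root precisely when $Tr_t(b^{-(2^t+1)})=1$.

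Assembling the two conditions, $P_b$ has four roots if and only if $y_0\neq 0$ and $Tr_t(b^{-(2^t+1)})=1$, that is, if and only if $(1+b)\notin\mathcal{G}$ and $Tr_t(b^{-(2^t+1)})=1$, as stated. I expect the main obstacle to be the partial-trace simplification of $\sum_{i=0}^{t-1}y_0^{2^i}$, together with the point that the two conditions are not independent: the same evaluation shows that $y_0=0$ forces $Tr_t(b^{-(2^t+1)})=1$, so one cannot drop the hypothesis $(1+b)\notin\mathcal{G}$. A little care is also needed to confirm that no root of $P_b$ escapes the set $\{x:x^2+x\in\{0,y_0\}\}$ and that $0,1$ are its only roots in $\FB$, which is what makes the count of four exact.
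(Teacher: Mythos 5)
Your proof is correct and follows essentially the same route as the paper: your $y_0$ is exactly the paper's $A=(b^{2^t+1}+b^{2^t}+b)/b^{2^t+2}$, and both arguments reduce to the quadratic $x^2+x=A$, check $Tr(A)=0$, identify $A\neq 0$ with $(1+b)\notin\mathcal{G}$, and evaluate $P_b$ at a candidate root via the telescoping sum to obtain $1+Tr_t(b^{-(2^t+1)})$. The only differences are cosmetic: you explicitly dispatch the second polynomial via Corollary~\ref{cormain} (the paper's proof leaves that half to the following theorem), and your observation that $y_0=0$ forces $Tr_t(b^{-(2^t+1)})=1$ is a nice extra point not made in the paper.
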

\begin{proof}
Let \(P_b(x) = x^{2^t}+bx^2+(b+1)x\). We define
\[Q_b(x) = \frac{(P_b(x))^{2^t} + b^{2^t} (P_b(x))^2 + (b^{2^t}+1) P_b(x)}{b^{2^t+2}(x^2+x)}\]
Using~(\ref{eqth7}), we get :
\[
Q_b(x)=x^2+x+A, ~\mbox{with}~A=\frac{b^{2^{t}+1} + b^{2^{t}}+b} {b^{2^t+2}}.
\]
To be clear, we summarize the situation:\\
--  if $P_b(x)=0$, $x\not\in \{0,1\}$, then $Q_b(x)=0$;\\
--  when $Q_b(x)=0$, $x\not\in \{0,1\}$, one can have $P_b(x)\neq 0$;\\
-- if $Q_b(x)=0$ for $x\in \{0,1\}$ only, this holds for $P_b(x)$ too.

\medskip
We consider the case where $P_b(x)=0$ has more than the two solutions
$0$ and $1$. The equation  $Q_b(x)=0$ has two solutions (not in $\{0,1\}$) 
if and only if $Tr(A)=0$ with $A\neq 0$. But
\[
Tr(A)=Tr\left(\frac{1}{b}+\frac{1}{b^2}+\frac{1} {b^{2^t+1}}\right)
=Tr\left(\frac{1} {b^{2^t+1}}\right)=0,
\]
for all $b$, since $b^{2^t+1}\in \FF_{2^t}$. And,
\[
A\neq 0~\Leftrightarrow~b^{2^{t}+1} + b^{2^{t}}+b\neq 0
\Leftrightarrow~(b+1)^{2^t+1}\neq 1,
\]
that is : $b+1$ is not in the cyclic subgroup $\cal G$
 of order $2^t+1$ of $\F^*$.
On the other hand, if $Q_b(x)=0$ then $x^2+x=A$ and we get
\begin{eqnarray*}
P_b(x) &=&  x^{2^{t}}+x + b(x^2+x)\\
&=&  (x^2+x)^{2^{t-1}}+(x^2+x)^{2^{t-2}}+\dots+(x^2+x)+ b(x^2+x)\\
&=&  A^{2^{t-1}}+\dots+A+bA.
\end{eqnarray*}
We compute this last expression by replacing the value of $A$:
\begin{eqnarray*}
P_b(x) &=&  \sum_{i=0}^{t-1}\left(\frac{1}{b}+\frac{1}{b^2}\right)^{2^i}+
\sum_{i=0}^{t-1}\left(\frac{1} {b^{2^t+1}}\right)^{2^i}+1+
\frac{1}{b}+\frac{1} {b^{2^t}}\\
 &=& Tr_t\left(\frac{1} {b^{2^t+1}}\right)+1,
\end{eqnarray*}
where $Tr_t$ is the absolute trace on $\FF_{2^t}$. We conclude that
 $P_b(x)=0$ if and only if $Tr_t( {b^{-(2^t+1)}})=1$, with $b+1\not\in{\cal G}$.
\end{proof}
According to  Corollary~\ref{cormain}, the differential spectrum of 
\(x \mapsto x^{2^{\frac{n}{2}}-1}\) determines the differential spectrum of 
\(x \mapsto x^{2^{\frac{n}{2}+1}-1}\)

\begin{theorem}
Let  \(n\) be an even integer  \(n >  4\) and  \(G_{t+1}(x)=x^{2^{t+1}-1}\)
 for \(t = \frac{n}{2}\). 
Then, \(G_{t+1}\) is locally-APN. It is differentially \(2^t\)-uniform and
 its differential spectrum is
\begin{eqnarray*}
\omega_{2^{t}} & = & 1\\
\omega_{i}& = & 0 ,~\forall~i,~ 2 <  i <2^{t}\\
\omega_2 & = & 2^{n-1}-2^{t-1} \\
\omega_0 & = & 2^{n-1}+2^{t-1} -1\;.
\end{eqnarray*}
Moreover, \(G_{t+1}\) is a permutation if and only if \(n \equiv 0 \bmod{4}\).
\end{theorem}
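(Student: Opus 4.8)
The plan is to exploit the symmetry property of Corollary~\ref{cormain} together with the spectrum of $G_{n/2}$ already established in Theorem~\ref{th:n/2}. The crucial observation is that $G_{t+1}$, with $t=n/2$, is exactly the symmetric partner of $G_{n/2}=G_t$: setting $s=n-(t+1)+1=n-t=n/2=t$, the pair $(G_{t+1},G_t)$ falls precisely under the hypothesis $s=n-\nu+1$ of Corollary~\ref{cormain} (taking $\nu=t+1$). Consequently the two functions share the same restricted differential spectrum, and, by the last assertion of that corollary, $G_{t+1}$ is locally-APN because $G_{n/2}$ is locally-APN by Theorem~\ref{th:n/2}. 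In particular, $\delta_{t+1}(b)\in\{0,2\}$ for every $b\in\F\setminus\FB$.

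Next I would pin down the two exceptional values $\delta_{t+1}(0)$ and $\delta_{t+1}(1)$ directly from Theorem~\ref{th: 2^t-1}. For the input $b=1$ one gets $\delta_{t+1}(1)=2^{\gcd(t,n)}=2^{\gcd(n/2,n)}=2^{n/2}=2^t$; since $n>4$ forces $t>2$, this value $2^t$ is strictly larger than $2$, so Corollary~\ref{coro:delta} guarantees that it occurs exactly once in the spectrum, i.e. $\omega_{2^t}=1$ and $\delta(G_{t+1})=2^t$. For the input $b=0$, Theorem~\ref{th: 2^t-1} gives $\delta_{t+1}(0)=2^{\gcd(t+1,n)}-2$, and a short computation shows $\gcd(n/2+1,n)=\gcd(n/2+1,2)$, which equals $1$ when $n\equiv 0\bmod 4$ and $2$ when $n\equiv 2\bmod 4$. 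Hence $\delta_{t+1}(0)\in\{0,2\}$ in either case.

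Gathering these facts, the only values that can occur in the differential spectrum of $G_{t+1}$ are $0$, $2$ and $2^t$, with $2^t$ occurring exactly once. It then suffices to solve the two identities of Lemma~\ref{ident}, namely $\omega_0+\omega_2+\omega_{2^t}=2^n$ and $2\omega_2+2^t\omega_{2^t}=2^n$ with $\omega_{2^t}=1$. This yields $\omega_2=2^{n-1}-2^{t-1}$ and $\omega_0=2^{n-1}+2^{t-1}-1$, independently of the residue of $n$ modulo $4$, which is precisely the announced spectrum. Finally, the permutation claim follows from the value of $\delta_{t+1}(0)$: the function $G_{t+1}$ is a permutation if and only if $\gcd(2^{t+1}-1,2^n-1)=1$, that is $\gcd(t+1,n)=1$, that is $\delta_{t+1}(0)=0$, which by the computation above happens exactly when $n\equiv 0\bmod 4$.

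I do not expect a genuine obstacle here, since the heavy analytic work was already carried out for $G_{n/2}$ in Theorem~\ref{th:n/2}; the only points requiring care are the bookkeeping of the two greatest common divisors governing $\delta_{t+1}(0)$ and $\delta_{t+1}(1)$, and checking that the resulting values of $\omega_0$ and $\omega_2$ collapse to a single uniform formula in both parity cases of $n/2$.
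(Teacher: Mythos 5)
Your proposal is correct and follows essentially the same route as the paper: both arguments rest on the symmetry of Corollary~\ref{cormain} applied to the pair $(G_{t+1},G_t)$ (with $s=n-(t+1)+1=t$) together with Theorem~\ref{th:n/2}. The only cosmetic difference is that the paper transfers the exact counts $\omega_i'$ of $G_t$ and corrects for the swapped values at $b\in\FB$, whereas you import only the locally-APN property and recover $\omega_0,\omega_2$ from Lemma~\ref{ident}; the two computations are equivalent.
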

\begin{proof}
First, since \(n=2t\), we have \(\gcd(t+1,n)=1\) if \(t\) is even ({\em i.e.},
 \(n \equiv 0 \bmod{4}\)) and \(\gcd(t+1,n)=2\) if \(t\) is odd ({\em i.e.},
 \(n \equiv 2 \bmod{4}\)). Here $s=t+1$.

Let \((\omega_i')_{0 \leq i \leq 2^n}\) (resp. \((\omega_i)_{0 \leq i \leq 2^n}\))
 denote the differential spectrum of \(G_t\) (resp. \(G_{t+1}\)) over \(\F\).
\begin{itemize}
\item For \(n \equiv 0 \bmod{4}\), we have
\((\delta_t(0), \delta_t(1)) = (2^t-2,2)\) and \((\delta_s(0), \delta_s(1)) = (0,2^t)\). Thus, \(\omega_0 = \omega_0'+1\), \(\omega_2 = \omega_2'-1\), \(\omega_{2^t-2} = \omega_{2^t-2}'-1\) and \(\omega_{2^t} = 1\).
\item For \(n \equiv 2 \bmod{4}\), we have \((\delta_t(0), \delta_t(1)) = (2^t-2,4)\) and \((\delta_s(0), \delta_s(1)) = (2,2^t)\). Thus, \(\omega_2 = \omega_2'+1\), \(\omega_4 = \omega_4'-1\), \(\omega_{2^t-2} = \omega_{2^t-2}'-1\) and \(\omega_{2^t} = 1\).
\end{itemize}
The differential spectrum of \(G_{t+1}\) is then directly deduced by combining the previous formulas with the values of \(\omega_i'\) computed in Theorem~\ref{th:n/2}.
\end{proof}

\medskip
In the case where \(n\) is odd, the differential uniformity of 
\(G_t\), with $t=\frac{n-1}{2}$,  can also be determined.

\begin{theorem}
Let \(n\) be an odd integer, \(n > 3\). Let $G_t(x)=x^{2^{t}-1}$ with \(t = (n-1)/2\).
Then, \(G_t\) is a permutation and for all $b\in\F\setminus \FB$
 we have $\delta(b)\in\{0,2,6\}$. Moreover
\begin{itemize}
\item if \(n \equiv 0 \bmod{3}\), then \(\delta(G_t)=8\), and the 
differential spectrum satisfies \(\omega_i=0\) for all \(i \not \in 
\{ 0,2, 6, 8\}\) and \(\omega_8=1\).
\item if \(n \not \equiv 0 \bmod{3}\), then \(\delta(G_t)\leq 6\) and the
 differential spectrum satisfies \(\omega_i=0\) for all \(i \not \in
 \{ 0,2, 6\}\).
\end{itemize}
\end{theorem}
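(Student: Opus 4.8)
The plan is to reduce everything to Theorem~\ref{th: 2^t-1} together with a single dedicated computation for $b \in \F \setminus \FB$, exactly in the spirit of the proof of Theorem~\ref{th:n/2}. First I would record the easy quantities. Since $n = 2t+1$ we have $\gcd(t,n) = \gcd(t,2t+1) = 1$, so $\delta(0) = 2^{\gcd(t,n)}-2 = 0$, and by Lemma~\ref{delt0} this already shows $G_t$ is a permutation. For $\delta(1) = 2^{\gcd(t-1,n)}$ I would use $n = 2(t-1)+3$ to get $\gcd(t-1,n) = \gcd(t-1,3)$, which equals $3$ when $3 \mid n$ (equivalently $t \equiv 1 \bmod 3$) and $1$ otherwise; hence $\delta(1) = 8$ if $3 \mid n$ and $\delta(1) = 2$ if $3 \nmid n$.

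The heart of the argument is to bound $\delta(b)$ for $b \in \F \setminus \FB$, i.e. to show that $P_b(x) = x^{2^t} + bx^2 + (b+1)x$ has at most $8$ roots in $\F$. The key observation, which is what makes the choice $t = (n-1)/2$ special, is that $n = 2t+1$ forces $x^{2^{2t}} = x^{2^{n-1}}$ to be the square root of $x$ for every $x \in \F$. I would start from a root $x$ of $P_b$, which satisfies $x^{2^t} = bx^2 + (b+1)x$ and hence $x^{2^{t+1}} = b^2 x^4 + (b+1)^2 x^2$. Raising $P_b(x)=0$ to the power $2^t$ gives
\[
x^{2^{2t}} = b^{2^t} x^{2^{t+1}} + (b+1)^{2^t} x^{2^t},
\]
and substituting the two expressions above yields $x^{2^{2t}} = R(x)$, where $R$ is a linearized polynomial of $2$-degree $2$ (terms in $x^4,x^2,x$, with leading coefficient $b^{2^t+2}$). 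Squaring and using $(x^{2^{2t}})^2 = x^{2^n} = x$ then turns this into
\[
b^{2^{t+1}+4}x^8 + \left[b^{2^{t+1}}(b+1)^4 + (b+1)^{2^{t+1}}b^2\right]x^4 + (b+1)^{2^{t+1}+2}x^2 + x = 0,
\]
a linearized polynomial of $2$-degree exactly $3$, since its leading coefficient $b^{2^{t+1}+4}$ is nonzero for $b \neq 0$. Every root of $P_b$ satisfies this equation, so $P_b$ has at most $2^3 = 8$ roots in $\F$.

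To finish, I would invoke the structural facts already established. By Theorem~\ref{th: 2^t-1} the roots of $P_b$ form an $\FB$-subspace and $\delta(b) = 2^r-2$; the bound of at most $8$ roots forces $r \le 3$, so $\delta(b) \in \{0,2,6\}$ for every $b \in \F \setminus \FB$ (the value $4$ never occurs, automatically, since $4 = 2^r-2$ has no solution). Combining this with $\delta(0)=0$ and the value of $\delta(1)$: when $3 \nmid n$ we get $\delta(1)=2$, hence $\delta(G_t) \le 6$ with spectrum supported on $\{0,2,6\}$; when $3 \mid n$ we get $\delta(1)=8$, which strictly exceeds all other $\delta(b)$, so $\delta(G_t)=8$, and by Corollary~\ref{coro:delta} this value is attained only at $b=1$, giving $\omega_8 = 1$ and spectrum supported on $\{0,2,6,8\}$. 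The main obstacle is purely computational: carrying out the substitution and the squaring carefully enough to confirm that the surviving polynomial is linearized of $2$-degree exactly $3$ with nonzero leading coefficient — this is precisely what caps the number of roots at $8$ and is the whole point of the exponent $t=(n-1)/2$.
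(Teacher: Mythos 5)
Your proposal is correct and follows essentially the same route as the paper: the paper raises $P_b$ directly to the power $2^{t+1}$ (using $2^{t+1}\cdot 2^t = 2^n$) and substitutes $x^{2^t}=bx^2+(b+1)x$ to obtain exactly the degree-$8$ linearized polynomial you derive by raising to the $2^t$-th power and then squaring. The bookkeeping for $\delta(0)$, $\delta(1)$, and the conclusion $\delta(b)\in\{0,2,6\}$ via the subspace structure of the roots matches the paper's argument.
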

\begin{proof}
From Theorem~\ref{th: 2^t-1},  we have \(\delta(0) = 0\); moreover,
  if \(3\) divides \(n\) then \(\delta(1)=8\) else \(\delta(1)=2\).
Now, for all \(b \not \in \FB\), we have to determine the number of roots in 
\(\F\) of 
\[
P_b(x)= x^{2^{t}} + bx^2+(b+1)x,
\] 
or, equivalently, the number of roots of 
\[
\left(P_b(x)\right)^{2^{t+1}}= x + b^{2^{t+1}}x^{2^{t+2}}+
(b+1)^{2^{t+1}}x^{2^{t+1}}.
\]
Set $c=b^{2^{t+1}}$ and $Q_b(x)=\left(P_b(x)\right)^{2^{t+1}}$.
If $x$ is a root of $P_b$ then $x^{2^{t}}= bx^2+(b+1)x$. So, $P_b(x)=0$ implies
\begin{eqnarray*}
Q_b(x) & = &  x + c(x^{2^{t}})^4+(c+1)(x^{2^{t}})^2\\
& = & x + c(bx^2+(b+1)x)^4+(c+1)(bx^2+(b+1)x)^2\\
& = & cb^4x^8+(c(b+1)^4+(c+1)b^2)x^4+(c+1)(b^2+1) x^2 +x\;.
\end{eqnarray*}
Since \(Q_b\) has degree~\(8\),  it has either \(8\) or \(4\) or \(2\)
solutions. In other terms, $\delta(b)\in\{0,2,6\}$.
\end{proof}

\section{Conclusions}
In this work, we  point out that the family of all power functions
\begin{equation}\label{famil}
\{~G_t~:~x \mapsto x^{2^t-1} \mbox{ over } \F, \; 1 < t <n\}
\end{equation}
has interesting differential properties. 
The study of these properties led us to introduce locally-APN
functions, as a generalization of the differential spectrum 
of the inverse function.

In particular, we give several results about
the functions  with a low differential uniformity within family (\ref{famil}).
There are classes of functions $G_t$ such that $\delta(G_t)=6$.
It is the case for the functions $G_3$ over $\F$ (see Theorem \ref{th:x^7}).

The functions such that $\delta(G_t)\leq 4$ can be differentially
$4$-uniform for even $n$ only (see Corollary \ref{coro3}).
We have shown that, for exponents of the form \(2^t-1\), the APN property
 imposes many conditions of the value of \(t\).In particular, it is easy to prove, using Theorem \ref{th: 2^t-1} that such exponent must satisfy $\gcd(t,n)=2$
 for even \(n\) and $\gcd(t,n)=\gcd(t-1,n)=1$ for odd~\(n\). 
Another condition can be derived from the recent result by Aubry and 
Rodier~\cite{AR} who proved the following theorem.
\begin{theorem}{\rm \cite[Theorem 9]{AR}}\label{aubrod}
Let $G_t: x \mapsto x^{2^t-1}$ over $\F$ with $t\geq 3$. 
If $7\leq 2^t-1 < 2^{n/4}+4.6$ then $\delta(G_t)>4$.
\end{theorem}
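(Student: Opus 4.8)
The plan is to deduce Theorem~\ref{aubrod} from a point count on an algebraic variety over $\F$, following the philosophy that a power map of small degree cannot have low differential uniformity. By Theorem~\ref{th: 2^t-1} every value $\delta(b)$ is even, so $\delta(G_t)>4$ is equivalent to the existence of $a\in\F^*$ and $b\in\F$ admitting at least three distinct pairs $\{x,x+a\}$, $\{y,y+a\}$, $\{z,z+a\}$ on which $D_aG_t$ takes the common value~$b$. First I would encode this as an affine variety. Writing $d=2^t-1$, the condition is captured by the two equations
\[
(x+a)^d+x^d+(y+a)^d+y^d=0,\qquad (x+a)^d+x^d+(z+a)^d+z^d=0
\]
in the coordinates $(a,x,y,z)$, together with the open condition $a\neq0$ and the requirement that the point avoid the \emph{trivial locus} $\Delta$ cut out by $y\in\{x,x+a\}$, $z\in\{x,x+a\}$ and $z\in\{y,y+a\}$. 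Any $\F$-point outside $\Delta$ yields $\delta(G_t)>4$.

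Next I would strip off the trivial components. Since $p(u):=(u+a)^d+u^d$ satisfies $p(u)=p(u+a)$ identically, the polynomial $p(x)+p(y)$ is divisible by both $x+y$ and $x+y+a$, hence
\[
(x+a)^d+x^d+(y+a)^d+y^d=(x+y)(x+y+a)\,R(a,x,y),
\]
and the points of interest lie on the reduced variety defined by $R(a,x,y)=0$ and $R(a,x,z)=0$. The heart of the argument --- and the step I expect to be the main obstacle --- is to show that this reduced variety contains an absolutely irreducible component $Z$ defined over $\FB$, of dimension and degree controlled by~$t$, which is not contained in~$\Delta$. Proving such an absolute irreducibility statement requires a genuine study of the factorisation of the cofactor $R$ over $\overline{\F}$: one passes to the projective closure, inspects the components at infinity and the singular locus, and excludes every way in which $R$ could split so as to create only spurious (trivial) components. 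It is exactly here that the lower bound $2^t-1\geq7$, i.e.\ $t\geq3$, is used, guaranteeing that $R$ is non-constant and that the associated variety has the expected dimension and does not degenerate.

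Granting the absolutely irreducible component $Z$ over $\F$ of degree $D$, with $D$ of the order of $2^t$, I would finish by a Lang--Weil / Ghorpade--Lachaud estimate for the number of rational points of $Z$. Such a bound has the shape
\[
\bigl|\,\#Z(\F)-\#\mathbb{A}^{\dim Z}(\F)\,\bigr|\ \leq\ (D-1)(D-2)\,2^{n(\dim Z-\frac12)}+c_D\,2^{n(\dim Z-1)},
\]
with $c_D$ explicit. Since $Z\cap\Delta$ has strictly smaller dimension, it contributes only a lower-order number of points, so $Z$ carries an $\F$-point outside $\Delta$ as soon as the main term dominates both the error term and the contribution of $\Delta$. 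This dominance is governed by the inequality $(D-1)(D-2)<2^{n/2}$; substituting $D=O(2^t)$ and solving the resulting quadratic inequality in $2^t-1$ produces precisely the threshold $2^t-1<2^{n/4}+4.6$, the explicit constant being the numerical outcome of the lower-order terms in the point-count bound and of the subtraction of the trivial locus. The resulting non-trivial $\F$-point then witnesses three distinct pairs sharing a common derivative value, whence $\delta(G_t)>4$.
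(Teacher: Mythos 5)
First, a point of orientation: the paper does not prove this statement at all --- Theorem~\ref{aubrod} is quoted verbatim from Aubry and Rodier \cite{AR} (their Theorem~9), so there is no internal proof to compare yours against. That said, your sketch does reproduce the strategy actually used in \cite{AR}: encode $\delta(G_t)>4$ (equivalently $\delta(G_t)\geq 6$, since every $\delta(b)$ is even for a power function) as the existence of a rational point, outside a prescribed trivial locus $\Delta$, on the variety cut out by $D_aG_t(x)=D_aG_t(y)=D_aG_t(z)$; divide out the forced factors such as $(x+y)(x+y+a)$; and then count points with a Ghorpade--Lachaud/Lang--Weil estimate.

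However, as a proof your text has a genuine gap, and it is exactly the step you yourself flag as ``the main obstacle'': you never establish that the reduced variety possesses an absolutely irreducible component $Z$, defined over the prime field, of the expected dimension, of degree $O(2^t)$, and not contained in $\Delta$. That claim is the entire mathematical content of the Aubry--Rodier theorem; everything else in your outline (the evenness of $\delta(b)$, the divisibility of $p(x)+p(y)$ by $(x+y)(x+y+a)$, the general shape of the point-count inequality) is routine. Without it the Lang--Weil bound has nothing to bite on: if every component over $\overline{\FF}_2$ were either defined only over a proper extension or contained in $\Delta$, the main term $2^{n\dim Z}$ would simply be absent and no rational point off $\Delta$ could be produced. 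Likewise the final numerology (the precise degree $D$, hence the constant $4.6$ in the threshold $2^t-1<2^{n/4}+4.6$) is asserted rather than derived. So what you have is a faithful plan of attack rather than a proof; to complete it you would have to carry out the irreducibility analysis --- in \cite{AR} this is done by passing to the projective closure and studying its singular locus and its intersection with the hyperplane at infinity --- and that is precisely where the hypotheses $t\geq 3$ and the special form of the exponent $2^t-1$ actually do their work.
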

Thanks to Corollary \ref{cormain},  we can extend this result as follows.
\begin{corollary}
Let $G_t: x \mapsto x^{2^t-1}$ over $\F$ with $ 3 \leq t\leq n-2$. If $\delta(G_t)\leq 4$, then
\[\log_2(2^{\frac{n}{4}} + 5.6) \leq t \leq n+1 - \log_2(2^{\frac{n}{4}} + 5.6)\;.\]
\end{corollary}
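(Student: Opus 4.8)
The plan is to reduce the whole statement to two applications of Theorem~\ref{aubrod}, one for the exponent $t$ itself and one for its symmetric partner $s=n-t+1$, using the theorem in contrapositive form together with the symmetry established in Corollary~\ref{cormain}.

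First I would handle the lower bound. The hypothesis $t\geq 3$ already forces $2^t-1\geq 7$, so the left inequality of the range $7\leq 2^t-1<2^{n/4}+4.6$ appearing in Theorem~\ref{aubrod} holds automatically. Reading that theorem in contrapositive form, the assumption $\delta(G_t)\leq 4$ rules out $2^t-1<2^{n/4}+4.6$, whence $2^t-1\geq 2^{n/4}+4.6$, that is $2^t\geq 2^{n/4}+5.6$. Taking base-$2$ logarithms yields $t\geq\log_2(2^{n/4}+5.6)$, which is the left-hand inequality of the corollary.

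For the upper bound I would run the same argument on the exponent $s=n-t+1$, which requires checking two things. The range hypothesis $s\geq 3$ holds because $t\leq n-2$ gives $s=n-t+1\geq 3$ (and hence $2^s-1\geq 7$). The second point, namely that $\delta(G_s)\leq 4$ whenever $\delta(G_t)\leq 4$, is where Corollary~\ref{cormain} and its consequences come in. By Corollary~\ref{coro:delta} the hypothesis $\delta(G_t)\leq 4$ forces $\delta(G_t)\in\{2,4\}$, so $G_t$ is either APN or differentially $4$-uniform. If $G_t$ is APN, Corollary~\ref{cr:apn} shows $G_s$ is APN (for odd $n$) or differentially $4$-uniform (for even $n$); if $G_t$ is differentially $4$-uniform, Corollary~\ref{coro3} shows $G_s$ is APN. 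In every case $\delta(G_s)\leq 4$. Applying the contrapositive of Theorem~\ref{aubrod} to $G_s$ exactly as above gives $2^s\geq 2^{n/4}+5.6$, i.e. $s\geq\log_2(2^{n/4}+5.6)$; substituting $s=n-t+1$ produces $t\leq n+1-\log_2(2^{n/4}+5.6)$, as required.

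The only genuinely substantive step is the implication $\delta(G_t)\leq 4\Rightarrow\delta(G_s)\leq 4$, and even this is immediate once Corollaries~\ref{cr:apn} and~\ref{coro3} are invoked; everything else is the logarithmic bookkeeping and the verification of the range hypotheses $t,s\geq 3$. I therefore do not anticipate any real obstacle beyond correctly matching the two subcases (APN versus differentially $4$-uniform, and the parity of $n$) to the appropriate corollary.
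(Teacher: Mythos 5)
Your proof is correct and follows essentially the same route as the paper: the lower bound comes from applying Theorem~\ref{aubrod} in contrapositive to $G_t$ (using $t\geq 3$ to guarantee $2^t-1\geq 7$), and the upper bound from applying it to $G_s$ with $s=n-t+1\geq 3$ once $\delta(G_s)\leq 4$ is known. The only cosmetic difference is that you establish $\delta(G_s)\leq 4$ by casing on APN versus differentially $4$-uniform via Corollaries~\ref{cr:apn} and~\ref{coro3}, whereas the paper reads it off directly from Corollary~\ref{cormain}; both rest on the same underlying symmetry.
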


\begin{proof}
Let $s=n-t+1$ so that $3\leq s\leq n-2$.
In this proof, we denote by $\delta_t(b)$ (resp. $\delta_s(b)$)
the quantities $\delta(b)$ corresponding to $G_t$ (resp. $G_s$).

From Theorem \ref{aubrod}, we know that $\delta(G_t)\leq 4$
implies 
\[
2^{n/4}+4.6\leq 2^t-1, ~\mbox{{\em i.e,} $t\geq \log_2(2^{\frac{n}{4}} + 5.6)$.}
\]
We consider now the function  $G_s$. Note that, from 
Theorem \ref{th: 2^t-1}, $\delta(G_t)\leq 4$ implies 
$\delta_t(0)\in\{0,2\}$ and  $\delta_t(1)\in\{2,4\}$.
Moreover, we obtain directly from Corollary \ref{cormain} :
\begin{itemize}
\item   $\delta_s(b)\leq 4$,  for any $b\not\in\FB$.
\item $\delta_s(0)\in\{0,2\}$ and $\delta_s(1)\in\{2,4\}$.
\end{itemize}
Thus $\delta(G_s)\leq 4$ and, applying Theorem \ref{aubrod} again, we get
\[
s\geq \log_2(2^{\frac{n}{4}} + 5.6),~ ~\mbox{{\em i.e,}
$n+1-\log_2(2^{\frac{n}{4}} + 5.6)\geq t$.}
\]
\end{proof}
We now concentrate  on APN functions belonging to the family (\ref{famil}).
Some are well-known as the inverse
 permutation for $n$ odd ($t=n-1$) and the quadratic function $x \mapsto x^3$
 ($t=2$). There is also the function $G_t$ for $t=(n+1)/2$ with $n$ odd,
because this function is the inverse of  the quadratic function
$x \mapsto x^{2^{(n+1)/{2}}+1}$. Recall that $x^{2^i+1}$ is an APN 
function over $\F$ if and only if  $\gcd(n,i)=1$ and we have obviously
$\gcd(n,(n+1)/2)=1$ (for odd $n$).
We conjecture that these three functions are the only APN functions
 within  family (\ref{famil}). 
\begin{conjecture}
Let $G_t(x)=x^{2^t-1}$, $2\leq t\leq n-1$. If 
$G_t$ is  APN then either $t=2$ or  $n$ is odd and 
$t\in \{\frac{n+1}{2},n-1\}$.
\end{conjecture}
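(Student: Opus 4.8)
The forward direction is already in hand: $G_2=x^3$ is APN for every $n$; for odd $n$ the exponent $t=(n+1)/2$ satisfies $(2^{t}-1)(2^{t}+1)=2^{n+1}-1\equiv 1\pmod{2^n-1}$, so $G_{(n+1)/2}$ is the compositional inverse of the Gold permutation $x\mapsto x^{2^{(n+1)/2}+1}$ (which is APN because $\gcd(n,(n+1)/2)=1$), hence itself APN; and $t=n-1$ is the inverse function, APN for odd $n$. The content of the conjecture is therefore the converse, and the plan is to show these three are the only APN exponents.

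I would first collect the necessary conditions provided by Theorem~\ref{th: 2^t-1}. Since $\delta(0)=2^{\gcd(t,n)}-2$ and $\delta(1)=2^{\gcd(t-1,n)}$, the APN property forces $\gcd(t-1,n)=1$ together with $\gcd(t,n)=1$ for odd $n$ and $\gcd(t,n)=2$ for even $n$. For even $n$ this already excludes $t=n-1$ (there $\gcd(n-1,n)=1\neq 2$), so the only boundary candidate is $t=2$; and the symmetry $t\mapsto s=n-t+1$ of Corollary~\ref{cormain} shows that an APN $G_t$ with $n$ even yields only a $4$-uniform $G_s$, so no new APN exponent can appear at the even-$n$ partner. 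For odd $n$ the same map is an involution fixing exactly $t=(n+1)/2$ and swapping $2\leftrightarrow n-1$, so it suffices to treat $2\le t\le (n+1)/2$ and prove that no APN $G_t$ exists for $3\le t\le (n-1)/2$. The gcd conditions alone do not isolate the three exponents, so the real work lies in the condition $\delta(b)\le 2$ for all $b\notin\FB$.

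To attack the interior I would pass to a point-counting formulation. Using \eqref{eq:u1}, $G_t$ is APN if and only if the affine curve
\[
(x^{2^{t-1}}+x)^2(y^2+y)=(y^{2^{t-1}}+y)^2(x^2+x)
\]
has no $\F$-rational points off the trivial lines $y=x$, $y=x+1$ and off $\FB$, equivalently if and only if $\sum_{b}\delta(b)^2$ attains its minimal value $2^{n+1}$ up to the fixed contributions of $b\in\FB$. This plane curve $C_t$ has degree $O(2^{t})$; I would strip off the trivial linear components, argue that the residual curve $C_t'$ is absolutely irreducible, and invoke the Weil bound $\#C_t'(\F)\ge 2^n-(\deg C_t'-1)(\deg C_t'-2)\,2^{n/2}-c$. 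A strictly positive lower bound produces a nontrivial point, i.e. $\delta(G_t)>2$. This is precisely the mechanism of Aubry--Rodier (Theorem~\ref{aubrod}) and its extension, which together with the symmetry already eliminate every $t$ outside the band $\log_2(2^{n/4}+5.6)\le t\le n+1-\log_2(2^{n/4}+5.6)$.

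The main obstacle is the central band, roughly $n/4\lesssim t\lesssim 3n/4$, where $\deg C_t'\approx 2^{t}$ exceeds $2^{n/2}$ and the Weil estimate becomes vacuous; this is exactly the region the present techniques cannot reach, and I expect it is why the statement is only a conjecture. To make progress there I would abandon the generic degree bound and exploit the $\FB$-linearity of $P_b(x)=x^{2^t}+bx^2+(b+1)x$ directly: the reduction of Theorem~\ref{thsys} replaces ``$\dim\ker P_b\le 2$ for all $b\notin\FB$'' by a near-injectivity statement for the map $y\mapsto\sum_{i=0}^{t-1}y^{2^i-1}$ on the trace-zero hyperplane, and I would try to encode the failure of near-injectivity as the solvability of a small algebraic system whose associated exponential sum can be bounded \emph{uniformly in $t$}, not only for $t$ small. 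Proving that this sum can vanish solely at the self-dual value $t=(n+1)/2$—via the $q$-associate / Dickson-matrix description of the kernel dimension—is the decisive and currently missing step. The midpoint itself is the most promising entry point, since there $s=n-t+1=t$ and the adjoint correspondence $P_{t,b}\leftrightarrow P^*_{t,b}$ of Lemma~\ref{le:sym} becomes an internal symmetry of the very same family.
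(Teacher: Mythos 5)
The statement you were asked to prove is presented in the paper only as a \emph{conjecture}, and the paper offers no proof of it; your attempt does not close that gap either, and you are candid about this. What you actually establish is (i) the converse implication --- that $t=2$, and for odd $n$ also $t=(n+1)/2$ and $t=n-1$, do give APN functions --- which is known and is not the content of the conjectured implication, and (ii) the necessary conditions $\gcd(t-1,n)=1$ together with $\gcd(t,n)=1$ for odd $n$ and $\gcd(t,n)=2$ for even $n$, plus the reduction to $2\le t\le (n+1)/2$ via the symmetry $t\mapsto n-t+1$ of Corollary~\ref{cormain}. All of this matches what the authors themselves record in their concluding section. Beyond that, your appeal to the Aubry--Rodier bound (Theorem~\ref{aubrod}) and its symmetric extension only excludes exponents with $t$ or $n-t+1$ smaller than roughly $n/4$, exactly as the paper's final corollary already does.

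The genuine gap is the central band $\log_2(2^{n/4}+5.6)\le t\le n+1-\log_2(2^{n/4}+5.6)$. There the curve you introduce has degree about $2^t\gg 2^{n/2}$, so the Weil-type lower bound on its rational points is vacuous, and the gcd conditions together with the $t\leftrightarrow n-t+1$ symmetry are satisfied by many exponents other than $(n+1)/2$. Your proposed remedy --- bounding, uniformly in $t$, an exponential sum attached to the near-injectivity of $y\mapsto\sum_{i=1}^{t-1}y^{2^i-1}$ on the trace-zero hyperplane (note the sum should start at $i=1$, not $i=0$) --- is a plausible direction but is not carried out, and you give no argument that such a uniform bound exists or that it would single out $t=(n+1)/2$. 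As it stands the proposal is a correct survey of the known partial results together with a research plan, not a proof; the conjecture remains open both in the paper and in your write-up.
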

If the previous conjecture holds then there are some consequences
for the functions of (\ref{famil}) which are differentially \(4\)-uniform.
From Corollary~\ref{coro3}, we can say that such a function $G_t$ 
is a function over $\F$ with $n$ even. Moreover $G_s$, $s=n-t+1$, is APN.
If the conjecture holds then $s=2$ ($t=n-1$) is the only one possibility.
So, in this case we could conclude that {\em the inverse function
is the only one differentially \(4\)-uniform function 
of family (\ref{famil})}.

\end{document}